\documentclass[a4paper,11pt]{article}
\pdfoutput=1

\usepackage[margin=1in]{geometry}
\usepackage[utf8]{inputenc}
\usepackage[T1]{fontenc}
\usepackage{lmodern}
\usepackage[english]{babel}
\usepackage{microtype}
\usepackage{amsmath,amssymb,amsfonts,amsthm,mathtools,bm}
\usepackage{booktabs,tabularx,array,multirow}
\usepackage{graphicx,subcaption,float}
\usepackage{placeins}
\usepackage[inline]{enumitem}
\usepackage{xcolor}
\usepackage{url}
\usepackage[pdfencoding=auto,psdextra,hidelinks,colorlinks=true,
 linkcolor=blue,citecolor=blue,urlcolor=blue,pagebackref=true]{hyperref}
\usepackage{bookmark}
\usepackage[nameinlink,noabbrev]{cleveref}
\newfloat{algorithm}{tbp}{loa}
\floatname{algorithm}{Algorithm}
\crefname{algorithm}{algorithm}{algorithms}
\Crefname{algorithm}{Algorithm}{Algorithms}

\newcommand{\citep}[1]{\cite{#1}}
\makeatletter
\newcommand{\citet}[1]{%
  \ifcsname citetname@#1\endcsname
    \csname citetname@#1\endcsname~\cite{#1}%
  \else
    \cite{#1}%
  \fi}
\expandafter\def\csname citetname@KG2011\endcsname{Kamenica and Gentzkow}
\expandafter\def\csname citetname@Weissman2003\endcsname{Weissman et al.}
\makeatother

\newcommand{\Om}{\Omega}
\newcommand{\A}{\mathcal A}
\newcommand{\M}{\mathcal M}
\newcommand{\Tset}{\Theta}

\newcommand{\E}{\mathbb E}
\newcommand{\1}{\mathbf 1}
\newcommand{\calF}{\mathcal F}
\newcommand{\calX}{\mathcal X}
\newcommand{\calC}{\mathcal C}

\newcommand{\calH}{\mathcal H}

\DeclareMathOperator{\supp}{supp}
\DeclareMathOperator*{\argmax}{arg\,max}

\newcolumntype{Y}{>{\raggedright\arraybackslash}X}

\newtheorem{theorem}{Theorem}[section]
\newtheorem{proposition}[theorem]{Proposition}
\newtheorem{lemma}[theorem]{Lemma}
\newtheorem{corollary}[theorem]{Corollary}
\newtheorem{assumption}[theorem]{Assumption}
\newtheorem{definition}[theorem]{Definition}

\begin{document}

\title{Certified Learning and Equilibrium Implementation\\
under Opaque Partial Commitment}
\author{Shuyang Zhang and Xiangtian Li\\
Shanghai Artificial Intelligence Laboratory\\
\texttt{zhangshuyang@pjlab.org.cn}}
\maketitle

\begin{abstract}
We extend Bayesian persuasion to an opaque partial-commitment environment in
which a sender is bound by an installed information policy only with
probability $\rho$, the realization of binding is hidden, and the receiver does
not observe the persistent structural environment.
The receiver first sees a payoff-neutral, nonmanipulable calibration sample and
then faces a fresh, non-certified deployment interaction.  At the population
level, the calibration law identifies the receiver-facing reduced form and its
equivalence class in the finite type dictionary.  The latent kernels remain
unidentified exactly when observationally equivalent candidate types contain
different latent pairs; without decomposition restrictions, the same reduced
form can also admit multiple latent representations.  We characterize type-wise
$\rho$-implementability, construct the receiver's Bayesian belief over the
complete hidden node, and prove a static direct-following implementation theorem.
After every calibration history that passes a posterior-predictive obedience
test, the deployment assessment is an exact perfect Bayesian equilibrium:
Bayes consistency, receiver sequential rationality, sender sequential
rationality, and off-path completion are all verified.  Under finite-type
separation, common recommendation support, and a positive obedience margin, the
test activates such an equilibrium with high probability. Our results keep statistical
failure probability distinct from equilibrium approximation.  Finally, we
embed the robust value frontier, support-wise linear-programming
algorithm, and binary-action fractional-knapsack specialization into this
implementation framework.
\end{abstract}

\section{Introduction}
\label{sec:introduction}
Bayesian persuasion begins with a transparent commitment benchmark.  An
informed sender selects and commits to an information structure, the receiver
knows that structure, and Bayes' rule turns each signal into a posterior on
which the receiver best responds~\cite{KG2011,Kamenica2019}.  This transparency
supports both the concavification approach and the linear-programming
formulations used in algorithmic information design
~\cite{BergemannMorris2016,BergemannMorris2019,DughmiXu2016,Dughmi2017}.
It also makes obedience immediate to state: a direct signal can be labeled by
the action it recommends, and the corresponding joint distribution is feasible
only if following every positive-probability recommendation is optimal.

Many recommendation systems have less transparent commitment.  A platform may
use a binding policy in some instances and retain discretion in others, while a
user observes neither the source of the current recommendation nor the latent
policy.  Ex-post states may nevertheless be auditable.  This creates two
separate questions.  First, what receiver-facing object can be learned without
recovering the latent mechanism?  Second, when does such learning support an
equilibrium in the subsequent strategic interaction?  A statistical argument
that the receiver will often follow is not, by itself, an equilibrium
foundation: it does not establish sender incentives, Bayes-consistent beliefs at
the deployment information set, or behavior after zero-probability messages.

We answer these questions in a deliberately institution-assisted model.  A
persistent structural type $\theta$ indexes the sender payoff and an installed
pair of recommendation kernels.  In each deployment, an independent hidden
shock $c$ binds the installed kernel with probability $\rho$; otherwise the
sender chooses a message.  Before deployment, a trusted certifier produces an
i.i.d.\ sample of recommendation--state pairs from the same structural
environment.  The records are authentic, the state is revealed after each
calibration recommendation, and the sender cannot affect the sample.
Calibration is payoff neutral and precedes strategic play.  A fresh state and a
fresh binding shock are drawn only after calibration.

This timing is substantive.  We do \emph{not} describe calibration as an
equilibrium path or as endogenous reputation formation.  It is public pre-play
information that changes the common prior of the deployment continuation.  The
equilibrium object is the one-shot continuation following a realized
calibration record.  If calibration were manipulable or payoff relevant, its
generation and the associated continuation incentives would have to be
incorporated into a dynamic signaling or reputation game; those incentives are
outside the present model.

For each type, the hidden binding and discretionary kernels
$(\pi^\theta,\overline\sigma^\theta)$ induce the receiver-facing reduced form
\[
 x^\theta_{\omega m}
 =\mu(\omega)\left[
 \rho\pi^\theta(m\mid\omega)
 +(1-\rho)\overline\sigma^\theta(m\mid\omega)\right].
\]
The reduced form is the immediate statistical target.  Within the finite type
dictionary, its population law identifies the equivalence class
$[\theta]=\{\vartheta:x^\vartheta=x^\theta\}$; if
$\theta\mapsto x^\theta$ is injective, this class is a singleton and the stored
latent pair is recovered by dictionary lookup.  Calibration updates a posterior
$\lambda_N(\theta\mid Z_N)$ over
structural types.  At deployment, this posterior and the installed strategies
generate a belief
$q_N(\theta,\omega,c\mid m,Z_N)$ over the \emph{complete} hidden node.  Its
state marginal is the posterior-predictive belief
\[
 \overline\beta_N(\omega\mid m,Z_N)
 =
 \frac{\sum_\theta\lambda_N(\theta\mid Z_N)x^\theta_{\omega m}}
 {\sum_\theta\lambda_N(\theta\mid Z_N)p_m(x^\theta)}.
\]
Thus the learning bridge is predictive:
\[
 Z_N\ \Longrightarrow\ \lambda_N\
 \Longrightarrow\ q_N\
 \Longrightarrow\ \overline\beta_N.
\]
The implementation argument does not require the receiver to identify the
realized sender payoff or separate $\pi^\theta$ from
$\overline\sigma^\theta$ when observational equivalence prevents doing so.

Our contributions are fourfold.
\begin{enumerate}[leftmargin=1.8em,label=(\roman*)]
\item \emph{Environment, reduced forms, and identification.}
We separate the persistent type $\theta$, fresh payoff state $\omega$, and
transient binding shock $c$, characterize type-wise $\rho$-implementability,
and state the exact finite-dictionary identification boundary: the population
calibration law identifies $x^\theta$ and $[\theta]$, while a latent coordinate
is identified only if it is constant on that class.  We separately characterize
multiplicity in the unrestricted decomposition space; see
\Cref{sec:environment,sec:reduced-form}.
\item \emph{Equilibrium implementation.}
We construct the receiver's Bayesian belief over the complete hidden node
and prove a posterior-mixture implementation theorem.  A public predictive
obedience certificate then activates an exact PBE after every accepted record,
including sender and receiver sequential rationality, Bayes consistency, and
off-path completion; see
\Cref{sec:certified-learning,sec:equilibrium-implementation}.
\item \emph{Finite-sample activation.}
For a finite separated type family, common recommendation support and a
positive obedience margin turn posterior concentration into a finite-sample
guarantee of high-probability entry into the exact-PBE region.  The statistical
failure probability remains distinct from any equilibrium-approximation
parameter; see \Cref{sec:equilibrium-implementation,sec:robust-statistical}.
\item \emph{Robust design and computation.}
We embed the robust value frontier in the implementation framework, relate the
obedience margin to certification speed, give a support-wise LP test for
positive-margin attainability, and recover a bounded fractional-knapsack
specialization for binary actions; see
\Cref{sec:robust-statistical,sec:computation,sec:computational-evidence}.
\end{enumerate}

The scope is conditional implementation, not endogenous mechanism selection.
The type-contingent protocol is already installed; our theorem shows when its
prescribed deployment behavior forms a continuation PBE after calibration.  A
designer's optimization over reduced forms is therefore a design layer, not
proof that a privately informed sender
voluntarily selects the protocol ex ante.  This limitation is important in
interpreting both the contribution and its relation to work that explicitly
models experiment-choice incentives.

The paper proceeds as follows.  \Cref{sec:related-literature} locates the model.
\Cref{sec:environment,sec:reduced-form} define the game and reduced forms.
\Cref{sec:certified-learning} constructs Bayesian and predictive beliefs.
\Cref{sec:equilibrium-implementation} proves static and post-calibration
implementation.  \Cref{sec:robust-statistical,sec:computation} develop the
robust statistical and algorithmic layers,
\Cref{sec:computational-evidence} reports reproducible illustrations, and
\Cref{sec:conclusion} concludes.  Appendix~\ref{app:information-roadmap} gives
the canonical benchmark and information-coordinate roadmap; institutional
scope and substantive assumptions are collected in
Appendix~\ref{sec:discussion}, followed by supporting tables and proofs.

\section{Relation to the Literature}
\label{sec:related-literature}
\paragraph{Bayesian persuasion and information design.}
Our reduced-form variables are joint distributions over states and direct
recommendations, as in classical Bayesian persuasion and Bayes-correlated
equilibrium~\cite{KG2011,BergemannMorris2016,BergemannMorris2019}.  The
difference is not a new revelation principle.  The receiver does not initially
know which member of a common structural family generated the recommendation,
and the sender can choose a message in the nonbinding branch.  Consequently,
obedience must be evaluated under a posterior mixture over structural types,
and sender sequential rationality must be verified separately.

\paragraph{Imperfect commitment and weak institutions.}
The hidden binding shock is closest to probabilistic or institutionally weak
commitment.  Min studies a sender whose commitment to an information structure
binds with an exogenous probability~\cite{Min2021}.  Lipnowski, Ravid, and
Shishkin study persuasion through a study that the sender can influence with
some probability~\cite{LipnowskiRavidShishkin2022}.  Lin and Liu instead
restrict undetectable manipulation by requiring the sender to preserve the
message distribution~\cite{LinLiu2024}.  These papers motivate the
economic relevance of imperfect commitment.  Our object differs: we condition
on an installed type-contingent protocol, add certified pre-play learning about
its receiver-facing distribution, and prove a direct-following deployment PBE.
We do not characterize all equilibria of an unrestricted imperfect-commitment
communication game.

\paragraph{Experiment choice and equilibrium selection.}
Perfect Bayesian Persuasion explicitly studies sender incentives in choosing an
experiment and avoids presupposing sender-favorable receiver tie breaking
~\cite{LipnowskiRavidShishkin2025}.  That question is complementary and marks a
boundary of our result.  Our sender incentive theorem concerns message choice
\emph{inside} the installed nonbinding protocol.  It neither derives voluntary
installation nor selects an experiment-choice equilibrium.  A positive
obedience margin eliminates receiver ties in the activated continuation, but
does not by itself solve the ex-ante scheme-selection problem.

\paragraph{Learning and inferable schemes.}
Online persuasion typically studies a sender or designer who repeatedly
chooses signaling schemes and controls regret while learning receiver types,
an unknown prior, or receiver utilities
~\cite{CastiglioniCelliMarchesiGatti2020,Bacchiocchi2024,LinLi2025}.  Our
calibration is instead an exogenous batch sample; the retained prefix-regret
quantity is a virtual stabilization statistic, not strategic Stage-I utility.
Work on
inferable signaling schemes instead asks how a receiver can infer a
full-commitment scheme from repeated observations
~\cite{Probine2025}.  That recent preprint and our model share a predictive
focus, but here the receiver learns a persistent reduced-form
environment from certified pairs; the sender does not control the learning
sample.  Within the finite dictionary, the latent pair is unidentified
precisely when observationally equivalent types carry different latent pairs;
without decomposition restrictions, the aggregate law's nonparametric fiber
may contain a continuum.  Our contribution is the bridge from that predictive
object to a Bayes-consistent continuation equilibrium.

\paragraph{Cheap talk and reputation.}
When $\rho=0$, the model becomes a restricted direct-following implementation:
the prescribed message must be sender best in every state.  It is therefore a
subset of, not a characterization of, cheap-talk equilibria
~\cite{CrawfordSobel1982}; computational characterizations of cheap talk study
a broader endogenous communication problem~\cite{AlgorithmicCheapTalk2023}.
Dynamic reputation has a different
source of discipline.  A persistent private type plays strategically across
histories, and current behavior affects future beliefs and continuation payoffs
~\cite{KrepsMilgromRobertsWilson1982,FudenbergLevine1989}.  Best and Quigley,
for example, study a long-lived sender whose current communication changes a
public accuracy record and hence future credibility~\cite{BestQuigley2024}.
Our binding shock is redrawn at deployment, calibration records are
institutionally generated, and the sender has neither a strategic calibration
action nor a continuation payoff.  The resulting model avoids the recursive
belief and continuation-incentive constraints of a reputation game.

Appendix~\ref{app:information-roadmap} translates the nearest learning and
commitment models into common information coordinates.
\Cref{tab:roadmap-information,tab:roadmap-target} separately compare primitive
knowledge and commitment, dynamic observations, learning targets, and
equilibrium or optimization objects.

\section{The Certified-Calibration Environment}
\label{sec:environment}
\subsection{Primitives and the structural type}

Let $\Om$ be a finite state space, $\A$ a finite receiver action space, and
$\M=\A$ a finite direct-message space.  We write $m\in\M$ for a recommendation
and $a\in\A$ for the receiver's final action.  The interpretation of the direct
label $m$ is ``choose action $m$.''  The payoff-state prior
$\mu\in\Delta(\Om)$ has full support.  Receiver utility
$u_R:\A\times\Om\to[0,1]$ and the binding probability $\rho\in[0,1]$ are common
across structural types.

A persistent structural type $\theta$ is drawn from the finite set $\Tset$
according to a full-support common prior $\lambda_0$.  The following
type dictionary is common knowledge:
\begin{equation}
\theta\longmapsto
\left(
u_S^\theta,\{B_{\omega}^{S,\theta}\}_{\omega\in\Om},
x^\theta,\pi^\theta,\overline\sigma^\theta
\right).
\label{eq:type-dictionary}
\end{equation}
Here $u_S^\theta:\A\times\Om\to[0,1]$ and
\[
B_{\omega}^{S,\theta}
=\argmax_{a\in\A}u_S^\theta(a,\omega)
\]
is the sender-best action set.  The kernel
$\pi^\theta:\Om\to\Delta(\M)$ is the installed binding policy, whereas
$\overline\sigma^\theta:\Om\to\Delta(\M)$ is the installed
\emph{discretionary prescription}.  We use a bar to distinguish this
prescription from the sender's endogenous deployment strategy.  The prescribed
kernel satisfies
\begin{equation}
\supp\overline\sigma^\theta(\cdot\mid\omega)
\subseteq B_{\omega}^{S,\theta}
\quad\text{for every }(\theta,\omega).
\label{eq:sender-best-prescription}
\end{equation}
The receiver-facing joint distribution is
\begin{equation}
x^\theta_{\omega m}
=\mu(\omega)\phi^\theta(m\mid\omega),\qquad
\phi^\theta(m\mid\omega)
=\rho\pi^\theta(m\mid\omega)
{}+(1-\rho)\overline\sigma^\theta(m\mid\omega).
\label{eq:type-reduced-form}
\end{equation}
Thus $x^\theta$ is a distribution on $\Om\times\M$ whose state marginal is
$\mu$.  We write
$p_m(x)=\sum_{\omega}x_{\omega m}$ and, when $p_m(x)>0$,
$\beta_x(\omega\mid m)=x_{\omega m}/p_m(x)$.

\begin{assumption}[Installed and stable type-contingent protocol]
\label{ass:installed}
The map in \eqref{eq:type-dictionary}, $\lambda_0$, $\mu$, $u_R$, and $\rho$
are common knowledge.  The realized $\theta$ remains fixed across calibration
and deployment and is observed by the sender, the certifier, and the binding
device, but not by the receiver.  The pair
$(\pi^\theta,\overline\sigma^\theta)$ is installed before calibration and is
not selected by a strategic move in the game studied here.
\end{assumption}

Assumption~\ref{ass:installed} completes the original ``unknown mechanism''
description as a Bayesian model.  The receiver does not observe the realized
$u_S^\theta$, $B^{S,\theta}$, $x^\theta$, $\pi^\theta$, or
$\overline\sigma^\theta$; she knows the possible values and their common prior.
Candidate types are required to be sender feasible through
\eqref{eq:sender-best-prescription}, but they need \emph{not} be receiver
obedient.  If every candidate reduced form were obedient, posterior mixtures
would be obedient by linearity and calibration would be unnecessary.

\subsection{Stage I: certified calibration}

After $\theta$ is drawn, a trusted certifier generates the public transcript
\begin{equation}
Z_N=((\omega_t,m_t))_{t=1}^N
\sim (x^\theta)^{\otimes N}.
\label{eq:calibration-sample}
\end{equation}
For each observation, the recommendation is recorded and the state is
authentically revealed ex post.  The certifier observes or authenticates the
realized structural type sufficiently to run the installed type-contingent
protocol.  The sender cannot choose or alter calibration messages, states, or
records.

\begin{assumption}[Scope of certification]
\label{ass:certification}
Calibration is nonmanipulable, payoff neutral, and contains no receiver action.
The certifier guarantees data provenance, the stability of the data-generating
reduced form, and equality of the calibration and deployment structural type.
It does \emph{not} certify receiver obedience or membership of
$x^\theta$ in a robust feasible set.
\end{assumption}

The last clause prevents certification from assuming the conclusion.  Stage I
is an exogenous statistical experiment before the game with payoff-relevant
actions.  It contains neither a sender strategy nor a receiver best-response
condition, and we make no PBE claim about its generation.

\subsection{Stage II: non-certified deployment}

After a public transcript $z\in(\Om\times\M)^N$, a public gate either activates
or rejects the communication protocol.  The certificate defining the gate is
introduced in \Cref{sec:certified-learning}.  If the record is rejected, the
communication channel is disabled.  A fresh state $\omega\sim\mu$ is drawn, the
sender has no move, and the receiver chooses a no-information best response
\begin{equation}
a^0\in\argmax_{a\in\A}\sum_{\omega\in\Om}\mu(\omega)u_R(a,\omega).
\label{eq:fallback-action}
\end{equation}
For a feasible rejected record, the receiver retains the Bayesian posterior on
$\theta$, but the fresh state remains independent with law $\mu$; because
$u_R$ is type independent, \eqref{eq:fallback-action} is sequentially optimal.
After an infeasible record, any belief on $\theta$ can be assigned while the
fresh state still has law $\mu$.  The gate is mechanical, and the sender cannot
bypass the disabled channel.  This is a fallback decision problem, not target
implementation.

If the record is accepted, the deployment continuation $\Gamma(z)$ has the
following timing.
\begin{enumerate}[leftmargin=1.8em,label=\arabic*.]
\item Nature independently draws a fresh payoff state $\omega\sim\mu$ and a
transient binding shock $c\sim\operatorname{Bernoulli}(\rho)$.
\item If $c=1$, the type-aware binding device draws
$m\sim\pi^\theta(\cdot\mid\omega)$; the sender has no message choice.
\item If $c=0$, the sender observes $(z,\theta,\omega)$ and chooses a message
$m\in\M$.  A behavioral sender strategy is denoted
$s_S(m\mid z,\theta,\omega)$.
\item The receiver observes $(z,m)$ but not $(\theta,\omega,c)$, forms a belief
on the complete hidden node, and chooses $a\in\A$ according to
$s_R(a\mid z,m)$.
\item Payoffs are $u_S^\theta(a,\omega)$ and $u_R(a,\omega)$.
\end{enumerate}
At the endpoints, zero-probability source branches are omitted from the
extensive form: the nonbinding sender branch is absent when $\rho=1$, and the
binding branch is absent when $\rho=0$.
Conditional on $\theta$, the calibration sample, fresh deployment state, and
fresh binding shock are independent except through their common dependence on
the fixed structural type.

For a feasible $z$, $\Gamma(z)$ is the conditional Harsanyi game whose active
type set is
\[
\Tset(z)
=\left\{\theta\in\Tset:
\lambda_0(\theta)\prod_{t=1}^N x^\theta_{\omega_t m_t}>0
\right\}.
\]
Equivalently, $\Tset(z)$ is the support of the Bayesian posterior defined in
\eqref{eq:type-posterior}.  Types outside this set are not nodes of the
conditional game.  For each receiver message information set, let
$\mathsf H_z(m)$ denote the set of hidden histories in that information set,
restricted to $\Tset(z)$ and to the source branches present at the relevant
value of $\rho$.

\begin{definition}[Equilibrium object]
\label{def:equilibrium-object}
For a public transcript $z$ of positive prior-predictive probability, a
deployment assessment consists of sender and receiver strategies in
$\Gamma(z)$ and receiver beliefs over $(\theta,\omega,c)$ after each message.
It is a PBE if strategies are sequentially rational at every information set
and beliefs satisfy Bayes' rule at every positive-probability receiver
information set.  After a zero-probability message $m$, beliefs are
probability distributions on $\mathsf H_z(m)$ and the receiver takes a best
response to the specified belief.  We use this weak-PBE convention, with no
sequential-equilibrium restriction on off-path beliefs
~\cite{FudenbergTirole1991}.
\end{definition}

The paper's phrase \emph{post-calibration PBE} always refers to
Definition~\ref{def:equilibrium-object}.  It does not mean that the
institutional production of $Z_N$ is strategic equilibrium play.

\subsection{Notation and uncertainty audit}

The analysis keeps persistent uncertainty, fresh deployment uncertainty,
observable actions, and the three receiver-facing statistical objects
separate.  \Cref{tab:notation} collects their meanings and observability in one
notation audit; in particular, $\widehat\beta_N$ is a statistic,
$\overline\beta_N$ is the receiver's predictive decision belief, and $q_N$ is
the equilibrium belief on the complete hidden node.

\section{Reduced Forms, Identification, and Type-wise Implementability}
\label{sec:reduced-form}
\subsection{Type-wise feasible reduced forms}

Fix a structural type $\theta$.  A joint array $x\in
\mathbb R_+^{\Om\times\M}$ has the correct state marginal if
\begin{equation}
\sum_{m\in\M}x_{\omega m}=\mu(\omega)
\quad\text{for every }\omega.
\label{eq:state-marginal}
\end{equation}
Because $\mu$ has full support, it induces a kernel
$\phi_x(m\mid\omega)=x_{\omega m}/\mu(\omega)$.  Define
\begin{equation}
\calX_\rho^\theta
=\left\{
x\ge0:
\eqref{eq:state-marginal}\text{ holds and }
\sum_{m\notin B_{\omega}^{S,\theta}}x_{\omega m}
\le \rho\mu(\omega)\ \ \forall\omega
\right\}.
\label{eq:type-implementable-polytope}
\end{equation}
The last inequality says that only the binding component can generate
recommendations outside the sender-best set.

\begin{lemma}[Type-wise reduced-form implementability]
\label{lem:typewise-implementability}
Fix $\theta$ and $\rho\in[0,1]$.  A joint array $x$ can be represented as
\[
x_{\omega m}
=\mu(\omega)\left[
\rho\pi(m\mid\omega)+(1-\rho)\sigma(m\mid\omega)
\right],
\qquad
\supp\sigma(\cdot\mid\omega)\subseteq B_{\omega}^{S,\theta},
\]
for stochastic kernels $(\pi,\sigma)$ if and only if
$x\in\calX_\rho^\theta$.
\end{lemma}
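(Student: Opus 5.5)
The plan is to prove the two directions separately, working state by state. Since $\mu$ has full support, $x$ corresponds to the kernel $\phi_x(m\mid\omega)=x_{\omega m}/\mu(\omega)$. The lemma then reduces to a single-state question. For each $\omega$, when can a distribution $\phi(\cdot\mid\omega)\in\Delta(\M)$ be written as $\rho\pi+(1-\rho)\sigma$ with $\sigma$ supported on $B:=B^{S,\theta}_\omega$?

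\emph{Necessity.} Suppose such $(\pi,\sigma)$ exist. Summing over $m$ and using that $\pi(\cdot\mid\omega),\sigma(\cdot\mid\omega)$ are probability vectors gives \eqref{eq:state-marginal}. Nonnegativity of $x$ is immediate. Summing over $m\notin B$ kills the $\sigma$ term, so
\[
\sum_{m\notin B}x_{\omega m}=\rho\mu(\omega)\sum_{m\notin B}\pi(m\mid\omega)\le\rho\mu(\omega).
\]
Hence $x\in\calX_\rho^\theta$.

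\emph{Sufficiency.} Let $x\in\calX_\rho^\theta$ and fix $\omega$. Write $\phi=\phi_x(\cdot\mid\omega)$ and $r=\sum_{m\notin B}\phi(m)\le\rho$.

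For $0<\rho<1$ I will construct the decomposition explicitly. The mass outside $B$ goes entirely to $\pi$: set $\pi(m)=\phi(m)/\rho$ for $m\notin B$. This has total mass $r/\rho\le1$. The residual $1-r/\rho$ of $\pi$ and all of $\sigma$ must then reproduce $\phi$ on $B$. The cleanest choice is to make both proportional to $\phi$ restricted to $B$. If $s:=\sum_{m\in B}\phi(m)=1-r>0$, put
\[
\pi(m)=\Bigl(1-\tfrac{r}{\rho}\Bigr)\frac{\phi(m)}{s},\qquad \sigma(m)=\frac{\phi(m)}{s}\qquad (m\in B).
\]
One then checks that $\rho\pi(m)+(1-\rho)\sigma(m)=\phi(m)\bigl[\rho-r+1-\rho\bigr]/s=\phi(m)$.

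If $s=0$, then $r=1\le\rho$ forces $\rho=1$, which belongs to the boundary case below.

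\emph{Endpoints.} This is the only delicate step. If $\rho=1$, take $\pi=\phi$ and let $\sigma$ be any point mass on an element of $B$; here $B$ is nonempty because $\A$ is finite. If $\rho=0$, the constraint forces $r=0$, so take $\sigma=\phi$ and $\pi$ arbitrary.

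The main obstacle is only bookkeeping: making sure $\sigma(\cdot\mid\omega)$ is a genuine probability vector in every degenerate case. This is exactly why the cases $s=0$ and $\rho\in\{0,1\}$ are separated out. Measurability across states is trivial because everything is finite.
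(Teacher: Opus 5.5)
Your proof is correct and follows the paper's argument. Necessity comes from summing outside $B^{S,\theta}_\omega$. Sufficiency is statewise: you split off a mass-$(1-\rho)$ subdistribution of $\phi$ on the sender-best set for $\sigma$, give the remainder to $\pi$, and treat $\rho\in\{0,1\}$ separately. Your proportional choice $(1-\rho)\sigma(m)=(1-\rho)\phi(m)/s$ on $B$ is one explicit instance of the paper's vector $y$, which the paper fills greedily instead; it satisfies $y_m\le\phi_m$ precisely because $r\le\rho$.
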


The endpoint interpretation is exact.  At $\rho=0$, the reduced form itself
must be supported on sender-best messages.  At $\rho=1$, the binding kernel can
generate any reduced form with state marginal $\mu$, and the discretionary
kernel is payoff irrelevant.  For $0<\rho<1$, the proof constructs a
subdistribution of total mass $1-\rho$ inside $B_{\omega}^{S,\theta}$ and
assigns the remaining mass to the binding kernel.  The construction is given in
\Cref{app:two-stage-proofs}.

Lemma~\ref{lem:typewise-implementability} is a protocol decomposition, not yet
an equilibrium theorem.  Sender optimality additionally requires that a
prescribed sender-best message induce its namesake final action.  That is proved
only after receiver sequential rationality is established in
\Cref{sec:equilibrium-implementation}.

\subsection{Behavioral sufficiency and the identification boundary}

Conditional on a fixed type, hiding the binding shock gives
\[
\Pr(\omega,m\mid\theta)
=x^\theta_{\omega m}.
\]
Consequently, every on-path receiver posterior and every expected payoff under
a fixed receiver response depends on the latent pair only through $x^\theta$.
With type uncertainty, the same statement applies to the posterior-weighted
reduced form.

\begin{proposition}[Reduced-form behavioral sufficiency]
\label{prop:behavioral-sufficiency}
Let $\lambda\in\Delta(\Tset)$ be a belief over structural types and suppose the
installed type-contingent kernels are played.  Define
\[
\overline x^\lambda_{\omega m}
=\sum_{\theta\in\Tset}\lambda(\theta)x^\theta_{\omega m}.
\]
For every $m$ with $p_m(\overline x^\lambda)>0$, the receiver's state
posterior is
$\beta_{\overline x^\lambda}(\omega\mid m)
=\overline x^\lambda_{\omega m}/p_m(\overline x^\lambda)$.
Thus receiver best responses depend on the type posterior and latent kernels
only through $\overline x^\lambda$.
\end{proposition}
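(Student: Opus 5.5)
The plan is a direct Bayes computation on the complete hidden node $(\theta,\omega,c)$, followed by marginalization. Under the installed protocol, the receiver holds prior $\lambda$ on $\Tset$. Nature draws $\omega\sim\mu$ and $c\sim\operatorname{Bernoulli}(\rho)$ independently of each other and of $\theta$. The message is then drawn from $\pi^\theta(\cdot\mid\omega)$ if $c=1$ and from $\overline\sigma^\theta(\cdot\mid\omega)$ if $c=0$. The joint law of the hidden node and the message is therefore
\[
\Pr(\theta,\omega,c,m)=\lambda(\theta)\mu(\omega)\left[c\,\rho\,\pi^\theta(m\mid\omega)+(1-c)(1-\rho)\overline\sigma^\theta(m\mid\omega)\right].
\]
At the endpoints $\rho\in\{0,1\}$, the absent source branch simply carries zero mass, so the formula covers them without a separate case.

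First, sum over $c\in\{0,1\}$. This reproduces $\lambda(\theta)\mu(\omega)\phi^\theta(m\mid\omega)=\lambda(\theta)x^\theta_{\omega m}$ by \eqref{eq:type-reduced-form}; this is the hidden-shock identity $\Pr(\omega,m\mid\theta)=x^\theta_{\omega m}$ noted just before the proposition. Summing next over $\theta$ gives $\Pr(\omega,m)=\overline x^\lambda_{\omega m}$. Summing over $\omega$ gives $\Pr(m)=p_m(\overline x^\lambda)$.

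When $p_m(\overline x^\lambda)>0$, the information set after $m$ has positive probability, so Bayes' rule applies. The state marginal of the receiver's belief over the complete hidden node is then $\overline x^\lambda_{\omega m}/p_m(\overline x^\lambda)=\beta_{\overline x^\lambda}(\omega\mid m)$. One bookkeeping point should be recorded: $\Pr(m)$ is a mixture over $\theta$. The posterior is therefore a ratio of $\lambda$-mixtures, not a $\lambda$-mixture of the type-wise posteriors $\beta_{x^\theta}$. Those type-wise posteriors may even be undefined when $p_m(x^\theta)=0$.

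For the behavioral conclusion, note that $u_R$ depends only on $(a,\omega)$ and not on $\theta$ or $c$. Conditional on $m$, the receiver's expected payoff from action $a$ is therefore
\[
\sum_\omega \beta_{\overline x^\lambda}(\omega\mid m)\,u_R(a,\omega)=\frac{1}{p_m(\overline x^\lambda)}\sum_\omega \overline x^\lambda_{\omega m}u_R(a,\omega).
\]
The best-response set $\argmax_a$ of this expression is thus a function of $\overline x^\lambda$ alone. Any two pairs $(\lambda,\{(\pi^\theta,\overline\sigma^\theta)\})$ that induce the same $\overline x^\lambda$ yield identical on-path best responses.

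The only step that needs real care is the independence structure behind the joint law, namely that $c$ is independent of $(\theta,\omega)$ and $\omega$ is independent of $\theta$. This comes from the timing of $\Gamma(z)$ together with Assumption~\ref{ass:installed}. The other point is the phrase ``installed kernels are played'', which means the nonbinding sender uses $\overline\sigma^\theta$ as a strategy. Once these are fixed, the rest is linear summation.
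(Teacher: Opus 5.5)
Your proposal is correct and follows essentially the same route as the paper. Both arguments write the installed source probabilities at each hidden node, integrate out $c$ to obtain $x^\theta$ and then $\theta$ under $\lambda$ to obtain $\overline x^\lambda$, apply Bayes' rule at messages with $p_m(\overline x^\lambda)>0$, and conclude from the fact that $u_R$ depends only on $(a,\omega)$.
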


Behavioral sufficiency does not by itself determine which structural
coordinates are identified.  Define the reduced-form equivalence class
\begin{equation}
[\theta]=\{\vartheta\in\Tset:x^\vartheta=x^\theta\}.
\label{eq:equivalence-class}
\end{equation}

\begin{proposition}[Population identification in the finite dictionary]
\label{prop:dictionary-identification}
Under certified sampling, the population law identifies $x^\theta$ and the
identified set $[\theta]$.  The structural type is point identified at
$\theta$ if and only if $[\theta]=\{\theta\}$.  More generally, a type
functional $g(\theta)$ is identified if and only if it is constant on
$[\theta]$.  In particular, if $\theta\mapsto x^\theta$ is injective, the
dictionary identifies the stored pair
$(\pi^\theta,\overline\sigma^\theta)$ by lookup.
\end{proposition}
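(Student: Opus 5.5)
The plan is to apply the standard definition of identification in a parametric model to the statistical model $\{(x^\vartheta)^{\otimes N}:\vartheta\in\Tset\}$ induced by certified sampling. Under Assumption~\ref{ass:certification} and \eqref{eq:calibration-sample}, the calibration draws are i.i.d.\ with law $x^\theta$ on $\Om\times\M$.

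First, I will show that the population law identifies $x^\theta$. The single-observation marginal of the population law of $Z_N$ is exactly $x^\theta$. Conversely, $x^\theta$ determines the whole product law $(x^\theta)^{\otimes N}$, and, for the infinite sequence, the law of the i.i.d.\ process. So the population law and $x^\theta$ are in bijection, and $x^\theta$ is a functional of the observable law.

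Second, I will argue that types $\vartheta,\vartheta'$ are observationally equivalent if and only if $x^\vartheta=x^{\vartheta'}$. This follows directly from the bijection in the first step. The set of types compatible with the population law generated by $\theta$ is therefore $\{\vartheta:x^\vartheta=x^\theta\}=[\theta]$ as defined in \eqref{eq:equivalence-class}. Since the dictionary \eqref{eq:type-dictionary} is common knowledge and finite, this set is computable from $x^\theta$ by lookup, so $[\theta]$ is identified.

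Third come the functional statements. A functional $g$ is identified at $\theta$ when $g(\vartheta)=g(\theta)$ for every $\vartheta$ observationally equivalent to $\theta$, that is, when there is a map $h$ with $g=h\circ(\text{law})$ on the relevant fiber.
\begin{itemize}
\item \emph{Sufficiency.} If $g$ is constant on $[\theta]$, define $h(x^\theta)$ as that common value. This $h$ is well defined because $[\theta]$ depends only on $x^\theta$.
\item \emph{Necessity.} If $g$ takes two different values on $[\theta]$, then two types produce the same population law but different values of $g$. No function of the law can recover $g$.
\end{itemize}
Point identification of $\theta$ is the special case $g=\mathrm{id}$, which gives the criterion $[\theta]=\{\theta\}$. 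Under injectivity every class is a singleton, so every functional is identified. In particular $g(\theta)=(\pi^\theta,\overline\sigma^\theta)$ is identified, and it is recovered by dictionary lookup of the unique preimage of $x^\theta$.

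The only delicate point is fixing the meaning of "identified" consistently. It is pointwise at the true $\theta$ (constancy on $[\theta]$) rather than global (constancy on every class). I will state this convention explicitly before the argument, since the remaining steps are then immediate.
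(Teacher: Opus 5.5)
Your proposal is correct and follows essentially the same route as the paper's proof. Each certified observation has law $x^\theta$, so observational equivalence coincides with membership in $[\theta]$. Identification of a functional $g$ then reduces to constancy on that class, and injectivity makes every class a singleton, so the stored pair is recovered by lookup.
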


Members of a nonsingleton class may differ in sender payoff, sender-best sets,
and latent decompositions.  No sample generated from $x^\theta$ can separate
those members.  This finite-dictionary statement must be distinguished from
multiplicity in an unrestricted decomposition space.

\begin{proposition}[Nonparametric decomposition multiplicity]
\label{prop:latent-nonidentification}
Suppose $0<\rho<1$.  Fix a state and a reduced-form probability vector
$\phi\in\Delta(\M)$ with $K=\supp\phi$ and $|K|\ge2$.  In the unrestricted
statewise decomposition class with sender-best set $B=K$, the set of stochastic
pairs $(\pi,\overline\sigma)$ satisfying
$\phi=\rho\pi+(1-\rho)\overline\sigma$ and
$\supp\overline\sigma\subseteq B$ contains a continuum.  Holding all other
statewise kernels fixed gives a continuum of full kernel pairs.  This does not
imply that $\Tset$ contains a continuum of candidate types.  If the finite dictionary
does contain two types $\theta_0\ne\theta_1$ with the same $x$, then for any
estimator $\widehat\theta_N$ and any metric $d$ on these types,
\[
\max_{i\in\{0,1\}}
\E_{\theta_i}d(\widehat\theta_N,\theta_i)
\ge \frac12 d(\theta_0,\theta_1).
\]
\end{proposition}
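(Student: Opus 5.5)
The proof has two independent parts: an explicit one-parameter family of decompositions for the continuum claim, and a two-point (Le Cam) argument for the minimax bound.

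\emph{Continuum.} I exhibit a linear segment of decompositions. Since $|K|\ge2$, pick two distinct messages $m_1,m_2\in K$. Start from the base decomposition $\pi_0=\overline\sigma_0=\phi$. It is feasible because $\supp\phi=K=B$ and $\rho\phi+(1-\rho)\phi=\phi$. For a parameter $t$, set
\[
\overline\sigma_t=\phi+t(e_{m_1}-e_{m_2}),\qquad
\pi_t=\phi-\tfrac{1-\rho}{\rho}\,t(e_{m_1}-e_{m_2}).
\]
The identity $\rho\pi_t+(1-\rho)\overline\sigma_t=\phi$ holds for every $t$. Both vectors keep total mass one and are supported in $K=B$. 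Nonnegativity holds whenever
\[
|t|\le \min\Bigl\{\phi(m_1),\phi(m_2)\Bigr\}\cdot\min\Bigl\{1,\tfrac{\rho}{1-\rho}\Bigr\}.
\]
This bound is strictly positive because $m_1,m_2\in\supp\phi$ and $0<\rho<1$. Distinct values of $t$ give distinct pairs, so we obtain a continuum. This is the only step needing care, namely checking both the support and the nonnegativity constraints.

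Replacing this state's kernels by $(\pi_t,\overline\sigma_t)$ while keeping every other state's kernels fixed leaves the full reduced form \eqref{eq:type-reduced-form} unchanged. Distinct $t$ therefore give distinct full kernel pairs. I would also remark that these are points of the unrestricted decomposition space, not elements of the finite $\Tset$. By Proposition~\ref{prop:dictionary-identification}, only types in $[\theta]$ matter for identification.

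\emph{Minimax bound.} Suppose $x^{\theta_0}=x^{\theta_1}$. By \eqref{eq:calibration-sample}, the law of $Z_N$ is the same product measure $(x^{\theta_0})^{\otimes N}$ under both types. Any estimator $\widehat\theta_N=f(Z_N,U)$, even with auxiliary randomization $U$, therefore has the same distribution $P$ under $\theta_0$ and $\theta_1$. The bound follows in three steps:
\begin{align*}
\max_i \E_{\theta_i}d(\widehat\theta_N,\theta_i)
&\ge \tfrac12\sum_{i\in\{0,1\}}\E_{\theta_i}d(\widehat\theta_N,\theta_i)\\
&= \tfrac12\E_P\bigl[d(\widehat\theta_N,\theta_0)+d(\widehat\theta_N,\theta_1)\bigr]\\
&\ge \tfrac12 d(\theta_0,\theta_1).
\end{align*}
The first inequality bounds the maximum by the average. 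The equality is the common law $P$. The last inequality is the pointwise triangle inequality for $d$.

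If $\widehat\theta_N$ may take values outside $\{\theta_0,\theta_1\}$, the argument still works provided $d$ is a metric on a set containing the estimator's range, for example all of $\Tset$. I will state that convention explicitly.
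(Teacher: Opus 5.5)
Your proposal is correct and follows essentially the same route as the paper. The paper parameterizes decompositions by a binding subdistribution $r$ with $0\le r_m\le\phi_m$ and $\sum_m r_m=\rho$, and perturbs the interior point $r=\rho\phi$ (i.e.\ $\pi=\overline\sigma=\phi$) by transferring mass between two coordinates of $K$; this is exactly your segment $(\pi_t,\overline\sigma_t)$ with $t=-\varepsilon/(1-\rho)$. The minimax bound is proved the same way, via equal laws, the pointwise triangle inequality, and max-versus-average, and your convention that $d$ be defined on the estimator's range makes explicit an assumption the paper leaves implicit.
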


The lower bound follows from observational equivalence and the triangle
inequality, not from slow convergence.  When $[\theta]$ is nonsingleton, it
explains why the receiver's learning target is the class or its predictive
reduced form rather than the full tuple
$(u_S^\theta,B^{S,\theta},\pi^\theta,\overline\sigma^\theta)$.  When the class is
a singleton, the same implementation results continue to apply, but learning
also recovers the dictionary type.

\subsection{Receiver obedience and the type-wise design layer}

For recommendation $m$ and alternative final action $a\ne m$, define the
unnormalized obedience slack
\begin{equation}
D_{m,a}(x)
=\sum_{\omega\in\Om}x_{\omega m}
\left[u_R(m,\omega)-u_R(a,\omega)\right].
\label{eq:obedience-slack}
\end{equation}
For $\gamma\ge0$, the type-wise robust feasible set is
\begin{equation}
\calF_{\rho,\gamma}^{\theta}
=\left\{
x\in\calX_\rho^\theta:
D_{m,a}(x)\ge\gamma p_m(x)
\quad\forall m\in\M,\ a\ne m
\right\}.
\label{eq:type-robust-feasible}
\end{equation}
When $p_m(x)>0$, this is equivalent to a conditional receiver payoff gap of at
least $\gamma$.  Off-support constraints are vacuous.

For a fixed type, define the direct-follow sender value
\[
U_S^\theta(x)
=\sum_{\omega,m}x_{\omega m}u_S^\theta(m,\omega),
\qquad
V_{\rho,\gamma}^\theta
=\max_{x\in\calF_{\rho,\gamma}^{\theta}}U_S^\theta(x)
\]
whenever the feasible set is nonempty.  These are institution or designer
problems conditional on $\theta$.  They select a target protocol subject to
sender participation or installation outside the present game.  Nothing in
the optimization alone proves that a privately informed sender chooses that
protocol at an ex-ante mechanism-selection node.

We impose positive margin only on a target or realized type when proving
finite-sample entry.  The candidate family may include receiver-nonobedient
types.  This asymmetry is essential: if
$x^\theta\in\calF_{\rho,\gamma}^{\theta}$ for every candidate $\theta$ with a
common $\gamma>0$, then every posterior mixture also has the same unnormalized
obedience inequalities, so the receiver should follow before seeing
calibration data.

\section{Certified Bayesian Learning}
\label{sec:certified-learning}
\subsection{Posterior over structural types}

For a transcript $z=((\omega_t,m_t))_{t=1}^N$, let
\[
L_\theta(z)=\prod_{t=1}^N x^\theta_{\omega_t m_t},
\qquad
\ell(z)=\sum_{\vartheta\in\Tset}\lambda_0(\vartheta)L_\vartheta(z).
\]
Call $z$ \emph{feasible} if $\ell(z)>0$, and let
$\calH_N^+=\{z:\ell(z)>0\}$.  At every feasible transcript, Bayes' rule gives
\begin{equation}
\lambda_N(\theta\mid z)
=\frac{\lambda_0(\theta)L_\theta(z)}
{\sum_{\vartheta\in\Tset}\lambda_0(\vartheta)L_\vartheta(z)}.
\label{eq:type-posterior}
\end{equation}
An infeasible transcript has zero probability under every candidate type.  The
gate rejects it and invokes the fallback in \eqref{eq:fallback-action}; no
Bayesian posterior is asserted there.

Equation~\eqref{eq:type-posterior} updates a structural-type posterior, but its
decision content is reduced form.  If the true type is $\theta^\circ$, the
population identified set is $[\theta^\circ]$, and no calibration sample can
distinguish members of that class.  When the class is a singleton, the finite
dictionary therefore does identify $\theta^\circ$; otherwise the posterior can
at most concentrate on the class.  Point identification of the structural type
is not required for implementation.

\subsection{Full-node and posterior-predictive beliefs}

Given a feasible $z$, define the posterior-predictive joint distribution and
message probability
\begin{equation}
\overline x_N(\omega,m\mid z)
=\sum_{\theta\in\Tset}\lambda_N(\theta\mid z)x^\theta_{\omega m},
\qquad
\overline p_N(m\mid z)
=\sum_{\omega}\overline x_N(\omega,m\mid z).
\label{eq:predictive-joint}
\end{equation}
Let
$S_N(z)=\{m:\overline p_N(m\mid z)>0\}$.
If the installed discretionary prescriptions are played, the Bayesian belief
at the receiver's information set $(z,m)$ is, for $m\in S_N(z)$,
\begin{align}
q_N(\theta,\omega,1\mid m,z)
&=
\frac{
\lambda_N(\theta\mid z)\mu(\omega)\rho
\pi^\theta(m\mid\omega)}
{\overline p_N(m\mid z)},                                     
\label{eq:full-belief-binding}\\
q_N(\theta,\omega,0\mid m,z)
&=
\frac{
\lambda_N(\theta\mid z)\mu(\omega)(1-\rho)
\overline\sigma^\theta(m\mid\omega)}
{\overline p_N(m\mid z)}.
\label{eq:full-belief-discretionary}
\end{align}
Summing over $(\theta,c)$ gives the state belief relevant for receiver utility:
\begin{equation}
\overline\beta_N(\omega\mid m,z)
=\sum_{\theta,c}q_N(\theta,\omega,c\mid m,z)
=
\frac{\sum_\theta\lambda_N(\theta\mid z)x^\theta_{\omega m}}
{\sum_\theta\lambda_N(\theta\mid z)p_m(x^\theta)}.
\label{eq:predictive-state-belief}
\end{equation}

\begin{lemma}[Bayesian consistency of the learning bridge]
\label{lem:bayesian-beliefs}
Under Assumptions~\ref{ass:installed}--\ref{ass:certification},
\eqref{eq:type-posterior} follows from the common prior and certified
calibration likelihood.  Conditional on the installed deployment strategies,
\eqref{eq:full-belief-binding}--\eqref{eq:full-belief-discretionary} follow from
the continuation strategies and Bayes' rule at every positive-probability
receiver information set.  Their state marginal is
\eqref{eq:predictive-state-belief}.
\end{lemma}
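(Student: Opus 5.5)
The plan is to verify the three claims of Lemma~\ref{lem:bayesian-beliefs} in order, each by writing down the relevant joint law and applying Bayes' rule.

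\textbf{Step 1: the type posterior.} By Assumption~\ref{ass:installed}, the type $\theta$ is drawn from $\lambda_0$. By Assumption~\ref{ass:certification} and \eqref{eq:calibration-sample}, conditional on $\theta$ the transcript consists of i.i.d.\ draws from $x^\theta$ that are not affected by any strategic choice. Hence the joint probability of $(\theta,z)$ is $\lambda_0(\theta)L_\theta(z)$, and the marginal probability of $z$ is $\ell(z)$. At every feasible $z$ we have $\ell(z)>0$, so conditioning yields \eqref{eq:type-posterior} directly. Nonmanipulability is what makes the likelihood independent of any sender strategy; without it, $L_\theta$ would depend on endogenous play.

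\textbf{Step 2: the full-node belief in the accepted continuation $\Gamma(z)$.} I will first use the conditional independence stated in the timing: given $\theta$, the fresh state $\omega$, the shock $c$, and $z$ are independent. The prior over the hidden node at deployment is therefore $\lambda_N(\theta\mid z)\mu(\omega)\rho^{c}(1-\rho)^{1-c}$. Under the installed strategies, the message probability is $\pi^\theta(m\mid\omega)$ if $c=1$ and $\overline\sigma^\theta(m\mid\omega)$ if $c=0$. So the joint probability of $(\theta,\omega,c,m)$ is the numerator of \eqref{eq:full-belief-binding} or \eqref{eq:full-belief-discretionary}.

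Summing over $(\theta,\omega,c)$ and using \eqref{eq:type-reduced-form} gives $\sum_\theta\lambda_N(\theta\mid z)p_m(x^\theta)=\overline p_N(m\mid z)$. This is positive exactly when $m\in S_N(z)$, that is, exactly at the positive-probability receiver information sets. Dividing by it yields the stated formulas.

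\textbf{Step 3: the state marginal.} Summing over $c$ recombines $\rho\pi^\theta+(1-\rho)\overline\sigma^\theta=\phi^\theta$, so each $\theta$-term becomes $\lambda_N(\theta\mid z)x^\theta_{\omega m}$. Summing over $\theta$ then gives \eqref{eq:predictive-state-belief}. This is the same computation as Proposition~\ref{prop:behavioral-sufficiency}, with $\lambda=\lambda_N$.

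The main obstacle is bookkeeping rather than analysis, and it sits in Step 2. Two points need care there.

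\emph{Endpoint values of $\rho$.} At $\rho\in\{0,1\}$ one source branch is absent from the extensive form. The corresponding formula then assigns probability zero, which is consistent with restricting beliefs to $\mathsf H_z(m)$.

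\emph{Types with zero posterior weight.} Types outside $\Tset(z)$ carry $\lambda_N=0$, so they contribute nothing and are not nodes of $\Gamma(z)$.

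I will also remark that, when $\overline p_N(m\mid z)=0$, the lemma makes no claim. Off-path beliefs there are left to the completion step of Definition~\ref{def:equilibrium-object}.
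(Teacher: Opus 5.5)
Your proposal is correct and follows essentially the same route as the paper. You take the prior times the i.i.d.\ certified likelihood for $\lambda_N$, write the full-node joint law as $\lambda_N(\theta\mid z)\mu(\omega)$ times the source probability and installed kernel, normalize by $\overline p_N(m\mid z)$ on $S_N(z)$, and recombine the two source branches into $x^\theta_{\omega m}$ for the state marginal; your remarks on the endpoint values of $\rho$ and on posterior-zero types agree with the paper's conventions for $\mathsf H_z(m)$ and $\Tset(z)$, though the argument does not need them.
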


The full belief $q_N$ and its state marginal serve different proof obligations.
The former establishes PBE consistency; the latter establishes receiver
sequential rationality because $u_R$ depends only on $(a,\omega)$.  Calibration
preserves prior odds among observationally equivalent dictionary types and
need not select a particular member of the class.  At deployment, the message
also need not reveal the latent source $c$.  Neither form of concentration is
necessary for the receiver's action.

\subsection{Empirical estimates are not equilibrium beliefs}

Let $N_N(\omega,m)$ and $N_N(m)$ denote the corresponding counts in $z$.  When
$N_N(m)>0$, define
\begin{equation}
\widehat\beta_N(\omega\mid m)
=\frac{N_N(\omega,m)}{N_N(m)}.
\label{eq:empirical-posterior}
\end{equation}
When $N_N(m)=0$, set $\widehat\beta_N(\cdot\mid m)$ equal to an arbitrary fixed
distribution.  The low-count term in the empirical audit explicitly covers
this event.
This statistic consistently estimates the true type's reduced-form posterior
under standard support conditions.  It is not the belief generated from the
common prior and candidate strategies, and its concentration does not prove PBE
consistency.  The equilibrium receiver uses
$\overline\beta_N$; \Cref{sec:robust-statistical} retains
$\widehat\beta_N$ as an operational audit and finite-sample benchmark.

\subsection{Posterior-predictive certificate and activation algorithm}

For any $\tau\ge0$, define the predictive obedience numerator
\begin{equation}
\overline D_{m,a}(z)
=\sum_{\theta,\omega}\lambda_N(\theta\mid z)x^\theta_{\omega m}
\left[u_R(m,\omega)-u_R(a,\omega)\right]
\label{eq:predictive-obedience}
\end{equation}
and the public acceptance set
\begin{equation}
\calC_N(\tau)
=\left\{
z\in\calH_N^+:
\overline D_{m,a}(z)\ge
\tau\,\overline p_N(m\mid z)
\quad
\forall m\in S_N(z),\ a\ne m
\right\}.
\label{eq:acceptance-set}
\end{equation}
The quantifier covers \emph{every} positive posterior-predictive message.  It
cannot be replaced by the empirical support or by messages above a numerical
probability threshold when the goal is exact PBE: any positive-probability
message is on path.

\begin{algorithm}[H]
\caption{Certified calibration and PBE activation}
\label{alg:activation}
\small
\textbf{Input.} The common type dictionary, prior $\lambda_0$, transcript $z$,
and target margin $\tau\ge0$.
\begin{enumerate}[leftmargin=1.8em,label=\arabic*.,itemsep=0.15em,topsep=0.3em]
\item \textbf{Feasibility.} Compute the likelihoods $L_\theta(z)$.  If
$\ell(z)=0$, reject.
\item \textbf{Posterior.} Compute $\lambda_N(\cdot\mid z)$ in
\eqref{eq:type-posterior} and $\overline x_N,\overline p_N$ in
\eqref{eq:predictive-joint}.
\item \textbf{Obedience test.} For every $m$ with
$\overline p_N(m\mid z)>0$ and every $a\ne m$, compute
$\overline D_{m,a}(z)$.
\item \textbf{Activation.} Activate the installed deployment channel if and
only if $z\in\calC_N(\tau)$; otherwise use the no-communication fallback.
\item \textbf{Deployment.} After activation, the binding device uses
$\pi^\theta$ when $c=1$; when $c=0$, use the equilibrium sender and receiver
strategies proved in \Cref{thm:post-calibration-pbe}.
\end{enumerate}
\end{algorithm}

With explicitly tabulated finite sets, direct evaluation uses
\[
O\!\left(
N|\Tset|+|\Tset||\Om||\M|+|\Om||\M||\A|
\right)
\]
arithmetic operations: the three terms compute type likelihoods, the predictive
joint law, and all obedience gaps.  Exact rational comparisons have polynomial
bit complexity in the encoded input and $N$.  Numerical implementations should
accumulate log likelihoods and normalize with a log-sum-exp calculation to
avoid underflow; this changes neither the posterior nor the gate.

The certificate checks receiver incentives rather than asserting them.  Its
substantive force comes from the entry theorem in
\Cref{thm:high-probability-entry}, which gives conditions under which
$\calC_N(\tau)$ is reached with high probability.

\section{Static and Post-calibration Equilibrium Implementation}
\label{sec:equilibrium-implementation}
\subsection{A static posterior-mixture implementation theorem}

The type-wise decomposition in
Lemma~\ref{lem:typewise-implementability} does not by itself establish
equilibrium under type uncertainty.  The receiver has one action rule and must
best respond to the posterior mixture of all active types.  The following
static theorem isolates the exact condition needed at a deployment
continuation.

\begin{theorem}[Bayesian direct-following implementation]
\label{thm:static-mixture-implementation}
Fix a belief $\lambda$ on a nonempty active type set
$\Tset_\lambda=\{\theta:\lambda(\theta)>0\}$.  Suppose
\eqref{eq:sender-best-prescription} holds for every active type and define
$\overline x^\lambda=\sum_\theta\lambda(\theta)x^\theta$.  If
\begin{equation}
D_{m,a}(\overline x^\lambda)\ge0
\quad\text{for every }m\text{ with }
p_m(\overline x^\lambda)>0\text{ and every }a\ne m,
\label{eq:mixture-obedience}
\end{equation}
then the one-shot hidden-binding game admits a PBE in which:
\begin{enumerate}[leftmargin=1.8em,label=(\roman*)]
\item at every nonbinding sender node, the sender uses
$\overline\sigma^\theta(\cdot\mid\omega)$;
\item the receiver chooses action $m$ after every on-path recommendation $m$;
\item on-path beliefs are the Bayes beliefs induced by $\lambda$,
$\pi^\theta$, and $\overline\sigma^\theta$;
\item after an off-path message, the receiver holds a specified probability
distribution on the histories in that information set and takes a best
response to that belief.
\end{enumerate}
Conditional on each active $\theta$, the equilibrium induces $x^\theta$.
If every inequality in \eqref{eq:mixture-obedience} is strict, direct following
is the unique receiver best response on path.
\end{theorem}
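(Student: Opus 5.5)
The plan is to construct the assessment explicitly and then verify Bayes consistency, receiver sequential rationality, off-path completion, and sender sequential rationality, in that order.

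\textbf{Strategies and on-path beliefs.} At every nonbinding node $(\theta,\omega)$ with $\theta\in\Tset_\lambda$, let the sender play $\overline\sigma^\theta(\cdot\mid\omega)$. Under these strategies and the independent draws $\theta\sim\lambda$, $\omega\sim\mu$, $c\sim\operatorname{Bernoulli}(\rho)$, the probability of a hidden history $(\theta,\omega,c)$ followed by message $m$ is
\[
\lambda(\theta)\mu(\omega)\bigl[\rho\pi^\theta(m\mid\omega)\bigr]^{c}\bigl[(1-\rho)\overline\sigma^\theta(m\mid\omega)\bigr]^{1-c}.
\]
Summing over $c$ and $\theta$ gives the message-state law $\overline x^\lambda$, and summing further over $\omega$ gives total message probability $p_m(\overline x^\lambda)$. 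So the on-path messages are exactly those with $p_m(\overline x^\lambda)>0$. For each such $m$, define the receiver belief by dividing the display by $p_m(\overline x^\lambda)$; this is the analogue of \eqref{eq:full-belief-binding}--\eqref{eq:full-belief-discretionary} with $\lambda$ in place of $\lambda_N$. It is Bayes' rule by construction, which gives (iii). Its state marginal is $\beta_{\overline x^\lambda}(\cdot\mid m)$, as in Proposition~\ref{prop:behavioral-sufficiency}. At the endpoints $\rho\in\{0,1\}$, drop the absent branch; the formulas are unchanged because that branch has zero weight.

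\textbf{Receiver sequential rationality.} Since $u_R$ depends only on $(a,\omega)$, the expected utility of action $a$ at an on-path $m$ is $\sum_\omega\beta_{\overline x^\lambda}(\omega\mid m)u_R(a,\omega)$. The difference between following $m$ and deviating to $a$ is therefore $D_{m,a}(\overline x^\lambda)/p_m(\overline x^\lambda)$. By \eqref{eq:mixture-obedience} this is nonnegative, so following is a best response, giving (ii). If every inequality is strict, each alternative action is strictly worse, so following is the unique best response.

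\textbf{Off-path completion.} For a message $m$ with $p_m(\overline x^\lambda)=0$, I choose the belief so that the receiver's response supports the sender's incentives. Take any hidden history in $\mathsf H(m)$, say one with $c=1$ when $\rho>0$ or $c=0$ when $\rho<1$, assign it a point mass, and let the receiver play a best response $a^*(m)$ to that belief. This gives (iv).

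\textbf{Sender sequential rationality (the main obstacle).} A nonbinding sender at $(\theta,\omega)$ earns $u_S^\theta(s_R(m),\omega)$ from message $m$. Each prescribed message $m\in\supp\overline\sigma^\theta(\cdot\mid\omega)$ is on path, because it carries positive weight $\lambda(\theta)\mu(\omega)(1-\rho)\overline\sigma^\theta(m\mid\omega)>0$. By (ii) it therefore induces action $m$, and by \eqref{eq:sender-best-prescription} we have $m\in B^{S,\theta}_\omega$, so its payoff equals $\max_a u_S^\theta(a,\omega)$. No message, on or off path, can yield more than this maximum, so every deviation is unprofitable. Hence the off-path belief can be chosen freely, and on-path ties in the receiver's choice are harmless because we have selected following. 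The delicate point is ordering: the receiver's best response is established first, and only then can prescribed messages be treated as inducing their namesake actions. The case $\rho=1$ needs no sender check.

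Finally, conditional on each active $\theta$, the induced joint law of $(\omega,m)$ is $\mu(\omega)\phi^\theta(m\mid\omega)=x^\theta_{\omega m}$ by \eqref{eq:type-reduced-form}. Since the receiver follows on path, the induced outcome law over $(\omega,a)$ is the same array $x^\theta$.
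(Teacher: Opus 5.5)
Your proposal is correct and follows essentially the same route as the paper. Like the paper, you obtain receiver optimality by dividing the mixture obedience inequality by $p_m(\overline x^\lambda)$. You then note that every prescribed discretionary message has positive aggregate probability, so it induces its namesake sender-best action, and this makes any off-path belief and best-response completion harmless to sender incentives. Your explicit check that the induced conditional law is $x^\theta$ is a small addition that the paper leaves implicit.
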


The complete proof, including the off-path completion and sender incentives in
the nonbinding branch, is in \Cref{app:static-mixture-proof}.

\begin{corollary}[Known-type direct-following implementation]
\label{cor:known-type-implementation}
Fix $\theta$.  If
$x^\theta\in\calF_{\rho,0}^\theta$ and the decomposition has
sender-best discretionary support, then a direct-following PBE implements
$x^\theta$.  If $x^\theta\in\calF_{\rho,\gamma}^\theta$ for $\gamma>0$, the
on-path receiver response is unique.
\end{corollary}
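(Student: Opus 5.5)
The plan is to obtain the corollary by specializing \Cref{thm:static-mixture-implementation} to a degenerate belief. Take $\lambda=\delta_\theta$, so that $\Tset_\lambda=\{\theta\}$ and $\overline x^\lambda=x^\theta$. The theorem has two hypotheses, and both need to be checked for this single active type.

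\begin{enumerate}[leftmargin=1.8em,label=(\alph*)]
\item \emph{Sender feasibility.} The hypothesis that the decomposition has sender-best discretionary support is exactly \eqref{eq:sender-best-prescription} for $\theta$. \Cref{lem:typewise-implementability} makes this consistent with $x^\theta\in\calX_\rho^\theta$. So if the decomposition is recovered from membership in $\calX_\rho^\theta$, we use the kernels $(\pi,\sigma)$ constructed in that lemma as the installed pair $(\pi^\theta,\overline\sigma^\theta)$.
\item \emph{Mixture obedience.} Membership $x^\theta\in\calF_{\rho,0}^\theta$ gives $D_{m,a}(x^\theta)\ge 0\cdot p_m(x^\theta)=0$ for every $m$ and every $a\ne m$. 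In particular this holds for every $m$ with $p_m(x^\theta)>0$, which is \eqref{eq:mixture-obedience}.
\end{enumerate}

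With both hypotheses in place, the theorem gives a PBE of the one-shot hidden-binding game with the following features: the sender plays $\overline\sigma^\theta$ at nonbinding nodes, the receiver follows every on-path recommendation, beliefs are Bayes-consistent on path, and off-path beliefs are completed as specified. The theorem also states that, conditional on the active type $\theta$, this equilibrium induces $x^\theta$. That is the first claim of the corollary.

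For the uniqueness claim, take $\gamma>0$ and $x^\theta\in\calF_{\rho,\gamma}^\theta$. For every $m$ with $p_m(x^\theta)>0$ and every $a\ne m$,
\[
D_{m,a}(x^\theta)\ge\gamma\,p_m(x^\theta)>0.
\]
Every inequality in \eqref{eq:mixture-obedience} is therefore strict, and the final clause of \Cref{thm:static-mixture-implementation} gives that direct following is the unique on-path receiver best response. Dividing by $p_m(x^\theta)$ shows that each conditional payoff gap is at least $\gamma$.

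No step here is a genuine obstacle. The only care needed is in the phrase ``the decomposition'': feasibility in $\calX_\rho^\theta$ guarantees that some sender-best decomposition exists, and the corollary is about that decomposition being the installed one. Once it is fixed, all the equilibrium work, including sender incentives and off-path completion, is inherited from the theorem.
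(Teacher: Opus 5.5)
Your proposal is correct and is the paper's intended argument: the corollary is the degenerate-belief case $\lambda=\delta_\theta$ of Theorem~\ref{thm:static-mixture-implementation}. Membership in $\calF_{\rho,0}^\theta$ supplies \eqref{eq:mixture-obedience}, and membership in $\calF_{\rho,\gamma}^\theta$ with $\gamma>0$ makes every on-path inequality strict, which triggers the theorem's uniqueness clause; the paper gives no separate proof beyond this specialization.
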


The corollary is the static implementation foundation for the type-wise design
polytope.  It is not sufficient for the uncertain-type continuation; the latter
uses Theorem~\ref{thm:static-mixture-implementation} and predictive mixture
obedience.

\subsection{Exact PBE after every accepted transcript}

\begin{theorem}[Exact post-calibration PBE]
\label{thm:post-calibration-pbe}
Fix $\tau\ge0$ and a feasible transcript $z\in\calC_N(\tau)$.  In
$\Gamma(z)$, let the nonbinding sender use
\[
s_S(m\mid z,\theta,\omega)
=\overline\sigma^\theta(m\mid\omega)
\quad\text{for every }\theta\in\supp\lambda_N(\cdot\mid z),
\]
and let the receiver choose $m$ after every $m\in S_N(z)$.  Use
$q_N$ from
\eqref{eq:full-belief-binding}--\eqref{eq:full-belief-discretionary} on path;
after every $m\notin S_N(z)$, assign an arbitrary belief supported on
$\mathsf H_z(m)$ and a receiver best response.  This assessment is an exact
PBE of the deployment continuation.
Conditional on the realized active type $\theta$, it implements
$x^\theta$.  If $\tau>0$, direct following is uniquely optimal on path with
receiver gap at least $\tau$.
\end{theorem}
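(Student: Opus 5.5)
The plan is to reduce the statement to Theorem~\ref{thm:static-mixture-implementation} with the belief $\lambda=\lambda_N(\cdot\mid z)$. Since $z$ is feasible, \eqref{eq:type-posterior} defines a probability vector. Its support is exactly $\Tset(z)$, which is the active type set of $\Gamma(z)$. Lemma~\ref{lem:bayesian-beliefs} then identifies the continuation game with the one-shot hidden-binding game under common prior $\lambda_N$. This uses three facts: conditional on $\theta$, the fresh state and the binding shock are independent of $z$; the state has law $\mu$; and the shock is $\operatorname{Bernoulli}(\rho)$. Under this identification, $\overline x^\lambda=\overline x_N(\cdot\mid z)$ and $p_m(\overline x^\lambda)=\overline p_N(m\mid z)$, so the on-path message set of the static theorem is $S_N(z)$. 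Finally, \eqref{eq:sender-best-prescription} holds for every dictionary type, and in particular for every active one.

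Next I verify the hypothesis \eqref{eq:mixture-obedience}. By definition $\overline D_{m,a}(z)=D_{m,a}(\overline x_N(\cdot\mid z))$. Membership $z\in\calC_N(\tau)$ gives $D_{m,a}\ge\tau\,\overline p_N(m\mid z)\ge0$ for every $m\in S_N(z)$ and $a\ne m$, which is exactly the required inequality. Theorem~\ref{thm:static-mixture-implementation} then delivers a PBE with the properties listed there. I would check item by item that this PBE coincides with the prescribed assessment:
\begin{enumerate}[leftmargin=1.8em,label=(\roman*)]
\item the sender strategy is $\overline\sigma^\theta$ on the active types;
\item the receiver plays $m$ after each $m\in S_N(z)$;
\item the on-path beliefs are the Bayes beliefs, and these are precisely $q_N$ in \eqref{eq:full-belief-binding}--\eqref{eq:full-belief-discretionary} by Lemma~\ref{lem:bayesian-beliefs};
\item for off-path messages the static theorem allows any belief on the information set together with a best response.
\end{enumerate}
Here the information set is $\mathsf H_z(m)$, with the source branches dropped at $\rho\in\{0,1\}$, so the arbitrary off-path assignment in the statement is admissible.

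The step I expect to need the most care is sender sequential rationality when off-path beliefs are \emph{arbitrary}. The static theorem may have been proved with a particular off-path completion, so I must confirm that the sender's incentive does not depend on that choice. My first attempt at the argument is as follows. In the nonbinding branch the sender of type $\theta$ at state $\omega$ sends $m\in\supp\overline\sigma^\theta(\cdot\mid\omega)\subseteq B^{S,\theta}_\omega$. Such an $m$ has positive predictive probability, because $\lambda_N(\theta)>0$, $\mu(\omega)>0$, and $1-\rho>0$ whenever the branch exists. Hence $m\in S_N(z)$, and $m$ induces action $m$, which yields $\max_a u_S^\theta(a,\omega)$. No deviation, on or off path, can induce an action that does better than this maximum. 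Thus the sender optimizes regardless of off-path receiver behavior, and the arbitrary completion is harmless.

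Implementation follows by multiplying out the strategies conditional on $\theta$: $\mu(\omega)[\rho\pi^\theta+(1-\rho)\overline\sigma^\theta]=x^\theta$. For $\tau>0$, divide the acceptance inequality by $\overline p_N(m\mid z)>0$. Using \eqref{eq:predictive-state-belief}, this shows that the conditional expected receiver gain from $m$ over any $a\ne m$ under $\overline\beta_N(\cdot\mid m,z)$ is at least $\tau$. Since $u_R$ depends only on $(a,\omega)$, the gap is the same under $q_N$, so following is the unique best response with margin $\tau$.
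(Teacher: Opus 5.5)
Your proposal is correct and follows the paper's own route. You set $\lambda=\lambda_N(\cdot\mid z)$, use Lemma~\ref{lem:bayesian-beliefs} for on-path Bayes consistency of $q_N$, read \eqref{eq:mixture-obedience} off the acceptance condition via $\overline D_{m,a}(z)=D_{m,a}(\overline x_N(\cdot\mid z))$, and invoke Theorem~\ref{thm:static-mixture-implementation}, with your extra checks (sender optimality independent of the arbitrary off-path completion, the factorization giving $x^\theta$, and division by $\overline p_N(m\mid z)$ for the $\tau$-gap) simply making explicit what the paper leaves inside the static theorem's proof.
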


The proof is given in \Cref{app:post-calibration-proof}.

Posterior-zero types are not active in $\Gamma(z)$ and are not used to assert
conditional implementation.  The rejected branch is the no-message decision
problem in \eqref{eq:fallback-action}.  Thus all public transcripts have a
specified continuation, while the target implementation claim is made only on
feasible accepted transcripts.

\subsection{When does calibration reach the PBE region?}

The deterministic implication in
Theorem~\ref{thm:post-calibration-pbe} is not enough: the acceptance set could
be empty.  Two additional conditions make entry nonvacuous.

\begin{assumption}[Common recommendation support]
\label{ass:common-message-support}
There is a nonempty $S\subseteq\M$ such that, for every
$\theta\in\supp\lambda_0$,
\[
p_m(x^\theta)>0\quad\Longleftrightarrow\quad m\in S.
\]
\end{assumption}

This assumption does not make learning redundant.  Types may use the same
message labels while inducing different, including nonobedient, state
posteriors.  It rules out a subtler finite-sample problem: a false type with
positive posterior can otherwise introduce a message absent under the target
type, and target-type margin says nothing about following that message.

Fix a true type $\theta^\circ$ and write
$P=x^{\theta^\circ}$, $E=[\theta^\circ]$, and
$w_E=\lambda_0(E)$.  If $E\ne\Tset$, define the maximal Bhattacharyya
affinity, its logarithmic separation, and the prior factor
\begin{align}
b^\circ
&=\max_{\vartheta\notin E}
\sum_{(\omega,m)}
\sqrt{P_{\omega m}x^\vartheta_{\omega m}}\in[0,1),
\label{eq:bhattacharyya-affinity}\\
h^\circ&=-\log b^\circ\in(0,\infty],
\label{eq:hellinger-separation}\\
A_E
&=\sum_{\vartheta\notin E}
\sqrt{\frac{\lambda_0(\vartheta)}{w_E}}.
\label{eq:prior-factor}
\end{align}
Distinct finite distributions have affinity below one.  Zero cells and mutual
singularity are allowed.

\begin{lemma}[Finite-sample class concentration]
\label{lem:class-concentration}
Let
$r_N=\lambda_N(E^c\mid Z_N)$.  Conditional on true type
$\theta^\circ$, if $E\ne\Tset$, then
\begin{equation}
\E_{\theta^\circ}[r_N]\le A_E(b^\circ)^N,
\qquad
\Pr_{\theta^\circ}\left(
r_N>\frac{A_E(b^\circ)^N}{\eta}
\right)\le\eta
\label{eq:class-concentration}
\end{equation}
for every $\eta\in(0,1)$, where $(b^\circ)^0=1$.  If $E=\Tset$, then
$r_N=0$ identically.
\end{lemma}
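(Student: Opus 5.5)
The plan is the standard square-root likelihood-ratio argument for Bayesian posterior concentration, together with Markov's inequality. When $E=\Tset$ there is nothing to prove: $E^c$ is empty, so $r_N=0$ by definition. Assume $E\ne\Tset$.

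\textbf{Step 1: bound $r_N$ by a sum of square roots.} On feasible transcripts, every $\theta\in E$ has $x^\theta=P$, so $\sum_{\theta\in E}\lambda_0(\theta)L_\theta(Z_N)=w_E L_P(Z_N)$, where $L_P=\prod_t P_{\omega_t m_t}$. Under $\Pr_{\theta^\circ}$ we have $L_P(Z_N)>0$ almost surely. Using $\frac{a}{a+b}\le\min(1,a/b)\le\sqrt{a/b}$ for $a\ge0$ and $b>0$, and then subadditivity of the square root, I get
\[
r_N=\frac{\sum_{\vartheta\notin E}\lambda_0(\vartheta)L_\vartheta}{w_E L_P+\sum_{\vartheta\notin E}\lambda_0(\vartheta)L_\vartheta}
\le\sum_{\vartheta\notin E}\sqrt{\frac{\lambda_0(\vartheta)}{w_E}}\sqrt{\frac{L_\vartheta(Z_N)}{L_P(Z_N)}} .
\]
The first inequality uses the monotonicity $\frac{a}{a+b}\le \frac{a}{b}$ with $a$ the off-class mass and $b=w_EL_P$. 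Taking the square root is legitimate because $r_N\le1$ and $r_N\le a/b$.

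\textbf{Step 2: compute expectations.} Because $Z_N$ is i.i.d.\ with law $P$, I will take $\E_{\theta^\circ}$ of each term and factorize it over the $N$ observations:
\[
\E_{\theta^\circ}\sqrt{L_\vartheta/L_P}=\Bigl(\sum_{(\omega,m):P_{\omega m}>0}P_{\omega m}\sqrt{x^\vartheta_{\omega m}/P_{\omega m}}\Bigr)^N .
\]
The bracket is the Bhattacharyya affinity $\sum\sqrt{P_{\omega m}x^\vartheta_{\omega m}}$. Cells with $P_{\omega m}=0$ contribute zero to both sides, so zero cells and mutual singularity are handled automatically: an affinity of $0$ gives $0^N$, which equals $1$ at $N=0$ by the stated convention. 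Each affinity is at most $b^\circ$, so summing gives $\E r_N\le A_E(b^\circ)^N$.

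\textbf{Step 3: the tail bound.} Markov's inequality gives $\Pr(r_N>A_E(b^\circ)^N/\eta)\le\eta$. If $b^\circ=0$ and $N\ge1$, then the expectation is $0$, so $r_N=0$ almost surely and the event $r_N>0$ has probability zero; the inequality therefore still holds.

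\textbf{Side claim and main obstacle.} It remains to justify $b^\circ<1$. By Cauchy--Schwarz, the affinity equals $1$ only if $x^\vartheta=P$, and $\vartheta\notin E$ rules this out; the maximum is over a finite set, so $b^\circ<1$. The main point needing care is Step 1: I have to use the equivalence-class structure to pool all of $E$ into the single likelihood $w_EL_P$, so that the bound involves only off-class ratios. I also have to handle the almost-sure positivity of $L_P$, so that the ratio is well defined and the degenerate zero-cell cases do not break the factorization.
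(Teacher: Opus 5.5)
Your proposal is correct and matches the paper's proof essentially step for step. The shared steps are: pool $E$ into the single likelihood $w_E L_P$, bound $r_N\le\min\{1,a/b\}\le\sqrt{a/b}$, split the bound with subadditivity of the square root, factor each expectation into the $N$th power of a Bhattacharyya affinity, and finish with Markov's inequality. Your explicit handling of the zero-threshold case ($b^\circ=0$, $N\ge1$) in the Markov step, and your Cauchy--Schwarz justification of $b^\circ<1$, are small points the paper leaves implicit.
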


\begin{lemma}[Predictive convergence and margin preservation]
\label{lem:predictive-margin}
Suppose Assumption~\ref{ass:common-message-support} holds.  Let
\[
p_{\min}^\circ=\min_{m\in S}p_m(P)>0,\qquad
R_R=\max_{\omega,a,b}|u_R(a,\omega)-u_R(b,\omega)|.
\]
For every $m\in S$,
\begin{equation}
\left\|
\overline\beta_N(\cdot\mid m,Z_N)-\beta_P(\cdot\mid m)
\right\|_1
\le
\frac{2r_N}{(1-r_N)p_{\min}^\circ}.
\label{eq:predictive-beta-bound}
\end{equation}
If $P\in\calF_{\rho,\gamma}^{\theta^\circ}$ for $\gamma>0$ and
\begin{equation}
r_N\le
\alpha^\circ
:=\frac{\gamma p_{\min}^\circ}
{4R_R+\gamma p_{\min}^\circ},
\label{eq:posterior-threshold}
\end{equation}
then $Z_N\in\calC_N(\gamma/2)$.
\end{lemma}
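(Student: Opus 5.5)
The approach is to split the posterior-predictive joint law into the true-class part and a residual. Members of $E=[\theta^\circ]$ share the reduced form $P$, so
\[
\overline x_N=(1-r_N)P+r_N Q_N,\qquad
Q_N=\frac{1}{r_N}\sum_{\vartheta\notin E}\lambda_N(\vartheta\mid Z_N)x^\vartheta
\]
when $r_N>0$. When $r_N=0$ we have $\overline x_N=P$ and both claims are immediate. Each $x^\vartheta$ has total mass one, so $Q_N$ is a probability distribution on $\Om\times\M$.

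Under Assumption~\ref{ass:common-message-support}, every active type has message support exactly $S$. This gives $S_N(Z_N)=S$, and for every $m\in S$ we get the lower bound $\overline p_N(m)\ge(1-r_N)p_m(P)\ge(1-r_N)p_{\min}^\circ$. This is where the assumption enters: it ensures the quantifier in $\calC_N(\tau)$ ranges only over messages where the true type has mass at least $p_{\min}^\circ$. Without it, a false type could add a message outside $S$ that carries no margin.

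\textbf{Step 1: the $\ell_1$ bound \eqref{eq:predictive-beta-bound}.} Write $\overline\beta_N(\cdot\mid m)-\beta_P(\cdot\mid m)$ over the common denominator. With $a_\omega=\overline x_{N,\omega m}$, $\bar a=\overline p_N(m)$, $b_\omega=P_{\omega m}$ and $\bar b=p_m(P)$, this becomes
\[
\frac{a_\omega\bar b-b_\omega\bar a}{\bar a\bar b}.
\]
Substitute $a=(1-r_N)b+r_N q$ with $q_\omega=Q_{N,\omega m}$; the $(1-r_N)$ terms cancel, leaving $r_N(q_\omega\bar b-b_\omega\bar q)/(\bar a\bar b)$. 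The $\ell_1$ norm of the numerator is at most $2r_N\bar q\,\bar b$. Hence the $\ell_1$ distance is at most $2r_N\bar q/\bar a\le 2r_N/\bar a$, since $\bar q\le1$. Applying $\bar a\ge(1-r_N)p_{\min}^\circ$ gives the bound.

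\textbf{Step 2: margin preservation.} I will work with the conditional gap rather than the unnormalized slack. Since $\overline D_{m,a}(Z_N)/\overline p_N(m)=\sum_\omega\overline\beta_N(\omega\mid m)[u_R(m,\omega)-u_R(a,\omega)]$, the conditional gap under $\overline\beta_N$ differs from the gap under $\beta_P$ by at most $R_R\|\overline\beta_N-\beta_P\|_1$. Actually the bound $R_R/2$ holds because we integrate against a difference of probability vectors, but $R_R$ suffices.

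Because $P\in\calF^{\theta^\circ}_{\rho,\gamma}$ and $p_m(P)>0$ on $S$, the gap under $\beta_P$ is at least $\gamma$. So it suffices to have
\[
R_R\cdot\frac{2r_N}{(1-r_N)p_{\min}^\circ}\le\frac{\gamma}{2},
\]
which is equivalent to $4R_Rr_N\le\gamma p_{\min}^\circ(1-r_N)$, that is, $r_N\le\alpha^\circ$.

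\textbf{Checking the gate.} Feasibility of $Z_N$ holds almost surely under $\theta^\circ$, since $L_{\theta^\circ}(Z_N)>0$. The posterior $\lambda_N$ is then well defined, the inequality holds for all $m\in S_N(Z_N)=S$ and all $a\ne m$, and therefore $Z_N\in\calC_N(\gamma/2)$.

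The main obstacle I expect is the support bookkeeping: confirming $S_N(Z_N)=S$ exactly and that the $\bar q\le1$ step does not require a lower bound on $p_m(Q_N)$. The remaining steps are elementary algebra.
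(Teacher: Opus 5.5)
Your proposal is correct and follows the paper's proof: the same split $\overline x_N=(1-r_N)P+r_NQ_N$, the same $\ell_1$ bound (your $2r_N\bar q/\bar a$ is exactly twice the paper's contamination weight $w_m$, derived by direct algebra rather than by writing $\overline\beta_N$ as a mixture of $\beta_P$ and $\beta_{Q_N}$), the same $R_R$-Lipschitz comparison of conditional gaps, and $S_N(Z_N)=S$ from common support. One aside is false and should be deleted: the constant $R_R/2$ does not hold in general, because $u_R(m,\cdot)-u_R(a,\cdot)$ can take both values $\pm R_R$ (for example, matching utility with two states), so $R_R\|\overline\beta_N-\beta_P\|_1$ is the correct general bound, and it is the one your argument actually uses.
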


\begin{theorem}[High-probability entry into an exact PBE]
\label{thm:high-probability-entry}
Suppose Assumptions~\ref{ass:installed}--\ref{ass:common-message-support}
hold and the true reduced form
$P=x^{\theta^\circ}$ lies in
$\calF_{\rho,\gamma}^{\theta^\circ}$ for some $\gamma>0$.  If
$E\ne\Tset$, $b^\circ\in(0,1)$, and
\begin{equation}
N\ge
\frac{1}{h^\circ}
\log\left(\frac{A_E}{\eta\alpha^\circ}\right),
\label{eq:bayesian-sample-complexity}
\end{equation}
with a nonpositive logarithm interpreted as requiring $N\ge0$, then
\begin{equation}
\Pr_{\theta^\circ}
\left(Z_N\in\calC_N(\gamma/2)\right)
\ge1-\eta.
\label{eq:high-probability-entry}
\end{equation}
If $E=\Tset$, the conclusion holds for every feasible $N$, including $N=0$.
If $E\ne\Tset$ and $b^\circ=0$, the same conclusion holds for every
$N\ge1$ because one observation eliminates every outside type almost surely;
no claim for $N=0$ is implied by infinite separation.
On the event in \eqref{eq:high-probability-entry}, the activated continuation
is the exact PBE in Theorem~\ref{thm:post-calibration-pbe}.
\end{theorem}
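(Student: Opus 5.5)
The proof chains Lemma~\ref{lem:class-concentration} and Lemma~\ref{lem:predictive-margin}, then invokes Theorem~\ref{thm:post-calibration-pbe} on the resulting event. By Lemma~\ref{lem:predictive-margin}, under Assumption~\ref{ass:common-message-support} and $P\in\calF_{\rho,\gamma}^{\theta^\circ}$, the deterministic event $\{r_N\le\alpha^\circ\}$ is contained in $\{Z_N\in\calC_N(\gamma/2)\}$. It therefore suffices to show $\Pr_{\theta^\circ}(r_N>\alpha^\circ)\le\eta$.

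For the main case $E\ne\Tset$ with $b^\circ\in(0,1)$, I would apply the Markov form of \eqref{eq:class-concentration}. Condition \eqref{eq:bayesian-sample-complexity} rearranges, using $h^\circ=-\log b^\circ>0$, to $A_E(b^\circ)^N\le\eta\alpha^\circ$. Hence
\[
\Pr_{\theta^\circ}(r_N>\alpha^\circ)\le\Pr_{\theta^\circ}\bigl(r_N>A_E(b^\circ)^N/\eta\bigr)\le\eta .
\]
Alternatively, one can apply Markov directly to $\E_{\theta^\circ}[r_N]\le A_E(b^\circ)^N$. This route also covers $\eta\ge1$ trivially and avoids the restriction $\eta\in(0,1)$ in the lemma.

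The case with a nonpositive logarithm needs a short check. There $A_E\le\eta\alpha^\circ$, so $A_E(b^\circ)^N\le A_E\le\eta\alpha^\circ$ for every $N\ge0$, and the same argument applies, including $N=0$ where $r_0=\lambda_0(E^c)$.

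The edge cases are handled separately.
\begin{itemize}
\item \emph{$E=\Tset$.} Here $r_N\equiv0\le\alpha^\circ$ by Lemma~\ref{lem:class-concentration}, so acceptance holds for every feasible transcript, including $N=0$.
\item \emph{$b^\circ=0$.} Every outside type is mutually singular with $P$. A single draw $(\omega_1,m_1)$ with $P_{\omega_1m_1}>0$ has $x^\vartheta_{\omega_1m_1}=0$ for all $\vartheta\notin E$ almost surely. Thus $L_\vartheta(Z_N)=0$ and $r_N=0$ for every $N\ge1$. The bound $\E r_N\le A_E\cdot0^N$ from the lemma confirms this.
\end{itemize}

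Feasibility of $Z_N$ holds $\Pr_{\theta^\circ}$-almost surely, since $\ell(Z_N)\ge\lambda_0(\theta^\circ)L_{\theta^\circ}(Z_N)>0$. Therefore $Z_N\in\calH_N^+$ a.s., and the acceptance event is well defined. On that event, Theorem~\ref{thm:post-calibration-pbe} with $\tau=\gamma/2$ yields the exact PBE, with a strict receiver gap of at least $\gamma/2$.

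The main obstacle lies in Lemma~\ref{lem:predictive-margin}. We have to check that $\alpha^\circ$ is exactly the threshold at which the perturbation bound \eqref{eq:predictive-beta-bound} keeps every obedience gap above $\gamma/2$ for every $m\in S_N(Z_N)$. Common support guarantees $S_N(Z_N)=S$ on feasible transcripts. Since that lemma is available, the remaining work here is the algebra converting \eqref{eq:bayesian-sample-complexity} to $A_E(b^\circ)^N\le\eta\alpha^\circ$, together with the endpoint cases.
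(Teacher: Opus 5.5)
Your proposal is correct and follows the paper's proof. You combine the Markov bound of Lemma~\ref{lem:class-concentration} with the rearrangement $A_E(b^\circ)^N\le\eta\alpha^\circ$ of \eqref{eq:bayesian-sample-complexity}, pass through Lemma~\ref{lem:predictive-margin} into $\calC_N(\gamma/2)$, invoke Theorem~\ref{thm:post-calibration-pbe}, and settle the cases $b^\circ=0$ and $E=\Tset$ via $r_N=0$, exactly as the paper does; your explicit checks of the nonpositive-logarithm case and of almost-sure feasibility of $Z_N$ fill in details the paper leaves implicit.
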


The proof is given in \Cref{app:high-probability-proof}.

\paragraph{Why common support appears.}
Consider one state, messages/actions $\{0,1\}$, and a receiver who strictly
prefers $0$.  Let $\rho=1/2$.  A true type always emits $0$; a false type has a
binding kernel that emits $0$ and a sender-best discretionary kernel that emits
$1$.  After $N$ observed zeros, the false type's posterior is positive but
converges to zero.  At every finite $N$, message $1$ nevertheless has positive
predictive probability and following it is strictly suboptimal.  Hence
true-type margin and posterior consistency alone do not imply finite-sample
entry into an exact direct-following PBE.  Common support is a transparent
sufficient repair; exclusive-message safety or finite-time type elimination
would be alternatives.

\subsection{Statistical failure versus strategic approximation}

The parameter $\eta$ in Theorem~\ref{thm:high-probability-entry} is the
probability of not reaching the certified region.  It is not a deviation-gain
bound, so the theorem does not define an ``$\eta$-PBE.''

For completeness, call a continuation an $\varepsilon$-PBE if no type-agent or
receiver information set admits a deviation improving conditional payoff by
more than $\varepsilon$.  Suppose sender prescriptions are
$\varepsilon_S$-best statewise and the gate accepts only when every predictive
receiver gap is at least $-\varepsilon_R$.  The proof of
Theorem~\ref{thm:static-mixture-implementation} then gives an
$\varepsilon$-PBE with
$\varepsilon=\max\{\varepsilon_S,\varepsilon_R\}$.  If such a continuation is
reached with probability at least $1-\eta$, we call the pair
$(\varepsilon,\eta)$ a \emph{certified implementation guarantee}.  In the ideal
model of this paper, $\varepsilon=0$; $\eta$ remains a separate entry-failure
probability.

\section{Robust Value and Statistical Certification}
\label{sec:robust-statistical}
\subsection{The type-wise value--margin frontier}

Fix a type $\theta$.  Let
\[
\calF_\rho^{\theta,+}=\bigcup_{\gamma>0}\calF_{\rho,\gamma}^\theta,\qquad
V_\rho^{\theta,+}=\sup_{x\in\calF_\rho^{\theta,+}}U_S^\theta(x),\qquad
V_\rho^\theta=V_{\rho,0}^\theta.
\]
The fixed-margin sets are nested, so $V_{\rho,\gamma}^\theta$ is weakly
decreasing in $\gamma$ on the values for which the feasible set is nonempty.
The positive-margin benchmark need not be attained because
$\calF_\rho^{\theta,+}$ is an open union relative to message supports.

\begin{theorem}[Positive-margin frontier]
\label{thm:positive-margin-frontier}
If $\calF_\rho^{\theta,+}\ne\varnothing$, then
\[
\lim_{\gamma\to0^+}V_{\rho,\gamma}^\theta=V_\rho^{\theta,+}.
\]
Moreover, for every $\gamma>0$ such that
$\calF_{\rho,\gamma}^\theta\ne\varnothing$,
\begin{equation}
V_\rho^\theta-V_{\rho,\gamma}^\theta
=
\left(V_\rho^\theta-V_\rho^{\theta,+}\right)
{}+\left(V_\rho^{\theta,+}-V_{\rho,\gamma}^\theta\right).
\label{eq:frontier-decomposition}
\end{equation}
\end{theorem}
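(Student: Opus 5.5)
The decomposition \eqref{eq:frontier-decomposition} is an algebraic identity: add and subtract $V_\rho^{\theta,+}$. All three quantities must be finite real numbers, which holds because $U_S^\theta$ takes values in $[0,1]$ on probability arrays. The only bookkeeping point is that $V_{\rho,\gamma}^\theta$ is attained: $\calF_{\rho,\gamma}^\theta$ is a polytope (closed and bounded, being cut out by finitely many linear inequalities inside the simplex with marginal $\mu$), so the maximum exists whenever the set is nonempty. The substantive content is therefore the limit statement, and the plan is a monotonicity sandwich.

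First I will establish the upper bound. For $\gamma>0$ we have $\calF_{\rho,\gamma}^\theta\subseteq\calF_\rho^{\theta,+}$, so $V_{\rho,\gamma}^\theta\le V_\rho^{\theta,+}$. Because the sets are nested, the map $\gamma\mapsto V_{\rho,\gamma}^\theta$ is weakly decreasing on its domain of nonemptiness, and this domain is an interval $(0,\bar\gamma]$ or $(0,\bar\gamma)$. It is nonempty near $0$: any $x\in\calF_\rho^{\theta,+}$ lies in some $\calF_{\rho,\gamma_0}^\theta$, and hence in every $\calF_{\rho,\gamma}^\theta$ with $\gamma\le\gamma_0$. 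So the limit as $\gamma\to0^+$ exists by monotonicity, equals $\sup_{\gamma>0}V_{\rho,\gamma}^\theta$, and is at most $V_\rho^{\theta,+}$.

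For the lower bound, fix $\varepsilon>0$ and pick $x\in\calF_\rho^{\theta,+}$ with $U_S^\theta(x)>V_\rho^{\theta,+}-\varepsilon$. By definition of the union, $x\in\calF_{\rho,\gamma_x}^\theta$ for some $\gamma_x>0$. Then for every $\gamma\in(0,\gamma_x]$ nestedness gives $V_{\rho,\gamma}^\theta\ge U_S^\theta(x)>V_\rho^{\theta,+}-\varepsilon$. Combining this with the upper bound yields the limit.

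I do not expect a real obstacle, since the argument deliberately avoids continuity of the LP value in $\gamma$ at $0$, which can fail because of support changes. The one point needing care is the nestedness claim $\calF_{\rho,\gamma'}^\theta\subseteq\calF_{\rho,\gamma}^\theta$ for $\gamma\le\gamma'$. It follows from $D_{m,a}(x)\ge\gamma' p_m(x)\ge\gamma p_m(x)$ together with $p_m(x)\ge0$, including off-support messages where both sides vanish. I will state this explicitly before the sandwich.
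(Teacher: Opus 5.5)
Your proposal is correct and takes essentially the same approach as the paper. Nestedness gives a monotone limit bounded above by $V_\rho^{\theta,+}$, and membership of each $x\in\calF_\rho^{\theta,+}$ in every $\calF_{\rho,\gamma}^\theta$ for small $\gamma$ gives the matching lower bound; the decomposition is the add-and-subtract identity, and your finiteness and attainment remarks are harmless bookkeeping that the paper leaves implicit.
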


The first term is a strictification discontinuity; the second is the local
cost of a quantitative margin.  Both affect certification.  A zero
strictification cost does not imply that the supremum is attained by a
positive-margin scheme.

\begin{assumption}[Type-wise support-preserving strictifiability]
\label{ass:strictifiability}
For the fixed type, there exist a weak optimum $x^\star$, a strict point $y$,
a common message support $H$, and $\xi>0$ such that
\[
x^\star\in\argmax_{x\in\calF_{\rho,0}^\theta}U_S^\theta(x),\qquad
y\in\calX_\rho^\theta,\qquad
p_m(y)\ge\xi,\quad D_{m,a}(y)\ge\xi
\]
for every $m\in H$ and $a\ne m$, while both $x^\star$ and $y$ assign zero
message probability outside $H$.
\end{assumption}

\begin{theorem}[Local linear robustification]
\label{thm:linear-robustification}
Under Assumption~\ref{ass:strictifiability}, for
$0<\gamma\le\xi$,
\[
0\le V_\rho^\theta-V_{\rho,\gamma}^\theta
\le\frac{\gamma}{\xi}
\]
when $u_S^\theta\in[0,1]$.  This order is worst-case tight: for every
$\rho\in(0,1]$ there is a two-state, two-action instance with
\[
V_\rho^\theta=V_\rho^{\theta,+}=1,\qquad
V_{\rho,\gamma}^\theta=\frac{1}{1+\gamma}
\]
for $0<\gamma\le\rho/(2-\rho)$.
\end{theorem}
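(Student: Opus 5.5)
The upper bound comes from a convex-combination argument, and tightness from an explicit instance.

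\textbf{Upper bound.} Let $x^\star$ and $y$ be as in Assumption~\ref{ass:strictifiability}, and for $t\in[0,1]$ set $x_t=(1-t)x^\star+ty$.

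First, both $\calX_\rho^\theta$ and the state-marginal constraint are convex, so $x_t\in\calX_\rho^\theta$.

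Second, $D_{m,a}$ and $p_m$ are linear. Since $x^\star\in\calF_{\rho,0}^\theta$, we have $D_{m,a}(x^\star)\ge0$ for every $m$ with $p_m(x^\star)>0$. When $p_m(x^\star)=0$, every cell $x^\star_{\omega m}$ vanishes, so $D_{m,a}(x^\star)=0$ as well. Hence, for $m\in H$,
\[
D_{m,a}(x_t)\ \ge\ t\xi .
\]

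Third, we need $p_m(x_t)\le 1$. That gives $\gamma p_m(x_t)\le\gamma$, so the constraint $D_{m,a}(x_t)\ge\gamma p_m(x_t)$ holds as soon as $t\xi\ge\gamma$. Messages outside $H$ have $p_m(x_t)=0$ and impose vacuous constraints.

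Choosing $t=\gamma/\xi\le1$ therefore gives $x_t\in\calF_{\rho,\gamma}^\theta$. Because $U_S^\theta$ is linear and $U_S^\theta(y)\ge0$,
\[
V^\theta_{\rho,\gamma}\ \ge\ U_S^\theta(x_t)\ \ge\ (1-t)V_\rho^\theta\ \ge\ V_\rho^\theta-t ,
\]
where the last step uses $V_\rho^\theta\le1$. This yields the bound $\gamma/\xi$. The lower bound $0\le V_\rho^\theta-V_{\rho,\gamma}^\theta$ follows because the feasible sets are nested.

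\textbf{Tightness.} I would build the instance using the paper's strict inequality in the form $D_{m,a}\ge\gamma p_m$. Take states $\{0,1\}$ and a receiver payoff $u_R(a,\omega)=\1[a=\omega]$. The sender always prefers action $1$, so $B_\omega^{S,\theta}=\{1\}$ and message $0$ can only come from the binding kernel, with mass at most $\rho\mu(\omega)$. Choose $\mu$ so that the receiver's default is $0$, e.g.\ $\mu(1)=1/2$ or slightly below.

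Sending message $1$ in state $1$ and a mass $s$ of message $1$ in state $0$ gives obedience slack $\mu(1)-s\ge\gamma(\mu(1)+s)$. Hence $s\le\mu(1)(1-\gamma)/(1+\gamma)$, and the sender's value is $\mu(1)+s$. Normalizing payoffs so that the weak optimum equals $1$ would give value $1/(1+\gamma)$.

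The constraint that the state-$0$ message-$0$ mass $\mu(0)-s$ stays at most $\rho\mu(0)$ is what limits the range to $\gamma\le\rho/(2-\rho)$. With $\mu(1)=1/2$, this constraint reads $s\ge(1-\rho)/2$, and comparing with $s=\tfrac12(1-\gamma)/(1+\gamma)$ gives exactly $\gamma\le\rho/(2-\rho)$. So the prior is $\mu=(1/2,1/2)$.

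I expect the main obstacle to be matching the exact stated values, specifically $V=1$ rather than $\mu(1)+s$. This requires choosing the sender payoff, for instance $u_S^\theta$ scaled by $1/\mu(1)$ or a payoff that also rewards state $0$, so that the weak optimum is exactly $1$ and the margin-$\gamma$ optimum is $1/(1+\gamma)$.

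That optimum must be verified by a two-variable LP. The check has three parts: the message-$0$ obedience constraint is slack, the message-$1$ constraint binds, and the $\rho$-support constraint is inactive throughout the range $\gamma\le\rho/(2-\rho)$.
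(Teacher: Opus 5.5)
Your upper bound is the paper's argument: the same mixture $(1-\gamma/\xi)x^\star+(\gamma/\xi)y$, the chain $D_{m,a}\ge\gamma\ge\gamma p_m$ on $H$, vacuous constraints off $H$, and $u_S^\theta\in[0,1]$. Your tightness instance (uniform prior, $u_R=\1[a=\omega]$, sender strictly preferring action $1$ in both states) is also exactly the paper's. Your derivation of the range $\gamma\le\rho/(2-\rho)$ from $s\ge(1-\rho)/2$ is correct.

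The ``main obstacle'' you flag does not exist, and the fix you propose would break the instance. Take $u_S^\theta(1,\omega)=1$, $u_S^\theta(0,\omega)=0$ and $\mu(1)=\tfrac12$. Your own value formula $\mu(1)+s$ then already gives $V_\rho^\theta=1$: weak obedience permits $s=\mu(1)=\tfrac12$, and the box constraint $s\ge(1-\rho)/2$ is slack there. It also gives $V_{\rho,\gamma}^\theta=\tfrac12+\tfrac12\cdot\tfrac{1-\gamma}{1+\gamma}=\tfrac1{1+\gamma}$. Scaling $u_S^\theta$ by $1/\mu(1)=2$ would push payoffs outside $[0,1]$ and double both values.

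Two small steps are also missing.
\begin{enumerate}[leftmargin=1.8em,label=(\arabic*)]
\item You optimized only over schemes with $\phi(1\mid 1)=1$. Write $\alpha_\omega=\phi(1\mid\omega)$. Robust obedience of recommendation $1$ is $\alpha_0\le\frac{1-\gamma}{1+\gamma}\alpha_1$, so every feasible scheme has value $(\alpha_0+\alpha_1)/2\le\alpha_1/(1+\gamma)\le 1/(1+\gamma)$. This is what makes $1/(1+\gamma)$ the optimum rather than merely a lower bound.
\item The statement also asserts $V_\rho^{\theta,+}=1$. This follows from $1/(1+\gamma)=V_{\rho,\gamma}^\theta\le V_\rho^{\theta,+}\le V_\rho^\theta=1$ together with $1/(1+\gamma)\to 1$. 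Alternatively, use the strictly obedient schemes $\alpha_1=1$, $\alpha_0=1-\varepsilon$, as the paper does.
\end{enumerate}
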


The proof mixes a weak optimum with a support-preserving strict anchor using
weight $\gamma/\xi$; the matching lower-order example is in
\Cref{app:frontier-algorithm-proofs}.  These results are unchanged
mathematically from the reduced-form paper, but their interpretation changes:
the margin protects receiver optimality and, holding message frequency and
type separation fixed, shortens the stated calibration bound.

\subsection{A nonparametric empirical audit}

The Bayesian entry rate in
\eqref{eq:bayesian-sample-complexity} is parametric: it depends on separation of
the finite candidate family.  The original empirical estimator gives a
different, model-agnostic audit rate.  Let
$d=|\Om|$, $n=|\M|$, $C_d=\max\{1,2^d-2\}$, and suppose the true reduced form
$P$ has on-path
support $S$ and
$p_{\min}^\circ=\min_{m\in S}p_m(P)>0$.

\begin{proposition}[Uniform empirical posterior audit]
\label{prop:empirical-audit}
For every $\varepsilon>0$,
\begin{align}
\Pr_{\theta^\circ}\left[
\max_{m\in S}
\|\widehat\beta_N(\cdot\mid m)-\beta_P(\cdot\mid m)\|_1
>\varepsilon
\right]
\le{}&
n\exp\left(-\frac{Np_{\min}^\circ}{8}\right)
\nonumber\\
&{}+nC_d
\exp\left(-\frac{Np_{\min}^\circ\varepsilon^2}{4}\right).
\label{eq:empirical-audit-bound}
\end{align}
Consequently, if $P$ has receiver margin $\gamma$ and
$\varepsilon=\gamma/(2R_R)$, it suffices for failure probability at most
$\eta$ that
\begin{equation}
N\ge
\max\left\{
\frac{8}{p_{\min}^\circ}\log\frac{2n}{\eta},
\frac{16R_R^2}{p_{\min}^\circ\gamma^2}
\log\frac{2nC_d}{\eta}
\right\}.
\label{eq:empirical-sample-complexity}
\end{equation}
\end{proposition}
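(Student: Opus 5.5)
The plan has two layers: a count-concentration event that guarantees enough observations of every on-path message, and, conditional on the counts, a uniform deviation bound for the empirical conditional distributions. First I fix $m\in S$. With $N_N(m)\sim\mathrm{Bin}(N,p_m(P))$ and $p_m(P)\ge p_{\min}^\circ$, a multiplicative Chernoff bound gives
\[
\Pr\bigl(N_N(m)<Np_m(P)/2\bigr)\le\exp\bigl(-Np_m(P)/8\bigr)\le\exp\bigl(-Np_{\min}^\circ/8\bigr).
\]
A union bound over at most $n$ messages then produces the first term of \eqref{eq:empirical-audit-bound}. This low-count event also covers $N_N(m)=0$, which is where $\widehat\beta_N(\cdot\mid m)$ was set arbitrarily.

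Next I condition on the full vector of message labels $(m_t)_{t=1}^N$. Under $(x^{\theta^\circ})^{\otimes N}$, the states attached to the $N_N(m)$ indices with $m_t=m$ are i.i.d.\ with law $\beta_P(\cdot\mid m)$, independent across messages. To handle the $\ell_1$ deviation I use the identity
\[
\|\widehat\beta-\beta\|_1=2\max_{A\subseteq\Om}\bigl(\widehat\beta(A)-\beta(A)\bigr).
\]
The sets $A=\varnothing$ and $A=\Om$ contribute zero, so only $2^d-2$ nontrivial sets matter (and at least one, which is why $C_d=\max\{1,2^d-2\}$). For each such $A$, Hoeffding's one-sided inequality with $k=N_N(m)$ samples gives
\[
\Pr\bigl(\widehat\beta(A)-\beta(A)>\varepsilon/2\bigr)\le\exp(-k\varepsilon^2/2).
\]
On the complement of the low-count event, $k\ge Np_{\min}^\circ/2$, so this is at most $\exp(-Np_{\min}^\circ\varepsilon^2/4)$. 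A union bound over the $C_d$ sets and the $n$ messages yields the second term.

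The main care point is making this conditioning legitimate. I will bound the joint event $\{\text{deviation}>\varepsilon\}\cap\{N_N(m)\ge Np_m(P)/2\}$ by conditioning on the labels and applying Hoeffding with a deterministic $k$ that satisfies the lower bound. This avoids any random-sample-size issue, and the two terms then add directly.

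For the consequence, I use the obedience margin on the true type. For $m\in S$ and $a\ne m$, the margin gives
\[
\sum_\omega\beta_P(\omega\mid m)[u_R(m,\omega)-u_R(a,\omega)]\ge\gamma.
\]
Replacing $\beta_P$ by $\widehat\beta_N$ changes this expectation by at most $\tfrac12R_R\|\widehat\beta_N-\beta_P\|_1$. This holds because the integrand's range has width at most $2R_R$, and for signed measures with total mass zero the change is bounded by half the range times the $\ell_1$ norm. So at $\varepsilon=\gamma/(2R_R)$ the empirical gap stays at least $3\gamma/4$, which is the interpretive point; even the cruder bound $R_R\varepsilon$ gives a gap of $\gamma/2$. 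Finally I choose $N$ so that each term of \eqref{eq:empirical-audit-bound} is at most $\eta/2$. Solving $n\exp(-Np_{\min}^\circ/8)\le\eta/2$ gives the first entry of the max in \eqref{eq:empirical-sample-complexity}. Substituting $\varepsilon^2=\gamma^2/(4R_R^2)$ into $nC_d\exp(-Np_{\min}^\circ\varepsilon^2/4)\le\eta/2$ gives $N\ge\frac{16R_R^2}{p_{\min}^\circ\gamma^2}\log\frac{2nC_d}{\eta}$, which is the second entry.
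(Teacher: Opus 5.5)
Your argument is correct and follows the paper's route: a multiplicative Chernoff bound for each count $N_N(m)$, the finite-alphabet bound $C_d e^{-k\varepsilon^2/2}$, a union bound over $m\in S$, and then each term set to at most $\eta/2$. The paper cites the finite-alphabet bound from Weissman et al.\ as a black box, whereas you re-derive it from $\|\widehat\beta-\beta\|_1=2\max_A(\widehat\beta(A)-\beta(A))$ and one-sided Hoeffding, and your conditioning on the label vector makes the split into the low-count event and its complement more explicit than the paper does.

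One slip is in the interpretive remark, not in the proposition itself. Half the range $2R_R$ times the $\ell_1$ norm is $R_R\|\widehat\beta_N-\beta_P\|_1$, not $\tfrac12R_R\|\widehat\beta_N-\beta_P\|_1$, and $R_R\|\widehat\beta_N-\beta_P\|_1$ is tight (take two states with $u_R(m,\omega)-u_R(a,\omega)=\pm R_R$). So the claimed $3\gamma/4$ gap is false in general. Your ``cruder'' bound $R_R\varepsilon=\gamma/2$ is in fact the sharp one, and it is exactly what the paper uses.
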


The two terms in \eqref{eq:empirical-audit-bound} respectively ensure that each
message is observed often enough and apply finite-alphabet $L_1$
concentration conditional on that count~\cite{Weissman2003}.  The resulting
$\widetilde O(R_R^2/(p_{\min}^\circ\gamma^2))$ rate is not the same object as
the Hellinger-separated Bayesian rate.  In particular,
$\widehat\beta_N$ does not replace $q_N$ in the PBE proof.  A conservative
empirical implementation may require both the empirical audit and the exact
predictive test \eqref{eq:acceptance-set}; only the latter certifies every
accepted history.

\subsection{Virtual prefix errors and the retained pseudo-regret bound}

To preserve the original online statistic without turning Stage I into
strategic play, imagine the following payoff-neutral diagnostic.  At date $t$,
the receiver first observes the current recommendation $m_t$, computes a
\emph{virtual} greedy action using only the preceding record $Z_{t-1}$, and only
then has $\omega_t$ revealed and added to the record.  The action is not taken
and has no payoff.  Let $M_N(P)$ be the number of dates at which that virtual
action differs from the current recommendation, for $N\ge1$.  For each on-path
$m$, let
\[
\gamma_m(P)
=\min_{a\ne m}
\sum_\omega\beta_P(\omega\mid m)
[u_R(m,\omega)-u_R(a,\omega)].
\]

\begin{proposition}[Virtual stabilization and conditional inference loss]
\label{prop:virtual-regret}
Suppose $n\ge2$ and $\gamma_m(P)>0$ for every $m\in S$.  There is a universal
constant $C$ such that
\begin{equation}
\E[M_N(P)]
\le
\sum_{m\in S}
\min\left\{
Np_m(P),
\frac{CR_R^2}{\gamma_m(P)^2}
\left[d+\log(nN)\right]
\right\}+1.
\label{eq:virtual-mistake-bound}
\end{equation}
If $P\in\calF_{\rho,\gamma}^{\theta^\circ}$, this is at most
$CnR_R^2[d+\log(nN)]/\gamma^2+1$.  The receiver's conditional pseudo-regret and
the sender's realized positive virtual shortfall, both measured relative to
direct following with payoffs in $[0,1]$, are bounded in expectation by the same
expected error count.  Neither is an actual Stage-I payoff because the
diagnostic action is never taken.
\end{proposition}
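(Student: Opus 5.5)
The plan is to reduce the virtual mistake count to a per-message sum of events in which the empirical conditional posterior built from the prefix is far from $\beta_P(\cdot\mid m)$. I will then bound each message's contribution in two ways, trivially by $Np_m(P)$ and by a count-indexed concentration sum.

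\emph{Reduction.} At date $t$ with $m_t=m\in S$, the virtual greedy action best-responds to $\widehat\beta_{t-1}(\cdot\mid m)$. Since $m$ is the unique best response to $\beta_P(\cdot\mid m)$ with gap $\gamma_m(P)$, the same expected-utility comparison used in the proof of Lemma~\ref{lem:predictive-margin} shows that a mistake requires
\[
\|\widehat\beta_{t-1}(\cdot\mid m)-\beta_P(\cdot\mid m)\|_1\ge\gamma_m(P)/R_R,
\]
or else no prior observation of $m$ (the arbitrary default distribution). Off-support messages have probability zero. Hence $M_N(P)\le\sum_{m\in S}M_N^m$, where $M_N^m$ counts dates with $m_t=m$ at which this deviation event holds.

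\emph{Per-message bound.} Index the dates with $m_t=m$ by $k=1,2,\dots$. By the i.i.d.\ structure, the state at the $k$-th occurrence of $m$ is drawn from $\beta_P(\cdot\mid m)$ independently of the previous $k-1$ such states. The empirical posterior used at the $k$-th occurrence is the empirical distribution of those $k-1$ states. By the Weissman et al.\ $L_1$ inequality already used in Proposition~\ref{prop:empirical-audit},
\[
\Pr\bigl(\text{mistake at occurrence }k\bigr)\le C_d\exp\bigl(-(k-1)\gamma_m^2/(2R_R^2)\bigr).
\]
Summing over $k$ up to the number of occurrences, I will truncate at $k_0\asymp (R_R^2/\gamma_m^2)[d+\log(nN)]$. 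Up to $k_0$, each term is bounded by $1$. Beyond $k_0$, each term is at most $1/(nN)$, so the total over all messages and dates is at most $1$; this gives the additive ``$+1$''. Also $\E M_N^m\le\E N_N(m)=Np_m(P)$, which gives the minimum. Since $\log C_d\le d\log2$, the term $d$ is absorbed into $C$. The main obstacle is carrying out this step without a union-type loss over random counts; indexing by occurrence number (an optional-skipping argument) avoids conditioning on the count and keeps the constant universal.

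\emph{Uniform bound and regret.} If $P\in\calF_{\rho,\gamma}^{\theta^\circ}$, then $\gamma_m(P)\ge\gamma$ for each $m\in S$, because $D_{m,a}(P)\ge\gamma p_m(P)$ divided by $p_m(P)$ gives a conditional gap of at least $\gamma$. Bounding each minimum by its second term and using $|S|\le n$ yields the stated $CnR_R^2[d+\log(nN)]/\gamma^2+1$. For the pseudo-regret, at each date the conditional loss of the virtual action relative to following $m$ is zero unless a mistake occurs, and it is at most $1$ since payoffs lie in $[0,1]$. The same holds for the sender's positive shortfall relative to the direct-follow payoff $u_S^{\theta^\circ}(m_t,\omega_t)$. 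Taking expectations bounds both quantities by $\E[M_N(P)]$. Finally, I will remark that both are virtual quantities, since no diagnostic action is ever taken.
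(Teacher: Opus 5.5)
Your proposal is correct and follows essentially the same route as the paper: occurrence-indexed Weissman concentration for the prefix empirical posterior, a cutoff of order $R_R^2\gamma_m(P)^{-2}[d+\log(nN)]$, and the trivial bound $Np_m(P)$, combined via $\min\{a,b+c\}\le\min\{a,b\}+c$ (or Jensen). The only differences are bookkeeping. The paper forms a single clean event by a union bound over $(m,k)$ with per-event probability $1/(nN^2)$ and charges at most $N$ errors to its probability-$1/N$ complement, whereas you sum the per-occurrence failure probabilities (each at most $1/(nN)$ beyond the cutoff) directly; your deviation threshold $\gamma_m(P)/R_R$ is also slightly sharper than the paper's $\gamma_m(P)/(2R_R)$.
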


The result is a time-uniform version of the empirical concentration argument:
after a message has occurred on the order of
$R_R^2[d+\log(nN)]/\gamma_m(P)^2$ times, every later virtual decision is
correct on a high-probability event.  We call
\eqref{eq:virtual-mistake-bound} a stabilization statistic.  It is neither
Stage-I utility regret nor convergence of a PBE.  In a separate repeated
uncertified deployment model, the same expression can bound realized learning
loss, but then sender incentives during data generation must be modeled anew.

\subsection{What a sample-aware designer must trade off}

The margin $\gamma$ is only one determinant of activation.  Holding message
frequency and cross-type separation fixed, a larger margin shortens the stated
calibration bounds.  In a design problem those quantities may change together.
The nonparametric
audit also depends on $p_{\min}(x)$, while Bayesian type learning depends on
cross-type separation.  For a target family, a more faithful batch objective
has the schematic form
\[
\text{design loss}
{}+\Pr(\text{rejection or certification failure})
\times\text{payoff range}.
\]
Support-wise lower bounds $p_m(x)\ge\underline p$ remain linear once a support
is fixed.  Jointly choosing all type-contingent reduced forms to preserve a
common support and sufficient cross-type separation is a family-design problem,
however, and is not solved by optimizing each
$V_{\rho,\gamma}^\theta$ independently.

\section{Support-wise Computation and Binary Actions}
\label{sec:computation}
\subsection{Fixed-margin linear programming}

For a fixed structural type $\theta$ and fixed $(\rho,\gamma)$, feasibility of
$\calF_{\rho,\gamma}^\theta$ and, whenever it is nonempty, the value
$V_{\rho,\gamma}^\theta$ are obtained from a linear program in the
$|\Om||\M|$ entries of $x$.  The constraints are the state marginals,
nonnegativity, type-wise implementability
\eqref{eq:type-implementable-polytope}, and the linear obedience inequalities
\[
D_{m,a}(x)-\gamma p_m(x)\ge0.
\]
Thus a fixed-margin target can be computed in time polynomial in the explicit
finite input and encoding length.  The more delicate problem is diagnosing the
open positive-margin frontier and whether its supremum is attained.

\subsection{Support-wise frontier diagnosis}

Fix $\theta$ and let $H\subseteq\M$ be nonempty.  Define
\begin{align*}
K_H^\theta
=\{x\in\calX_\rho^\theta:\;&
p_m(x)=0\ \forall m\notin H,\\
&D_{m,a}(x)\ge0\ \forall m\in H,\ a\ne m\},
\\
K_H^{\theta,+}
=\{x\in K_H^\theta:\;&
p_m(x)>0,\ D_{m,a}(x)>0\
\forall m\in H,\ a\ne m\}.
\end{align*}
The positive-margin set is the union of
$K_H^{\theta,+}$ over nonempty supports.

For each $H$, use three LP oracles.
\begin{enumerate}[leftmargin=1.8em,label=\textup{LP\arabic*:}]
\item Maximize a common strictness variable $\xi$ over $K_H^\theta$ subject
to $p_m(x)\ge\xi$ and $D_{m,a}(x)\ge\xi$ for all relevant pairs.  Denote the
optimum by $\xi_{H,\theta}^{\mathrm{feas}}$.
\item Compute the closure value
$W_H^\theta=\max_{x\in K_H^\theta}U_S^\theta(x)$.
\item Repeat LP1 on the LP2 optimal face
$\{x\in K_H^\theta:U_S^\theta(x)=W_H^\theta\}$; denote the optimum by
$\xi_{H,\theta}^{\mathrm{opt}}$.
\end{enumerate}

\begin{theorem}[Support-wise positive-margin characterization]
\label{thm:supportwise-characterization}
If $\calF_\rho^{\theta,+}\ne\varnothing$, then
\[
V_\rho^{\theta,+}
=\max_{H:\xi_{H,\theta}^{\mathrm{feas}}>0}W_H^\theta.
\]
Moreover, the positive-margin supremum is attained if and only if some support
$H$ satisfies
\[
\xi_{H,\theta}^{\mathrm{feas}}>0,\qquad
W_H^\theta=V_\rho^{\theta,+},\qquad
\xi_{H,\theta}^{\mathrm{opt}}>0.
\]
\end{theorem}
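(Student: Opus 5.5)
The plan is to exploit the decomposition $\calF_\rho^{\theta,+}=\bigcup_{H\ne\varnothing}K_H^{\theta,+}$ together with a relative-interior argument inside each closed polyhedron $K_H^\theta$. First, for each $H$, I will show that $K_H^{\theta,+}\ne\varnothing$ if and only if $\xi_{H,\theta}^{\mathrm{feas}}>0$. If some $x\in K_H^{\theta,+}$ exists, take $\xi=\min\{p_m(x),D_{m,a}(x)\}>0$ over the finitely many relevant pairs; this is LP1-feasible, so the LP1 optimum is positive. Conversely, an LP1 solution with $\xi>0$ lies in $K_H^{\theta,+}$. LP1 is bounded because $p_m(x)\le1$, so its optimum is attained.

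Second, I will show that whenever $K_H^{\theta,+}\ne\varnothing$, we have $\sup_{K_H^{\theta,+}}U_S^\theta=W_H^\theta$. The inequality $\le$ is immediate from $K_H^{\theta,+}\subseteq K_H^\theta$. For $\ge$, take an LP2 maximizer $x^\star$ and a strict point $y\in K_H^{\theta,+}$, and form $x_t=(1-t)x^\star+ty$ for $t\in(0,1]$. Convexity of $K_H^\theta$ and linearity of $p_m$ and $D_{m,a}$ give $p_m(x_t)\ge t\,p_m(y)>0$ and $D_{m,a}(x_t)\ge t\,D_{m,a}(y)>0$ for $m\in H$. The support condition $p_m=0$ for $m\notin H$ is preserved, so $x_t\in K_H^{\theta,+}$. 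Since $U_S^\theta(x_t)\to W_H^\theta$ as $t\to0^+$, the claim follows. Taking the maximum over the finitely many $H$ with $\xi^{\mathrm{feas}}_{H,\theta}>0$, and noting that this index set is nonempty because $\calF_\rho^{\theta,+}\ne\varnothing$, yields the value formula.

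One check is needed before this works: $\calF_\rho^{\theta,+}$ must coincide with $\bigcup_H K_H^{\theta,+}$ when $H$ is taken to be the exact support. For $x\in\calF^{\theta}_{\rho,\gamma}$ with $\gamma>0$, set $H=\{m:p_m(x)>0\}$; then $D_{m,a}(x)\ge\gamma p_m(x)>0$ on $H$. Conversely, for $x\in K_H^{\theta,+}$, the margin $\gamma=\min_{m\in H,a\ne m}D_{m,a}(x)/p_m(x)>0$ works. For $m\notin H$ we have $x_{\omega m}=0$ for all $\omega$, so $D_{m,a}(x)=0=\gamma p_m(x)$ and the off-support constraints are vacuous.

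Third, I will prove the attainment characterization. For ($\Leftarrow$), an LP3 solution with $\xi^{\mathrm{opt}}_{H,\theta}>0$ lies in $K_H^{\theta,+}\subseteq\calF_\rho^{\theta,+}$ and has value $W_H^\theta=V_\rho^{\theta,+}$, so the supremum is attained. For ($\Rightarrow$), suppose $x\in\calF_\rho^{\theta,+}$ attains $V_\rho^{\theta,+}$, and let $H$ be its exact support, so $x\in K_H^{\theta,+}$. By the first step, $\xi^{\mathrm{feas}}_{H,\theta}>0$. By the second step, $V_\rho^{\theta,+}=U_S^\theta(x)\le W_H^\theta\le V_\rho^{\theta,+}$, so $W_H^\theta=V_\rho^{\theta,+}$ and $x$ lies on the LP2 optimal face. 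Then $x$ with $\xi=\min\{p_m(x),D_{m,a}(x)\}>0$ is feasible for LP3, giving $\xi^{\mathrm{opt}}_{H,\theta}>0$. The LP3 optimum exists because the optimal face is a nonempty polytope.

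The main obstacle is the second step, the closure argument, because it hides two subtleties. One is that the open set $K_H^{\theta,+}$ must be dense in the closed set $K_H^\theta$ whenever it is nonempty. The convex-combination path handles this, since every defining strict inequality is linear and holds strictly at $y$. The other is that $W_H^\theta$ is attained, which holds because $K_H^\theta$ is a closed subset of the compact simplex of arrays with marginal $\mu$ and is nonempty when $K_H^{\theta,+}$ is. I expect no difficulty beyond bookkeeping of the support constraints under mixing.
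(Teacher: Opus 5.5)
Your proposal is correct and follows essentially the same route as the paper: mixing with the strict LP1 point shows $K_H^{\theta,+}$ is dense in $K_H^\theta$, which gives the value formula over strict-feasible supports, and attainment is argued in both directions through the exact support of an attaining point and the LP3 optimum. Beyond the paper's argument, you also verify the decomposition $\calF_\rho^{\theta,+}=\bigcup_H K_H^{\theta,+}$ and that the LP optima are attained, both of which the paper asserts without proof.
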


The proof mixes any weak point on a support with the strict point returned by
LP1, showing that $K_H^{\theta,+}$ is dense in $K_H^\theta$ whenever LP1 is
strictly feasible.  LP3 then separates attainment from a value attained only
on the zero-margin boundary.

\begin{algorithm}[H]
\caption{Type-wise frontier selection}
\label{alg:frontier}
\small
\textbf{Input.} A fixed structural type $\theta$.
\begin{enumerate}[leftmargin=1.8em,label=\arabic*.,itemsep=0.15em,topsep=0.3em]
\item \textbf{Weak benchmark.} Solve the weak LP for $V_{\rho,0}^\theta$.
\item \textbf{Support screening.} Enumerate nonempty $H\subseteq\M$ and
discard supports for which LP1 has
$\xi_{H,\theta}^{\mathrm{feas}}\le0$.
\item \textbf{Strict optimization.} Run LP2 and LP3 on the remaining supports.
\item \textbf{Frontier output.} Return $V_\rho^{\theta,+}$, an attaining
strict optimizer when one exists, or a nonattainment certificate.
\item \textbf{Target margin.} For a desired deployment margin, solve the
direct fixed-$\gamma$ LP, report infeasibility if its feasible set is empty,
and otherwise decompose its output with
Lemma~\ref{lem:typewise-implementability}.
\end{enumerate}
\end{algorithm}

Each oracle is polynomial for fixed $H$, but support enumeration may inspect
$2^{|\M|}-1$ supports.  Algorithm~\ref{alg:frontier} is therefore an exact
finite characterization, not a polynomial-time result in the number of
actions.  When the high-probability PBE theorem is imposed on a family of
types, the independently computed targets must additionally be checked for the
common-support requirement in
Assumption~\ref{ass:common-message-support}.  The type-wise oracle does not
solve that joint family-design problem.

\subsection{Binary actions: a bounded fractional-knapsack oracle}

The fixed-margin binary case admits a sharper specialization under additional
monotonicity assumptions.  Fix $\theta$, let $\A=\M=\{0,1\}$, and suppose
\[
s_\omega^\theta
:=u_S^\theta(1,\omega)-u_S^\theta(0,\omega)>0
\quad\forall\omega.
\]
Thus $B_{\omega}^{S,\theta}=\{1\}$.  Let
\[
z_\omega=\phi^\theta(1\mid\omega),\qquad
d_\omega=u_R(1,\omega)-u_R(0,\omega),\qquad
\underline z_\rho=1-\rho.
\]
Type-wise implementability is the box
$\underline z_\rho\le z_\omega\le1$.  Assume also that the receiver robustly prefers action
$0$ at the prior:
\begin{equation}
\sum_\omega\mu(\omega)(-d_\omega-\gamma)\ge0.
\label{eq:prior-zero-robust}
\end{equation}
Under \eqref{eq:prior-zero-robust}, obedience of recommendation $1$ implies
obedience of recommendation $0$, so only the former constrains the optimum.

Partition the states as
\[
\Om_\gamma^+=\{\omega:d_\omega\ge\gamma\},\qquad
\Om_\gamma^-=\{\omega:d_\omega<\gamma\}
\]
and define
\begin{equation}
C_\gamma^\rho
=
\sum_{\omega\in\Om_\gamma^+}\mu(\omega)(d_\omega-\gamma)
-\underline z_\rho\sum_{\omega\in\Om_\gamma^-}\mu(\omega)(\gamma-d_\omega).
\label{eq:knapsack-capacity}
\end{equation}

\begin{proposition}[Type-wise binary knapsack oracle]
\label{prop:binary-knapsack}
Under the assumptions above, the fixed-$(\rho,\gamma)$ problem is
infeasible if $C_\gamma^\rho<0$.  If $C_\gamma^\rho\ge0$, every optimum sets
$z_\omega=1$ on $\Om_\gamma^+$ and writes
$z_\omega=\underline z_\rho+y_\omega$ on $\Om_\gamma^-$, where $y$ solves
\begin{equation}
\begin{array}{ll}
\displaystyle\max_{0\le y_\omega\le\rho}
&
\displaystyle\sum_{\omega\in\Om_\gamma^-}
\mu(\omega)s_\omega^\theta y_\omega
\\[0.8em]
\mathrm{subject\ to}
&
\displaystyle\sum_{\omega\in\Om_\gamma^-}
\mu(\omega)(\gamma-d_\omega)y_\omega
\le C_\gamma^\rho.
\end{array}
\label{eq:binary-knapsack}
\end{equation}
Sorting bad states in decreasing order of
$s_\omega^\theta/(\gamma-d_\omega)$ and filling to the cap $\rho$ gives an
optimum with at most one fractional state.  The running time is
$O(|\Om|\log|\Om|)$.
\end{proposition}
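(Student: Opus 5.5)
The plan is to rewrite the fixed-$(\rho,\gamma)$ LP in the variables $z_\omega=\phi^\theta(1\mid\omega)$, so that $x_{\omega1}=\mu(\omega)z_\omega$ and $x_{\omega0}=\mu(\omega)(1-z_\omega)$. By Lemma~\ref{lem:typewise-implementability} with $B^{S,\theta}_\omega=\{1\}$, the constraint $\sum_{m\ne1}x_{\omega m}\le\rho\mu(\omega)$ is exactly $\underline z_\rho\le z_\omega\le1$. The objective becomes $U_S^\theta=\sum_\omega\mu(\omega)u_S^\theta(0,\omega)+\sum_\omega\mu(\omega)s^\theta_\omega z_\omega$, a constant plus a strictly increasing linear function. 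The two obedience constraints are
\[
D_{1,0}-\gamma p_1=\sum_\omega\mu(\omega)z_\omega(d_\omega-\gamma)\ge0,
\qquad
D_{0,1}-\gamma p_0=\sum_\omega\mu(\omega)(1-z_\omega)(-d_\omega-\gamma)\ge0 .
\]

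\textbf{Step 1: recommendation $0$ is implied by recommendation $1$.} Adding the two inequalities and using $z_\omega(d_\omega-\gamma)+(1-z_\omega)(-d_\omega-\gamma)=-d_\omega-\gamma+2z_\omega d_\omega$, the sum of the left sides is $\sum\mu(-d_\omega-\gamma)+2\sum\mu z_\omega d_\omega$. This route is messy. A cleaner route is to write
\[
D_{0,1}-\gamma p_0=\sum\mu(-d_\omega-\gamma)-\sum\mu z_\omega(-d_\omega-\gamma)=\sum\mu(-d_\omega-\gamma)+\sum\mu z_\omega(d_\omega+\gamma).
\]
Then use $\sum\mu z_\omega(d_\omega+\gamma)=\sum\mu z_\omega(d_\omega-\gamma)+2\gamma\sum\mu z_\omega\ge0$ whenever the $1$-constraint holds, since $\gamma\ge0$. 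Combined with \eqref{eq:prior-zero-robust}, this gives the $0$-constraint. The off-support convention (vacuous when $p_0=0$) is consistent with this because the unnormalized form is used. The remaining problem is to maximize $\sum\mu s_\omega z_\omega$ over the box subject to $\sum\mu z_\omega(d_\omega-\gamma)\ge0$.

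\textbf{Step 2: reduction to the knapsack.} Each $\omega\in\Om_\gamma^+$ increases both the objective (since $s_\omega>0$) and the constraint slack, so any feasible point with $z_\omega<1$ there can be strictly improved by raising $z_\omega$ to $1$. Hence every optimum has $z_\omega=1$ on $\Om^+_\gamma$. This step needs $\mu(\omega)>0$, which holds by full support. Substituting $z_\omega=\underline z_\rho+y_\omega$ with $0\le y_\omega\le\rho$ on $\Om^-_\gamma$ turns the constraint into $\sum_{\Om^-}\mu(\gamma-d_\omega)y_\omega\le C^\rho_\gamma$. The problem is feasible iff the point $y=0$ is feasible, i.e.\ iff $C^\rho_\gamma\ge0$, because $y=0$ minimizes the left side and setting $z=1$ on $\Om^+$ maximizes the constraint slack. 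This gives both the infeasibility claim and \eqref{eq:binary-knapsack}.

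\textbf{Step 3: greedy solution.} The reduced problem is a bounded fractional knapsack with positive weights $\mu(\gamma-d_\omega)$ and positive values $\mu s_\omega$. The standard exchange argument applies. If an optimum puts mass on a state with a lower ratio $s_\omega/(\gamma-d_\omega)$ while a higher-ratio state is below its cap $\rho$, shift capacity from the former to the latter; this weakly increases the objective. So the greedy fill in decreasing ratio order is optimal, and it leaves at most one fractional state, namely the one where capacity runs out. Sorting costs $O(|\Om|\log|\Om|)$ and the fill is linear.

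The main obstacle is Step 1. It needs a careful sign bookkeeping showing that the $\gamma$-robust constraint on message $0$ follows from the one on message $1$ together with \eqref{eq:prior-zero-robust}, including the degenerate cases $p_0=0$ or $p_1=0$.
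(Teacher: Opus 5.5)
Your proposal is correct and follows the paper's proof essentially step for step. You use the implementability box from Lemma~\ref{lem:typewise-implementability}, and you show redundancy of recommendation $0$ through $\sum_\omega\mu(\omega)z_\omega(d_\omega+\gamma)=\sum_\omega\mu(\omega)z_\omega(d_\omega-\gamma)+2\gamma\sum_\omega\mu(\omega)z_\omega$ together with \eqref{eq:prior-zero-robust}, which is exactly Lemma~\ref{lem:binary-zero-redundant}; you then saturate on $\Om_\gamma^+$, substitute to obtain the capacity $C_\gamma^\rho$, and apply the greedy exchange argument, just as the paper does, with your infeasibility step slightly more explicit since you note that $z=1$ on $\Om_\gamma^+$, $y=0$ on $\Om_\gamma^-$ maximizes the recommendation-$1$ slack over the whole box.
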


The reduction is not a claim about every binary persuasion instance.  It uses
a fixed type, statewise strict sender preference for action $1$, the
prior-robust condition \eqref{eq:prior-zero-robust}, and a fixed margin.
State-dependent sender-best actions or sender payoff ties require the general
LP.  The oracle produces an installed design target; the equilibrium guarantee
still comes from \Cref{thm:post-calibration-pbe}.

\section{Computational Illustrations}
\label{sec:computational-evidence}
\subsection{Matched type-wise design instances}

We regenerate the inherited design diagnostics because the archived figures
were not accompanied by code, raw draws, a seed, or a solver tolerance.  The
new experiment fixes $\gamma=0.05$, sets $|\Om|=|\A|=n$ and $\mu$ uniform, and
draws receiver payoffs independently from $[0,1]$.  Sender payoff is
$\alpha u_R+(1-\alpha)\varepsilon$, where $\varepsilon$ is an independent
$[0,1]$ draw, $\alpha$ ranges over seven equally spaced alignment values, and
there are 16 replications per value.  This gives 112 types for each
$n\in\{2,\ldots,10\}$.  Crucially, the same type is solved at every
$\rho\in\{0,0.1,\ldots,0.9\}$ using the fixed-margin LP.  The experiment therefore
contains 10,080 matched LP evaluations, rather than 10,080 independent types.
For inference in the $n=4$ diagnostic, we treat the seven alignment levels as
fixed strata and target their equal-weight mixture.  Thus feasibility at each
$\rho$ estimates the average of the seven stratum-specific feasibility
probabilities.  Conditional sender value is the corresponding ratio of the
equal-weight feasible-value mixture to the equal-weight feasibility
probability, rather than an unqualified i.i.d. mean.

\begin{figure}[!htbp]
\centering
\begin{subfigure}{0.48\textwidth}
\centering
\includegraphics[width=\linewidth]{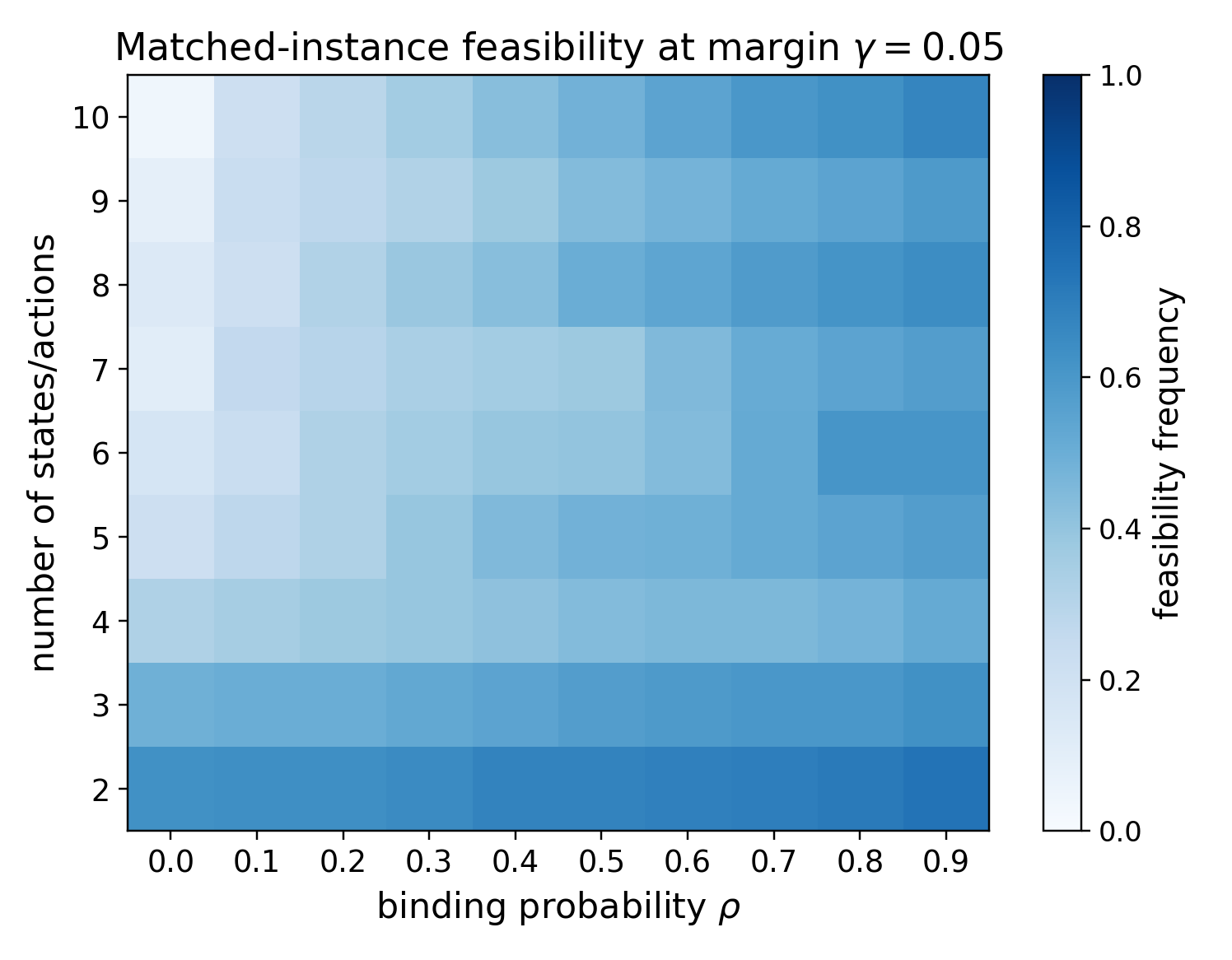}
\caption{Matched fixed-margin feasibility}
\end{subfigure}
\hfill
\begin{subfigure}{0.48\textwidth}
\centering
\includegraphics[width=\linewidth]{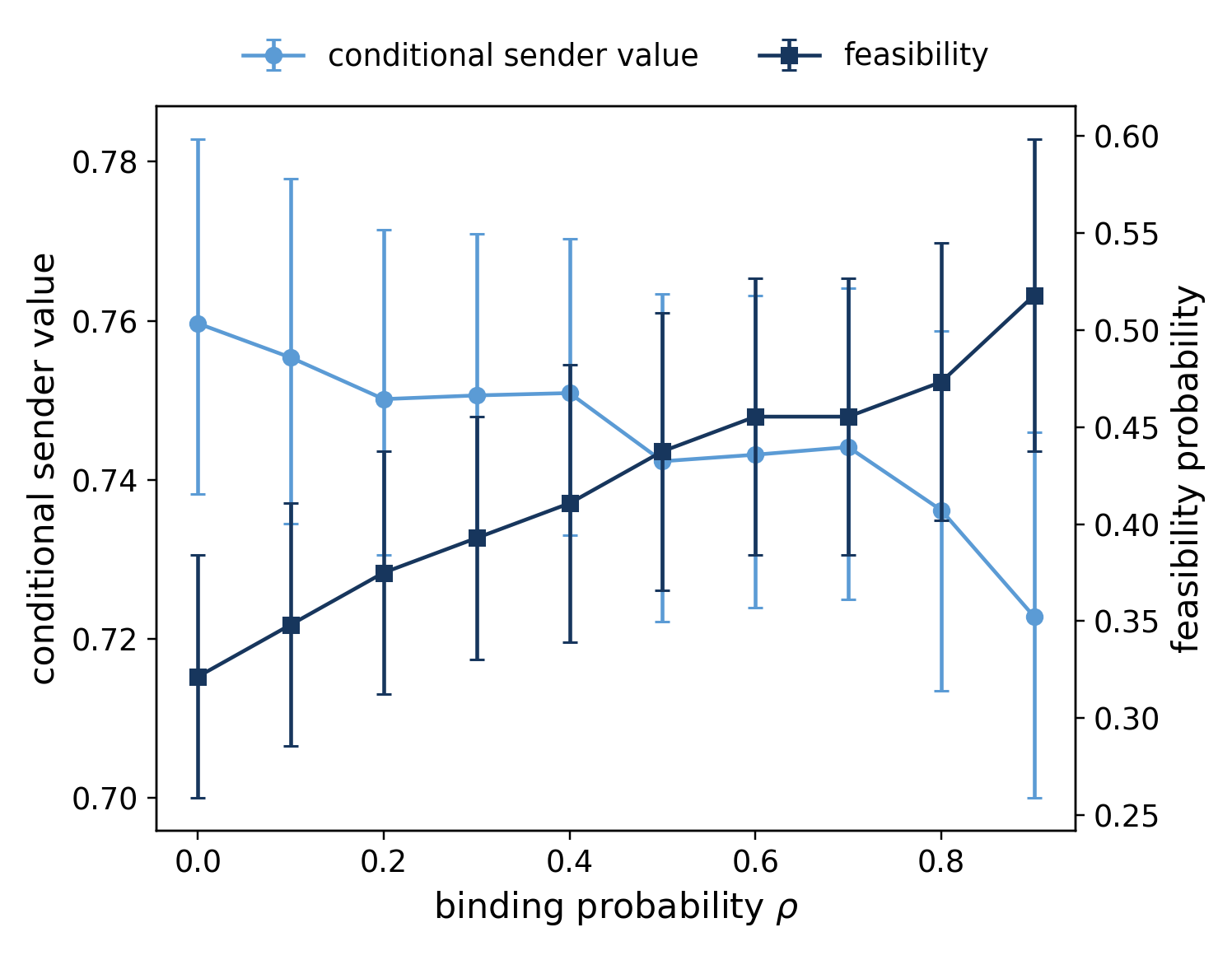}
\caption{The $n=4$ feasibility/value diagnostic}
\end{subfigure}
\caption{Type-wise design diagnostics at $\gamma=0.05$.  Panel (a) reports
feasibility frequency across the 112 matched types.  Since
$\calX_{\rho_1}^\theta\subseteq\calX_{\rho_2}^\theta$ for
$\rho_1\le\rho_2$, fixed-type feasibility and value are weakly increasing in
$\rho$; the code checks both implications.  Panel (b) reports pointwise 95\%
stratified percentile-bootstrap intervals from 20,000 draws.  Each draw
resamples the 16 payoff instances independently within each fixed alignment
stratum and carries an instance's entire $\rho$ path; the conditional-value
ratio, including its random denominator, is recomputed in the same draw.  The
conditional mean can fall when newly feasible, lower-value types enter the
conditioning cohort; this is a composition effect, not a failure of fixed-type
monotonicity.}
\label{fig:matched-design}
\end{figure}

These calculations illustrate the geometry and numerical behavior of the
type-wise design oracle.  They do not simulate calibration, sender learning, or
equilibrium play.  They also do not solve the joint cross-type design problem
needed to impose common support and statistical separation simultaneously.

\subsection{Payoff-neutral prefix stabilization}

For continuity with the empirical learning result, let
$\Om=\A=\{1,\ldots,n\}$, $\mu$ be uniform, and
$u_S(a,\omega)=u_R(a,\omega)=\1\{a=\omega\}$.  The controlled reduced form is
\[
\phi(m\mid\omega)
=
\begin{cases}
\gamma+(1-\gamma)/n,&m=\omega,\\
(1-\gamma)/n,&m\ne\omega.
\end{cases}
\]
It has $p_m=1/n$ and conditional receiver gap $\gamma$.  At
$\rho=1-\gamma$, it is implemented by a deterministic sender-best
discretionary kernel and a uniform binding kernel.  We use $\gamma=0.7$,
$n\in\{2,4,6,8\}$, 2,000 independent calibration paths, and the chronological
virtual rule defined before Proposition~\ref{prop:virtual-regret}.  No action is
taken and no payoff is earned during calibration.

\begin{figure}[!htbp]
\centering
\includegraphics[width=0.90\textwidth]{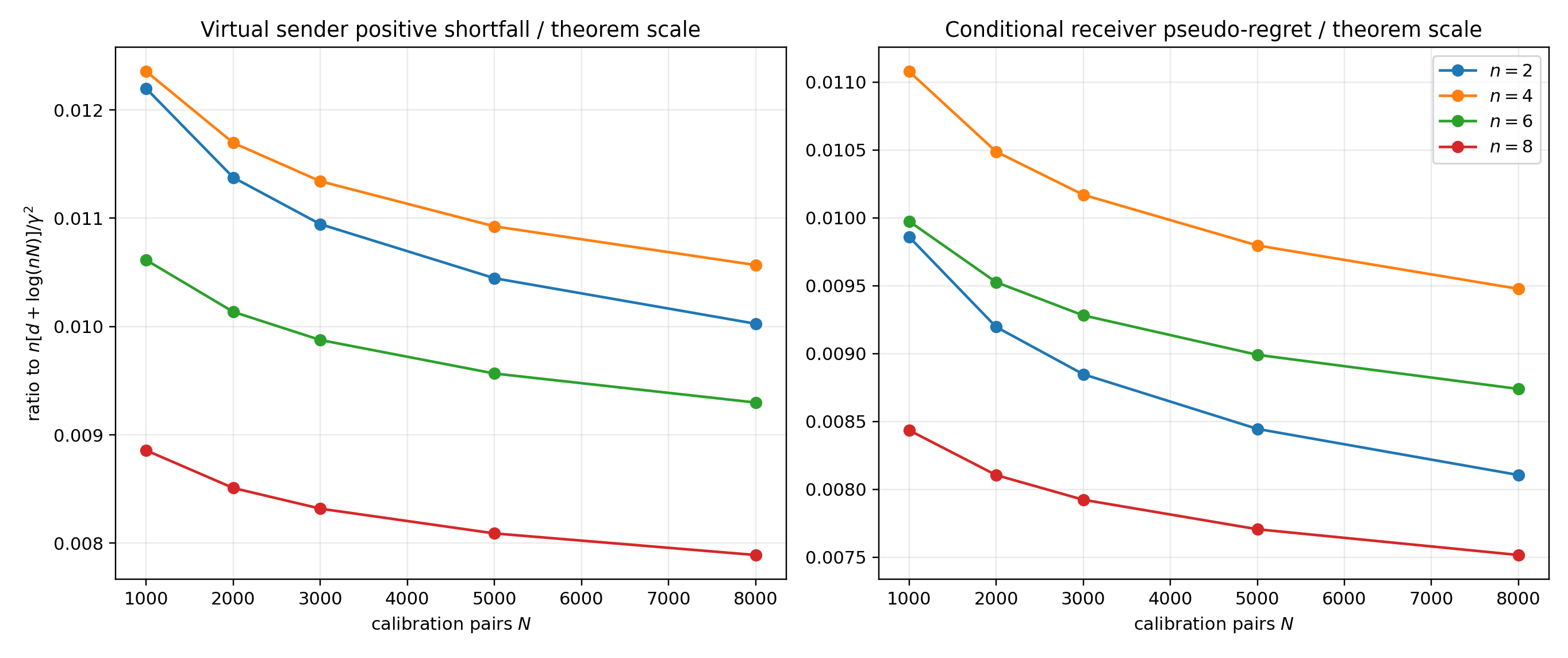}
\caption{Virtual prefix stabilization.  Both cumulative diagnostics are
normalized by the theorem-consistent scale
$n[d+\log(nN)]/\gamma^2$.  The plot illustrates
Proposition~\ref{prop:virtual-regret}; it is not evidence of Bayesian belief
consistency and does not depict a strategic Stage-I equilibrium.  The receiver
curve is conditional (virtual) pseudo-regret: in this symmetric instance, each
incorrect virtual action contributes its conditional expected loss $\gamma$.
It is not cumulative realized utility regret along the sampled state path.}
\label{fig:virtual-stabilization}
\end{figure}

\subsection{Posterior-predictive activation}

The final illustration targets the new equilibrium result.  There are two
equally likely types, two states, matching receiver utility, and common message
support.  The sender also strictly prefers the matching action in each state.
The true type recommends the matching action with probability $0.85$, giving
margin $\gamma=0.7$; the outside type does so with probability $0.35$ and is
not obedient.  Both are $\rho$-implementable at $\rho=0.7$.
For 10,000 certified paths, we compute the exact type posterior and accept only
when both predictive obedience gaps are at least $\gamma/2$.

\begin{figure}[!htbp]
\centering
\includegraphics[width=0.72\textwidth]{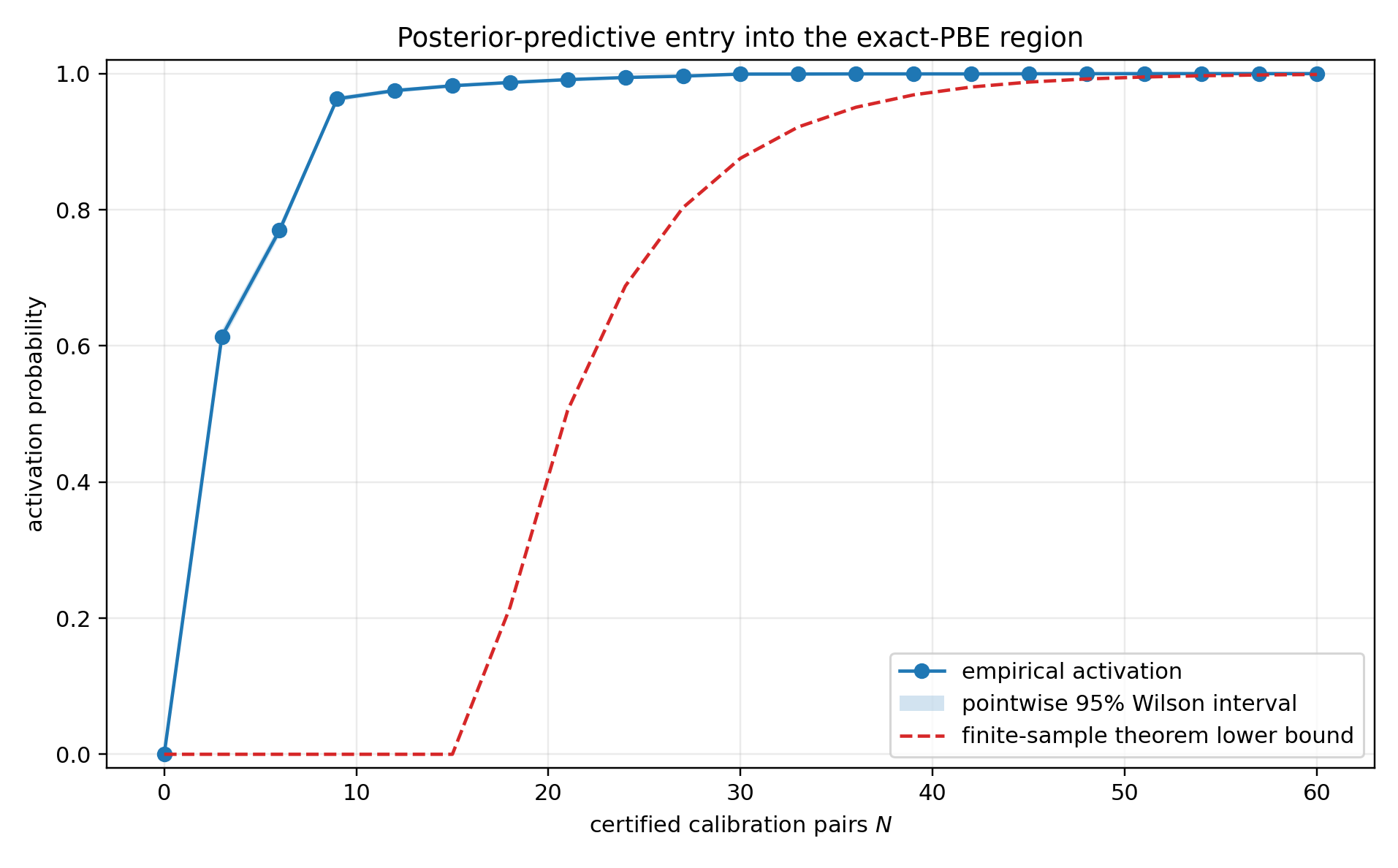}
\caption{Entry into the exact-PBE region.  The dashed curve is
$1-\min\{1,A_E(b^\circ)^N/\alpha^\circ\}$, the direct probability lower bound
from Lemmas~\ref{lem:class-concentration}--\ref{lem:predictive-margin}; the
shaded band is a pointwise 95\% Wilson interval.  Every accepted run satisfies the
deterministic hypotheses of Theorem~\ref{thm:post-calibration-pbe}.  The plotted
frequency is an activation probability, not an equilibrium-approximation
parameter.}
\label{fig:predictive-activation}
\end{figure}

\FloatBarrier

\section{Conclusion}
\label{sec:conclusion}
Opaque commitment creates both a statistical and a strategic problem.
At the population level, the calibration law identifies the receiver-facing
reduced form and the associated equivalence class in the finite type
dictionary.  If that class is a singleton, the type and its stored latent pair
are identified; otherwise only coordinates constant on the class are
identified.  Without additional decomposition restrictions, the same reduced
form may also admit multiple latent representations outside the finite
dictionary.  Finite samples support a predictive posterior and, under
separation, concentrate mass on the identified class.  This predictive result
supports equilibrium only after four
additional steps: beliefs must be constructed on the complete hidden node,
receiver obedience must hold under the posterior mixture of active types,
discretionary recommendations must be sender best, and zero-probability
messages must be completed by beliefs and best responses.

Under these conditions, every accepted record activates an exact PBE of the
fresh deployment continuation.  Finite-type separation, common recommendation
support, and a positive obedience margin make entry into that region likely; the
entry failure probability is not an equilibrium approximation parameter.  The
robust frontier and LP/knapsack results then describe a conditional design
tradeoff between sender value and certifiability.

The result is conditional.  We do not endogenize sender participation,
selection of $x^\theta$, certifier reliability, or policy installation; nor do
we study unrestricted cheap talk at $\rho=0$, equilibrium refinements,
strategic calibration, repeated manipulation, or reputation.  Exact
post-calibration PBE applies to feasible accepted transcripts: $1-\eta$ is an
entry probability, not an $\varepsilon$ equilibrium-approximation guarantee.
Relaxing these institutional or dynamic boundaries requires a different
theory.  Within them, the construction supplies a continuation-equilibrium
foundation for direct following while preserving the tractable receiver-facing
reduced form.

\appendix
\section{Canonical Bayesian Persuasion and an Information Roadmap}
\label{app:information-roadmap}

This appendix places the model on a common set of information coordinates.
The purpose is not to impose our notation on the cited papers, but to separate
three departures from canonical persuasion that are easily conflated:
uncertainty faced by the designer, opacity faced by the receiver, and imperfect
enforceability of the announced or installed policy.

\subsection{Canonical benchmark}
\label{app:canonical-bp}

Let $\Om$ and $\A$ be finite, let $\mu\in\Delta(\Om)$ be a common prior, and
let $u_S,u_R:\A\times\Om\to\mathbb R$ be commonly known.  In canonical
Bayesian persuasion, the sender publicly commits to a signaling scheme
$q:\Om\to\Delta(\M)$ before the state is realized~\cite{KG2011,Kamenica2019}.
After $\omega\sim\mu$ is drawn, the sender observes $\omega$ and a message is
generated according to $m\sim q(\cdot\mid\omega)$.  The receiver observes both
the committed scheme and the realized message, forms
\begin{equation}
 \beta_q(\omega\mid m)
 =\frac{\mu(\omega)q(m\mid\omega)}
 {\sum_{\omega'\in\Om}\mu(\omega')q(m\mid\omega')},
 \label{eq:roadmap-canonical-posterior}
\end{equation}
whenever the denominator is positive, and chooses an action in
\begin{equation}
 \argmax_{a\in\A}\sum_{\omega\in\Om}
 \beta_q(\omega\mid m)u_R(a,\omega).
 \label{eq:roadmap-canonical-best-response}
\end{equation}
The sender selects $q$ to maximize the induced expected value of $u_S$.

Under direct recommendation, one may take $\M=\A$ and interpret message $a$
as the recommendation to choose action $a$.  Writing
$x_{\omega a}=\mu(\omega)q(a\mid\omega)$, Bayes plausibility and obedience are
\begin{align}
 \sum_{a\in\A}x_{\omega a}&=\mu(\omega)
 &&\forall\omega\in\Om,
 \label{eq:roadmap-bayes-plausibility}\\
 \sum_{\omega\in\Om}x_{\omega a}
 \bigl[u_R(a,\omega)-u_R(b,\omega)\bigr]&\ge 0
 &&\forall a,b\in\A.
 \label{eq:roadmap-canonical-obedience}
\end{align}
Thus canonical persuasion selects a state--recommendation distribution
consistent with the prior and receiver optimality.  For the comparison below,
write its comparison tuple as
\begin{equation}
 \mathcal B=(\Om,\A,\M,\mu,u_S,u_R,q).
 \label{eq:roadmap-canonical-environment}
\end{equation}
Before the state is realized, $(\Om,\A,\M,\mu,u_S,u_R)$ is common knowledge,
$q$ is publicly announced and fully binding, and only $\omega$ is hidden from
the receiver.  There is consequently no statistical learning problem in the
one-shot benchmark: the receiver computes \eqref{eq:roadmap-canonical-posterior}
from known objects.

\subsection{Neighboring departures in common information coordinates}
\label{app:information-coordinates}

For any neighboring model, summarize the departure from $\mathcal B$ by
\begin{equation}
 \mathfrak I
 =\bigl(\mathsf K_S,\mathsf K_R;
        \mathsf O_S,\mathsf O_R;
        \mathsf C_q;\mathsf L;\mathsf E\bigr).
 \label{eq:roadmap-information-coordinate}
\end{equation}
Here $\mathsf K_i$ records player $i$'s ex-ante knowledge of the primitive
coordinates, $\mathsf O_i$ the observations accumulated over interaction,
$\mathsf C_q$ who selects and observes the scheme and how strongly it binds,
$\mathsf L$ the object learned from data, and $\mathsf E$ the strategic or
optimization object established.  This is comparison notation only; the cited
papers use their own symbols.  \Cref{tab:roadmap-information,tab:roadmap-target}
give the roadmap, and the paragraphs following them state the distinctions
formally.

\begin{table}[t]
\centering
\caption{Primitive information and commitment relative to canonical Bayesian
persuasion.  ``Public and binding'' means that the receiver observes the
period's scheme and the signal is generated from it.}
\label{tab:roadmap-information}
\footnotesize
\renewcommand{\arraystretch}{1.18}
\begin{tabularx}{\textwidth}{@{}>{\raggedright\arraybackslash}p{0.19\textwidth}YYY@{}}
\toprule
Model or stream & Information missing from sender/designer
& Information missing from receiver & Scheme and commitment \\
\midrule
Canonical BP~\cite{KG2011}
& None in $\mathcal B$ before $\omega$
& Realized $\omega$ only
& $q$ is public and fully binding. \\

Unknown receiver type~\cite{CastiglioniCelliMarchesiGatti2020}
& Current receiver type $k_t$, hence the realized receiver payoff function,
before choosing $q_t$
& No opacity about $q_t$; the receiver knows its own type and the common prior
& Every $q_t$ is public and fully binding. \\

Unknown receiver payoff and prior~\cite{Bacchiocchi2024}
& Both $(\mu,u_R)$; the sender knows its own payoff
& The receiver knows $(\mu,u_R,q_t)$ and does not observe $\omega_t$
& Every $q_t$ is public and fully binding. \\

Unknown prior~\cite{LinLi2025}
& The objective prior $\mu^\star$; payoffs are known
& Each myopic receiver knows $(\mu^\star,u_R,q_t)$
& Every $q_t$ is announced and fully binding. \\

Inferable scheme~\cite{Probine2025}
& The fixed environment and fixed scheme are known to the sender
& The receiver initially does not know the fixed $q$
& $q$ is fixed and fully binding, but opaque to the receiver. \\

Partial commitment~\cite{Min2021}
& None of $(\mu,u_S,u_R,q,\rho)$; the sender knows when it is free to report
strategically
& The receiver does not observe whether the current commitment binds
& The announced $q$ binds with exogenous probability $\rho<1$; reporting in
the nonbinding branch is strategic. \\

This paper
& None at a nonbinding message node: the sender observes realized $\theta$ and
$\omega$ and knows that it can choose a message
& The receiver does not observe the realized
$(\theta,c,u_S^\theta,B^{S,\theta},\pi^\theta,
\overline\sigma^\theta)$
& The type-contingent pair is installed, not chosen in the game; $\pi^\theta$
binds when $c=1$, while the sender chooses when $c=0$. \\
\bottomrule
\end{tabularx}
\end{table}

\begin{table}[t]
\centering
\caption{Dynamic observations, learning targets, and solution objects.  The
table distinguishes sender-side online learning from receiver-side inference
and from the continuation-equilibrium question studied here.}
\label{tab:roadmap-target}
\footnotesize
\renewcommand{\arraystretch}{1.18}
\begin{tabularx}{\textwidth}{@{}>{\raggedright\arraybackslash}p{0.19\textwidth}YYY@{}}
\toprule
Model or stream & Relevant observations & Learning/statistical target
& Strategic or optimization target \\
\midrule
Canonical BP~\cite{KG2011}
& One realized message and then a receiver action
& None
& Sender-optimal persuasive scheme; receiver Bayesian best response. \\

Unknown receiver type~\cite{CastiglioniCelliMarchesiGatti2020}
& Full feedback reveals $k_t$; partial feedback reveals only the receiver's
action
& Performance against an unknown/adversarial sequence of receiver types
& No regret relative to the best signaling scheme in hindsight. \\

Unknown receiver payoff and prior~\cite{Bacchiocchi2024}
& The sender observes receiver action feedback after using $q_t$
& Receiver best-response regions needed to select schemes; separate recovery
of $\mu$ and $u_R$ is unnecessary
& Regret relative to an optimal known-environment signaling scheme. \\

Unknown prior~\cite{LinLi2025}
& Past announced schemes, realized signals, and receiver actions; the learning
algorithm need not use state observations
& $\mu^\star$, inferred through receiver best responses
& Regret relative to the optimal scheme for $\mu^\star$. \\

Inferable scheme~\cite{Probine2025}
& After acting, the receiver observes realized states and updates
state--signal counts
& Signal posteriors induced by the single fixed $q$
& Finite-sample value under receiver inference and design of an inferable
fully committed scheme. \\

Partial commitment~\cite{Min2021}
& The receiver observes the message, but not the binding realization
& No repeated statistical target
& One-shot communication equilibrium with endogenous nonbinding reports. \\

This paper
& Public certified $Z_N=((\omega_t,m_t))_{t=1}^N$; at deployment the receiver
observes $(Z_N,m)$ but not $(\theta,\omega,c)$
& Posterior over the finite type dictionary and the posterior-predictive
reduced form; asymptotically, $x^\theta$ and $[\theta]$
& Exact weak PBE of each accepted deployment continuation; neither Stage-I
regret nor ex-ante protocol selection. \\
\bottomrule
\end{tabularx}
\end{table}
\FloatBarrier

\paragraph{Unknown receiver type or payoff: sender-side online learning.}
In online Bayesian persuasion, the period-$t$ scheme remains visible and fully
enforceable.  In the finite-type formulation of Castiglioni et al.
~\cite{CastiglioniCelliMarchesiGatti2020}, an adversary selects
$k_t\in\mathcal K$, where $k_t$ indexes $u_R^{k_t}$, and the sender selects
$q_t$ without knowing $k_t$.  Full feedback reveals $k_t$ after the round,
whereas partial feedback reveals the receiver action.  The learned object is
therefore not an opaque commitment: the objective is online performance against
the receiver-type sequence.  Bacchiocchi et al.~\cite{Bacchiocchi2024} remove
sender knowledge of
both $\mu$ and $u_R$.  For a slice $z=(q_t(m\mid\omega))_{\omega\in\Om}$,
receiver action feedback reveals regions of the payoff-weighted best-response
map
\begin{equation}
 \operatorname{BR}_{\mu,u_R}(z)
 =\argmax_{a\in\A}\sum_{\omega\in\Om}
 \mu(\omega)z_\omega u_R(a,\omega).
 \label{eq:roadmap-best-response-map}
\end{equation}
Their algorithm can exploit these decision regions without separately
identifying $\mu$ and $u_R$.  In both cases, learning is performed by the
sender to choose a sequence of fully committed schemes, and the global
criterion is regret rather than a PBE of a hidden-discretion continuation.

\paragraph{Unknown prior: designer-side learning with an informed receiver.}
In Lin and Li~\cite{LinLi2025}, the objective state distribution $\mu^\star$
is unknown
to the designer but known to each myopic receiver, while $u_S$ and $u_R$ are
known to the designer.  In each period, the announced $q_t$ is fully binding,
and the receiver best responds using $(\mu^\star,u_R,q_t)$.  Action feedback
locates $\mu^\star$ relative to obedience boundaries of the form
\begin{equation}
 \sum_{\omega\in\Om}\mu^\star(\omega)q_t(m\mid\omega)
 \bigl[u_R(a,\omega)-u_R(b,\omega)\bigr]=0.
 \label{eq:roadmap-unknown-prior-boundary}
\end{equation}
The designer learns the prior in order to approach the optimal
known-$\mu^\star$ commitment.  This is a change in designer information, not a
failure of the receiver to observe the scheme and not partial commitment.

\paragraph{Opaque but fully binding signaling.}
Probine et al.~\cite{Probine2025} keep one fixed scheme $q$ fully binding but
remove the receiver's initial knowledge of it.  After each action, the receiver
sees the realized state and updates state--signal counts (after the stipulated
initialization with one sample from each signal posterior).  If
$N_k(\omega,m)$ is the
count before round $k$, the receiver uses the empirical signal posterior
\begin{equation}
 \widehat\beta_{m,k}(\omega)
 =\frac{N_k(\omega,m)}{\sum_{\omega'}N_k(\omega',m)}
 \label{eq:roadmap-inferable-posterior}
\end{equation}
and best responds to that estimate.  The sender designs a scheme whose value is
robust to this finite-sample inference.  Opacity is therefore receiver-side,
as in our model, but its object is a single fully committed scheme rather than
an aggregate of binding and discretionary sources.

\paragraph{Partial commitment: hidden enforceability and strategic reporting.}
In Min~\cite{Min2021}, the receiver knows the benchmark primitives, the announced
scheme, and the exogenous binding probability $\rho<1$, but does not observe the
realization $c$.  Using the common comparison notation above, a message is generated from
\begin{equation}
 m\sim
 \begin{cases}
 q(\cdot\mid\omega), & c=1,\\
 r_S(\cdot\mid\omega), & c=0,
 \end{cases}
 \qquad c\sim\operatorname{Bernoulli}(\rho),
 \label{eq:roadmap-min-mixture}
\end{equation}
where $r_S$ is determined endogenously in the communication equilibrium.  The
receiver's decision belief is correspondingly a joint posterior on
$(\omega,c)$.  This is a one-shot commitment problem, not a repeated learning
problem.  Related imperfect-commitment models change the sender's manipulation
technology or the credibility constraint in other ways
~\cite{LipnowskiRavidShishkin2022,LinLiu2024}; the common distinction from the
learning papers above is that strategic feasibility of deviations, rather
than estimation of a fully binding scheme, is central.

\paragraph{This paper: receiver-side learning of an opaque partial-commitment
reduced form.}
The common type dictionary is
\[
 \theta\longmapsto
 \bigl(u_S^\theta,B^{S,\theta},x^\theta,
       \pi^\theta,\overline\sigma^\theta\bigr),
 \qquad \theta\in\Tset,
\]
with common prior $\lambda_0$ on the finite set $\Tset$.  The realized
$\theta$ is known to the sender, certifier, and binding device, but not to the
receiver.  Conditional on $\theta$, the certified calibration law is
\begin{equation}
 x^\theta_{\omega m}
 =\mu(\omega)\phi^\theta(m\mid\omega),
 \qquad
 \phi^\theta
 =\rho\pi^\theta+(1-\rho)\overline\sigma^\theta,
 \qquad
 Z_N\sim(x^\theta)^{\otimes N}.
 \label{eq:roadmap-our-reduced-form}
\end{equation}
Calibration updates $\lambda_N(\theta\mid Z_N)$.  At an accepted deployment,
the installed $\pi^\theta$ generates the message if $c=1$, while the sender
chooses a message if $c=0$.  The equilibrium construction verifies that the
installed discretionary prescription $\overline\sigma^\theta$ is optimal once
the receiver follows, and that the receiver follows under the
posterior-predictive mixture.  The target is therefore the exact continuation
PBE conditional on public pre-play data.  It is not a claim that calibration is
equilibrium play, that the sender voluntarily installs the protocol, or that
the paper characterizes all equilibria of an unrestricted partial-commitment
game.

\paragraph{Identification scope: finite dictionaries versus unrestricted
decompositions.}
The population calibration observation law point identifies $x^\theta$.
Within the
paper's finite common-knowledge dictionary it consequently identifies the
equivalence class
\begin{equation}
 [\theta]=\{\vartheta\in\Tset:x^\vartheta=x^\theta\}.
 \label{eq:roadmap-equivalence-class}
\end{equation}
If the dictionary map $\theta\mapsto x^\theta$ is injective, then
$[\theta]=\{\theta\}$ and asymptotic identification of $x^\theta$ also
identifies the type and every dictionary coordinate, including
$(\pi^\theta,\overline\sigma^\theta)$.  The latent pair is unidentified within
the finite model if and only if observationally equivalent types carry
different latent pairs; every other structural coordinate obeys the same
constancy-on-class criterion.

This Bayesian statement is distinct from the nonparametric decomposition
question.  If the candidate dictionary is not fixed, the set
\begin{equation}
 \mathcal D_\rho(\phi;B^S)
 =\left\{(\pi,\sigma):
 \rho\pi+(1-\rho)\sigma=\phi,
 \ \supp\sigma(\cdot\mid\omega)\subseteq B^S_\omega
 \ \forall\omega\right\}
 \label{eq:roadmap-decomposition-set}
\end{equation}
may contain a continuum of decompositions when $0<\rho<1$ and the relevant
support is nondegenerate.  That result shows nonidentification absent further
decomposition restrictions; it does not imply that a given finite dictionary
contains a continuum of candidate types.  Keeping these two statements
separate gives the precise identification boundary used by the paper: the
receiver learns the behaviorally sufficient reduced form (and its dictionary
equivalence class), while recovery of the latent pair depends on whether the
specified dictionary resolves the decomposition.

\section{Institutional Scope and Supporting Comparisons}
\label{sec:discussion}

This appendix collects the institutional interpretation, assumptions, and
crosswalk that support the main results but are not needed for the forward
development of the model.  The limitations that govern the interpretation of
the contribution are summarized in \Cref{sec:conclusion}.

\subsection{Institutional interpretation and comparison with reputation}

\paragraph{Why the two-stage separation is legitimate.}
Equilibrium analysis routinely conditions a continuation game on public
pre-play information.  What matters is that the information-generating process
is correctly specified and does not omit strategic moves whose incentives would
affect its distribution.  Assumption~\ref{ass:certification} meets that
requirement by construction: calibration is generated by a trusted device,
contains no payoff-relevant action, and cannot be manipulated by the sender.
The receiver's posterior after the public record is therefore a common prior
for $\Gamma(z)$, and PBE is imposed on that continuation.

The claim would change if any of these institutional assumptions were removed.
If the sender selected calibration messages, reported its type, hid realized
states, or earned calibration payoffs, the likelihood in
\eqref{eq:type-posterior} would depend on a strategic policy.  One would then
need reporting incentive constraints, history-dependent sender strategies, and
continuation payoffs.  Calling the same data ``training'' would not make those
requirements disappear.

We connect reduced-form inference under hidden partial commitment to exact
post-calibration equilibrium implementation.  Certified calibration updates the
posterior-predictive recommendation beliefs.  A true-type positive margin,
together with posterior concentration and the common-support condition, makes
the predictive gaps positive with high probability; the public gate then
verifies those mixed gaps for the realized record.  Only after that verification
is following strictly optimal.  Sender-best discretionary recommendations close
the remaining PBE incentive loop.

Our $c$ is a transient source shock redrawn in deployment, not a persistent
committed type.  There is no discount factor, recursive continuation value, or
strategic attempt to build a record.  This restriction preserves a linear
receiver-facing reduced form and type-wise LP design.  It also means the paper
does not explain endogenous trust formation.

\subsection{Assumptions doing substantive work}

\begin{enumerate}[leftmargin=1.8em,label=(\roman*)]
\item \emph{Trusted, type-aware infrastructure.}  The certifier and binding
device observe or authenticate $\theta$ without a sender report.  Otherwise
cross-type reporting incentives are missing.
\item \emph{Finite, well-specified common prior.}  The exact posterior and
Hellinger bound use a finite candidate family containing the true reduced form.
Continuous, misspecified, or nonstationary environments require different
learning results.
\item \emph{Common recommendation support.}  This is sufficient for
finite-sample entry into an exact direct-following PBE.  It is not needed for
Theorem~\ref{thm:post-calibration-pbe}, whose certificate checks every
predictive message directly.  It can be replaced by an explicit
exclusive-message safety condition or finite-time elimination.
\item \emph{Sender-best discretionary support.}  This is the source of sender
sequential rationality after following has been proved.  Ties are harmless for
implementation, but can make the latent decomposition less identifiable.
\item \emph{Receiver utility independent of structural type and source.}
This is why the state marginal $\overline\beta_N$ suffices for receiver choice.
If utility depends on $\theta$, $c$, or certification status, the receiver must
use the corresponding marginal of the full $q_N$ and the reduced-form
sufficiency claim changes.
\end{enumerate}

\subsection{Continuity with the reduced-form results}

\Cref{tab:continuity-map} provides an old-to-new crosswalk: it records which
reduced-form, statistical, frontier, and algorithmic objects are retained and
how each is used in the two-stage implementation framework.  The principal new
layer is the composition
\[
\begin{aligned}
\text{type-wise implementability}
&\Rightarrow \text{Bayesian full-node belief}\\
&\Rightarrow \text{predictive obedience}\\
&\Rightarrow \text{sender and receiver sequential rationality}\\
&\Rightarrow \text{post-calibration PBE}.
\end{aligned}
\]
No statistical lemma is used as a substitute for an equilibrium condition.
The reduced-form statistical results enter only through posterior prediction;
the implementation layer separately constructs beliefs on the complete hidden
node, verifies receiver obedience under the posterior mixture, establishes
sender optimality on the discretionary branch, and completes zero-probability
messages.  The robust frontier and LP/knapsack results then retain their
original role as a conditional design tradeoff between sender value and
certifiability.

\section{Supplementary Tables and Crosswalks}
\label{app:supplementary-tables}

\subsection{Notation and observability}

\begin{table}[H]
\centering
\small
\begin{tabularx}{0.96\textwidth}{@{}lYY@{}}
\toprule
Object & Meaning & Observed by receiver at deployment?\\
\midrule
$\theta$ & persistent structural environment type & no; posterior from $z$\\
$\omega$ & fresh payoff-relevant state & no\\
$c$ & fresh binding-source shock & no\\
$m$ & direct recommendation/message & yes\\
$a$ & receiver's final action & chosen by receiver\\
$x^\theta$ & joint receiver-facing reduced form & candidate family known; realization unknown\\
$\pi^\theta,\overline\sigma^\theta$ & latent binding kernel and discretionary prescription & candidate family known; realization and source hidden\\
$\widehat\beta_N$ & empirical state frequency conditional on a message & statistic\\
$\overline\beta_N$ & Bayesian posterior-predictive state belief & decision belief\\
$q_N$ & Bayesian belief on $(\theta,\omega,c)$ & equilibrium belief\\
\bottomrule
\end{tabularx}
\caption{Objects kept distinct throughout the paper.}
\label{tab:notation}
\end{table}

\subsection{Continuity with the reduced-form results}

\begin{table}[H]
\centering
\footnotesize
\renewcommand{\arraystretch}{1.08}
\begin{tabularx}{0.98\textwidth}{@{}>{\raggedright\arraybackslash}p{0.28\textwidth}Y>{\raggedright\arraybackslash}p{0.18\textwidth}@{}}
\toprule
Original object & Role in the two-stage version & Status\\
\midrule
$\rho$-implementable reduced form
& type-wise installed protocol and sender-IC construction
& retained and strengthened\\
Finite-dictionary identification boundary and unrestricted decomposition
multiplicity
& the population law identifies $[\theta]$ exactly and motivates predictive
beliefs when the class is nonsingleton
& clarified\\
Empirical posterior learning
& independent audit and virtual stabilization statistic
& retained, not used as PBE belief\\
Robust obedience margin
& stability radius for receiver sequential rationality
& retained\\
Value--robustness frontier
& design-value versus certification tradeoff
& retained type-wise\\
Support-wise LP
& strict-feasibility and attainment diagnosis
& retained type-wise\\
Binary fractional knapsack
& fixed-type, fixed-margin special oracle
& retained with scope conditions\\
\bottomrule
\end{tabularx}
\caption{Old-to-new result map.}
\label{tab:continuity-map}
\end{table}

\section{Proofs for the Two-stage Construction}
\label{app:two-stage-proofs}

\subsection{Proof of type-wise implementability}

\begin{proof}[Proof of Lemma~\ref{lem:typewise-implementability}]
Necessity follows by summing the mixture outside the sender-best set.  If
$m\notin B_{\omega}^{S,\theta}$, then $\sigma(m\mid\omega)=0$, and hence
\[
\sum_{m\notin B_{\omega}^{S,\theta}}x_{\omega m}
=\rho\mu(\omega)
\sum_{m\notin B_{\omega}^{S,\theta}}\pi(m\mid\omega)
\le\rho\mu(\omega).
\]
The state marginal and nonnegativity follow because $\pi$ and $\sigma$ are
stochastic kernels.

For sufficiency, fix $\omega$ and let
$\phi_m=x_{\omega m}/\mu(\omega)$.  First suppose $0<\rho<1$.  The
implementability inequality is equivalent to
\[
\sum_{m\in B_{\omega}^{S,\theta}}\phi_m\ge1-\rho.
\]
Therefore there exists a vector $y\in\mathbb R_+^\M$ such that
\[
\supp y\subseteq B_{\omega}^{S,\theta},\qquad
0\le y_m\le\phi_m,\qquad
\sum_m y_m=1-\rho.
\]
For example, allocate mass from the coordinates of $\phi$ inside the
sender-best set until total mass $1-\rho$ is reached.  Set
\[
\sigma(m\mid\omega)=\frac{y_m}{1-\rho},
\qquad
\pi(m\mid\omega)=\frac{\phi_m-y_m}{\rho}.
\]
Both vectors are nonnegative, $\sigma$ is supported on the sender-best set,
and
\[
\sum_m\sigma(m\mid\omega)=1,\qquad
\sum_m\pi(m\mid\omega)
=\frac{1-(1-\rho)}{\rho}=1.
\]
They generate $\phi$.

If $\rho=0$, the implementability inequality forces $\phi$ to be supported on
$B_{\omega}^{S,\theta}$; take $\sigma=\phi$ and any kernel $\pi$.  If
$\rho=1$, take $\pi=\phi$ and any sender-best kernel $\sigma$, which exists
because $B_{\omega}^{S,\theta}$ is nonempty.  Applying the construction
statewise completes the proof.
\end{proof}

\subsection{Proof of the static posterior-mixture implementation theorem}
\label{app:static-mixture-proof}

\begin{proof}[Proof of Theorem~\ref{thm:static-mixture-implementation}]
For an on-path $m$, the state marginal of the full Bayesian belief is
$\beta_{\overline x^\lambda}(\cdot\mid m)$.  Dividing
\eqref{eq:mixture-obedience} by $p_m(\overline x^\lambda)$ shows that choosing
$m$ weakly dominates every alternative $a$, and strict inequalities give
uniqueness.

Suppose first that $\rho<1$, and consider an active type $\theta$, state
$\omega$, and a nonbinding sender node.
Every prescribed message
$m\in\supp\overline\sigma^\theta(\cdot\mid\omega)$ has positive aggregate
probability: $\lambda(\theta)>0$, $\mu(\omega)>0$, and $1-\rho>0$ imply that
its contribution to $\overline x^\lambda_{\omega m}$ is positive.  The
receiver therefore follows it.  By
\eqref{eq:sender-best-prescription}, the sender obtains
\[
u_S^\theta(m,\omega)
=\max_{a\in\A}u_S^\theta(a,\omega).
\]
Any message deviation induces some receiver action, possibly randomized.
Every such final action yields at most this statewise maximum, so no deviation
is profitable.  When $\rho=1$, the sender has no message node and this
condition is vacuous.  When $c=1$ for any $\rho$, the binding device, not the
sender, selects the message.

At each on-path receiver information set, the stated belief follows from
Bayes' rule.  For every zero-probability message, choose a probability
distribution on the histories in its receiver information set and let the
receiver select a best response.  This completes
receiver sequential rationality off path.  The sender's prescribed action
already attains the statewise maximum, so this off-path response cannot create
a profitable sender deviation.  Hence strategies are sequentially rational
and beliefs are Bayes consistent wherever Bayes' rule applies.
\end{proof}

\subsection{Bayesian beliefs}

\begin{proof}[Proof of Lemma~\ref{lem:bayesian-beliefs}]
Conditional on $\theta$, certified observations are independent with
probability mass $x^\theta_{\omega_t m_t}$.  The likelihood of $z$ is therefore
$L_\theta(z)$, and Bayes' rule with prior $\lambda_0$ gives
\eqref{eq:type-posterior}.

Now condition on a feasible $z$ and an on-path deployment message $m$.  The
joint conditional probabilities of a hidden node and message are
\begin{align*}
\Pr(\theta,\omega,c=1,m\mid z)
&=\lambda_N(\theta\mid z)\mu(\omega)\rho
\pi^\theta(m\mid\omega),\\
\Pr(\theta,\omega,c=0,m\mid z)
&=\lambda_N(\theta\mid z)\mu(\omega)(1-\rho)
\overline\sigma^\theta(m\mid\omega).
\end{align*}
Summing both expressions over $(\theta,\omega,c)$ gives
\[
\sum_{\theta,\omega}\lambda_N(\theta\mid z)\mu(\omega)
\left[\rho\pi^\theta(m\mid\omega)
{}+(1-\rho)\overline\sigma^\theta(m\mid\omega)\right]
=\overline p_N(m\mid z).
\]
Dividing by this positive probability proves
\eqref{eq:full-belief-binding}--\eqref{eq:full-belief-discretionary}.  Summing
their numerators over $(\theta,c)$ at a fixed $\omega$ gives
$\overline x_N(\omega,m\mid z)$, which proves
\eqref{eq:predictive-state-belief}.
\end{proof}

\subsection{Proof of exact post-calibration implementation}
\label{app:post-calibration-proof}

\begin{proof}[Proof of Theorem~\ref{thm:post-calibration-pbe}]
By Lemma~\ref{lem:bayesian-beliefs}, $q_N$ is Bayes consistent on every
positive-probability receiver information set.  By
\eqref{eq:acceptance-set}, the posterior-mixture reduced form
$\overline x_N(\cdot,\cdot\mid z)$ satisfies
\eqref{eq:mixture-obedience}, with strict conditional slack $\tau$ when
$\tau>0$.  Assumption~\ref{ass:installed} and
\eqref{eq:sender-best-prescription} hold for every active type.  The result
therefore follows from
Theorem~\ref{thm:static-mixture-implementation}.
\end{proof}

\subsection{Equivalence-class concentration}

\begin{proof}[Proof of Lemma~\ref{lem:class-concentration}]
Write $P=x^{\theta^\circ}$.  Every type in $E=[\theta^\circ]$ has likelihood
$P^{\otimes N}(z)$.  Define the class and outside weighted likelihoods
\[
\mathsf A(z)=w_E P^{\otimes N}(z),\qquad
\mathsf B(z)=
\sum_{\vartheta\notin E}
\lambda_0(\vartheta)(x^\vartheta)^{\otimes N}(z).
\]
On every transcript with $P^{\otimes N}(z)>0$,
\[
r_N(z)=\frac{\mathsf B(z)}{\mathsf A(z)+\mathsf B(z)}
\le\min\left\{1,\frac{\mathsf B(z)}{\mathsf A(z)}\right\}
\le\sqrt{\frac{\mathsf B(z)}{\mathsf A(z)}}.
\]
Since the square root is subadditive on nonnegative sums,
\[
r_N(z)
\le
\sum_{\vartheta\notin E}
\sqrt{\frac{\lambda_0(\vartheta)}{w_E}}
\sqrt{
\frac{(x^\vartheta)^{\otimes N}(z)}
{P^{\otimes N}(z)}
}.
\]
Taking expectation under $P^{\otimes N}$ yields
\begin{align*}
\E_{\theta^\circ}r_N
&\le
\sum_{\vartheta\notin E}
\sqrt{\frac{\lambda_0(\vartheta)}{w_E}}
\sum_z
\sqrt{P^{\otimes N}(z)(x^\vartheta)^{\otimes N}(z)}
\\
&=
\sum_{\vartheta\notin E}
\sqrt{\frac{\lambda_0(\vartheta)}{w_E}}
\left(
\sum_{\omega,m}\sqrt{P_{\omega m}x^\vartheta_{\omega m}}
\right)^N
\\
    &\le A_E(b^\circ)^N.
\end{align*}
Markov's inequality gives the probability statement in
\eqref{eq:class-concentration}.  If $E=\Tset$, the posterior outside $E$ is
identically zero.
\end{proof}

\subsection{Predictive convergence and preservation of strict obedience}

\begin{proof}[Proof of Lemma~\ref{lem:predictive-margin}]
Fix a realized transcript and write $r=r_N$.  Every type in $E$ has reduced
form $P$.  If $r>0$, let
\[
Y=\frac{1}{r}
\sum_{\vartheta\notin E}
\lambda_N(\vartheta\mid Z_N)x^\vartheta;
\]
if $r=0$, set $Y=P$.  Then
\[
\overline x_N=(1-r)P+rY.
\]
For $m\in S$, write $p=p_m(P)$ and $q=p_m(Y)$.  The predictive state posterior
is the mixture
\[
\overline\beta_N(\cdot\mid m,Z_N)
=(1-w_m)\beta_P(\cdot\mid m)+w_m\beta_Y(\cdot\mid m),
\]
where the contamination weight satisfies
\[
w_m
=\frac{rq}{(1-r)p+rq}
\le\frac{r}{(1-r)p_{\min}^\circ}.
\]
The $L_1$ distance between two distributions is at most $2$, so
\[
\|\overline\beta_N(\cdot\mid m,Z_N)-\beta_P(\cdot\mid m)\|_1
\le2w_m
\le\frac{2r}{(1-r)p_{\min}^\circ}.
\]
The true class has positive posterior on every transcript generated by $P$, so
$r<1$ almost surely.

For any $a\ne m$, let
$f_{m,a}(\omega)=u_R(m,\omega)-u_R(a,\omega)$.  Its absolute value is at most
$R_R$.  If $P\in\calF_{\rho,\gamma}^{\theta^\circ}$, then
\[
\E_{\beta_P(\cdot\mid m)}f_{m,a}\ge\gamma.
\]
The predictive gap is therefore at least
\[
\gamma
-R_R\frac{2r}{(1-r)p_{\min}^\circ}.
\]
The threshold in \eqref{eq:posterior-threshold} is exactly the solution of
\[
\frac{2R_Rr}{(1-r)p_{\min}^\circ}\le\frac{\gamma}{2}.
\]
Hence every predictive gap is at least $\gamma/2$.  Under common message
support, $S$ is exactly the set of positive posterior-predictive messages, so
$Z_N\in\calC_N(\gamma/2)$.
\end{proof}

\subsection{Proof of high-probability entry}
\label{app:high-probability-proof}

\begin{proof}[Proof of Theorem~\ref{thm:high-probability-entry}]
First suppose $b^\circ\in(0,1)$.  By
Lemma~\ref{lem:class-concentration} and
\eqref{eq:bayesian-sample-complexity},
$r_N\le\alpha^\circ$ with probability at least $1-\eta$.
Lemma~\ref{lem:predictive-margin} then places the transcript in
$\calC_N(\gamma/2)$, and
Theorem~\ref{thm:post-calibration-pbe} supplies the exact continuation PBE.
When $b^\circ=0$ and $N\ge1$, every outside likelihood is zero on the support
of the true observation, so $r_N=0$ almost surely and the same implication
applies.  If $E=\Tset$, $r_N=0$ by definition.
\end{proof}

\section{Identification Details}
\label{app:identification-proofs}

\begin{proof}[Proof of Proposition~\ref{prop:behavioral-sufficiency}]
Conditional on type $\theta$, state $\omega$, and hidden source $c$, the
installed message probabilities are $\pi^\theta(m\mid\omega)$ when $c=1$ and
$\overline\sigma^\theta(m\mid\omega)$ when $c=0$.  Integrating out $c$ gives
$x^\theta_{\omega m}$.  Integrating out $\theta$ under $\lambda$ gives
$\overline x^\lambda_{\omega m}$.  Bayes' rule then yields
\[
\Pr(\omega\mid m;\lambda)
=\frac{\overline x^\lambda_{\omega m}}
{\sum_{\omega'}\overline x^\lambda_{\omega'm}}.
\]
Because receiver utility depends only on the final action and $\omega$, every
receiver expected payoff and best response is determined by this state
posterior.
\end{proof}

\begin{proof}[Proof of Proposition~\ref{prop:dictionary-identification}]
Conditional on $\theta$, each certified observation has distribution
$x^\theta$ on the finite alphabet $\Om\times\M$.  Hence two dictionary types
generate the same population law if and only if they belong to the same class
in \eqref{eq:equivalence-class}.  The identified set is therefore exactly
$[\theta]$, and it is a singleton exactly when the structural type is point
identified.  A functional $g$ is determined by the observation law exactly
when all observationally equivalent types give it the same value.  If the map
$\theta\mapsto x^\theta$ is injective, the unique dictionary entry associated
with the observed law includes the stored latent pair, proving the final
claim.
\end{proof}

\begin{proof}[Proof of Proposition~\ref{prop:latent-nonidentification}]
Fix a state and abbreviate the reduced-form probability vector by
$\phi\in\Delta(\M)$.  Let $K=\supp\phi$ and suppose $|K|\ge2$.  The
nonparametric statement fixes the admissible sender-best set at $B=K$.

A decomposition is equivalent to choosing a binding subdistribution
$r\in\mathbb R_+^K$ satisfying
\[
0\le r_m\le\phi_m,\qquad \sum_{m\in K}r_m=\rho,
\]
and then setting
\[
\pi_m=\frac{r_m}{\rho},\qquad
\overline\sigma_m=\frac{\phi_m-r_m}{1-\rho}.
\]
The point $r=\rho\phi$ is strictly between $0$ and $\phi$ in every coordinate
of $K$.  Choose two distinct coordinates in $K$ and transfer a sufficiently
small amount $\varepsilon$ from one to the other.  This preserves the total
mass and all box inequalities for every $\varepsilon$ in a nondegenerate
interval.  It therefore generates a continuum of distinct feasible
$(\pi,\overline\sigma)$ pairs with the same $\phi$.  These pairs belong to the
unrestricted decomposition space; they need not all appear as entries in the
maintained finite dictionary.  Holding the decompositions at all other states
fixed extends this statewise fiber to a continuum of full kernel pairs.

For the estimation claim, data have the same distribution under
$\theta_0$ and $\theta_1$.  Hence the distribution of any estimator
$\widehat\theta_N$ is also the same.  Pointwise, the triangle inequality gives
\[
d(\theta_0,\theta_1)
\le d(\widehat\theta_N,\theta_0)
{}+d(\widehat\theta_N,\theta_1).
\]
Taking expectation under the common observation law and then using that the
maximum is at least the average proves
\[
\max_{i\in\{0,1\}}\E_{\theta_i}
d(\widehat\theta_N,\theta_i)
\ge\frac12d(\theta_0,\theta_1).
\]
\end{proof}

\section{Empirical Audit and Stabilization Proofs}
\label{app:statistical-proofs}

\begin{proof}[Proof of Proposition~\ref{prop:empirical-audit}]
For a fixed on-path message $m$, its count
$N_N(m)$ is binomial with mean $Np_m(P)\ge Np_{\min}^\circ$.  The
multiplicative Chernoff bound gives
\[
\Pr\left(
N_N(m)<\frac{Np_m(P)}{2}
\right)
\le\exp\left(-\frac{Np_m(P)}{8}\right)
\le\exp\left(-\frac{Np_{\min}^\circ}{8}\right).
\]
Conditional on $N_N(m)=k$, the $k$ states paired with $m$ are independent
draws from $\beta_P(\cdot\mid m)$.  The finite-alphabet inequality of
\citet{Weissman2003} states that
\[
\Pr\left(
\|\widehat\beta_k-\beta_P(\cdot\mid m)\|_1>\varepsilon
\ \middle|\ N_N(m)=k
\right)
\le C_d\exp\left(-\frac{k\varepsilon^2}{2}\right).
\]
On the event $k\ge Np_m(P)/2$, this is at most
\[
C_d
\exp\left(-\frac{Np_{\min}^\circ\varepsilon^2}{4}\right).
\]
A union bound over at most $n$ on-path messages proves
\eqref{eq:empirical-audit-bound}.

For $\varepsilon=\gamma/(2R_R)$, the first expression on the right of
\eqref{eq:empirical-audit-bound} is at most $\eta/2$ under the first sample
condition in \eqref{eq:empirical-sample-complexity}; the second is at most
$\eta/2$ under the second condition.  Finally, for any alternative $a\ne m$,
the difference between the empirical and true payoff gaps is at most
$R_R\varepsilon=\gamma/2$.  A true gap of at least $\gamma$ therefore remains
strictly positive.
\end{proof}

\begin{proof}[Proof of Proposition~\ref{prop:virtual-regret}]
For each on-path $m$, define
\[
K_m
=
\left\lceil
\frac{8R_R^2}{\gamma_m(P)^2}
\log\left(C_dnN^2\right)
\right\rceil.
\]
When $k$ previous occurrences of $m$ are available, the empirical conditional
distribution is formed from those $k$ independent states with law
$\beta_P(\cdot\mid m)$.  The current state has not yet been revealed.  By the
Weissman inequality,
\[
\Pr\left(
\|\widehat\beta_k(\cdot\mid m)-\beta_P(\cdot\mid m)\|_1
>\frac{\gamma_m(P)}{2R_R}
\right)
\le
C_d
\exp\left(
-\frac{k\gamma_m(P)^2}{8R_R^2}
\right).
\]
For $k\ge K_m$, this probability is at most $1/(nN^2)$.  A union bound over
all $m$ and $k\le N$ gives a clean event with probability at least $1-1/N$.
On that event, every virtual response after the $K_m$th occurrence of $m$
equals the recommendation: the empirical payoff gap differs from its true
counterpart by at most $\gamma_m(P)/2$.

The number of errors on the clean event is therefore at most
\[
\sum_{m\in S}\min\{N_N(m),K_m\}.
\]
The map $x\mapsto\min\{x,K_m\}$ is concave, so Jensen's inequality and
$\E N_N(m)=Np_m(P)$ imply
\[
\E\min\{N_N(m),K_m\}
\le\min\{Np_m(P),K_m\}.
\]
The complement of the clean event contributes at most
$N\cdot(1/N)=1$ additional expected error.  Substitution of $K_m$ gives
\eqref{eq:virtual-mistake-bound}, after absorbing ceiling and logarithmic
constants into a universal $C$.

With payoffs in $[0,1]$, a correct virtual action has zero loss relative to
following and an incorrect action contributes at most one to conditional
receiver pseudo-regret.
Likewise,
\[
\left[
u_S^{\theta^\circ}(m_t,\omega_t)
-u_S^{\theta^\circ}(\widehat a_t,\omega_t)
\right]_+
\le\1\{\widehat a_t\ne m_t\}.
\]
Taking sums and expectations gives the two loss bounds.
\end{proof}

\section{Frontier and Algorithmic Proofs}
\label{app:frontier-algorithm-proofs}

\subsection{Positive-margin frontier}

\begin{proof}[Proof of Theorem~\ref{thm:positive-margin-frontier}]
Because $\calF_{\rho,\gamma}^\theta$ expands as $\gamma\to0^+$, nonemptiness of
$\calF_\rho^{\theta,+}$ implies that $V_{\rho,\gamma}^\theta$ is defined throughout some
interval $(0,\overline\gamma]$.  The monotone limit
$L=\lim_{\gamma\to0^+}V_{\rho,\gamma}^\theta$ therefore exists and satisfies
$L\le V_\rho^{\theta,+}$.  Conversely, any
$x\in\calF_\rho^{\theta,+}$ belongs to $\calF_{\rho,\overline\gamma}^\theta$ for some
$\overline\gamma>0$, and therefore to every
$\calF_{\rho,\gamma}^\theta$ with
$0<\gamma\le\overline\gamma$.  Thus $L\ge U_S^\theta(x)$.  Taking the
supremum over $\calF_\rho^{\theta,+}$ gives $L\ge V_\rho^{\theta,+}$.

Equation~\eqref{eq:frontier-decomposition} follows by adding and subtracting
$V_\rho^{\theta,+}$.
\end{proof}

\begin{proof}[Proof of Theorem~\ref{thm:linear-robustification}, upper bound]
Let $\lambda=\gamma/\xi$ and
$x^\gamma=(1-\lambda)x^\star+\lambda y$.  Convexity gives
$x^\gamma\in\calX_\rho^\theta$.  For every $m\in H$ and $a\ne m$,
\[
D_{m,a}(x^\gamma)
=(1-\lambda)D_{m,a}(x^\star)+\lambda D_{m,a}(y)
\ge\lambda\xi=\gamma
\ge\gamma p_m(x^\gamma).
\]
Outside $H$, both component reduced forms have zero message probability.
Hence $x^\gamma\in\calF_{\rho,\gamma}^\theta$.  Since
$u_S^\theta\in[0,1]$,
\begin{align*}
V_\rho^\theta-V_{\rho,\gamma}^\theta
&\le
U_S^\theta(x^\star)-U_S^\theta(x^\gamma)\\
&=\lambda
\left[U_S^\theta(x^\star)-U_S^\theta(y)\right]
\le\frac{\gamma}{\xi}.
\end{align*}
\end{proof}

\begin{proof}[Proof of Theorem~\ref{thm:linear-robustification}, tightness]
Fix $\rho\in(0,1]$.  Let $\Om=\{H,L\}$ with equal prior and
$\A=\M=\{1,0\}$.  The sender obtains $1$ from action $1$ and $0$ from action
$0$ in both states.  The receiver payoff differences are
\[
u_R(1,H)-u_R(0,H)=1,\qquad
u_R(1,L)-u_R(0,L)=-1.
\]
Write
$\alpha_H=\phi(1\mid H)$ and
$\alpha_L=\phi(1\mid L)$.  Because action $0$ is outside the sender-best set,
implementability requires $1-\alpha_\omega\le\rho$.  The sender value is
$(\alpha_H+\alpha_L)/2$.

The weakly obedient choice $\alpha_H=\alpha_L=1$ has value $1$, so
$V_\rho^\theta=1$.  Robust obedience after recommendation $1$ requires
\[
\frac{\alpha_H-\alpha_L}{2}
\ge\gamma\frac{\alpha_H+\alpha_L}{2},
\]
or
$\alpha_L\le[(1-\gamma)/(1+\gamma)]\alpha_H$.  Hence every feasible scheme has
value at most $1/(1+\gamma)$.  The bound is attained by
\[
\alpha_H=1,\qquad
\alpha_L=\frac{1-\gamma}{1+\gamma}.
\]
Its non-sender-best probability in state $L$ is
$2\gamma/(1+\gamma)\le\rho$ exactly when
$\gamma\le\rho/(2-\rho)$.  Recommendation $1$ has the required gap with
equality.  For recommendation $0$,
$D_{0,1}=p_0\ge\gamma p_0$.  Therefore
$V_{\rho,\gamma}^\theta=1/(1+\gamma)$ in the stated range.

Finally, the choices
$\alpha_H=1$, $\alpha_L=1-\varepsilon$ have positive receiver margin and
value approaching $1$ as $\varepsilon\to0^+$.  Hence
$V_\rho^{\theta,+}=1$ and the loss $\gamma/(1+\gamma)$ is linear at zero.
\end{proof}

\subsection{Support-wise LP characterization}

\begin{proof}[Proof of Theorem~\ref{thm:supportwise-characterization}]
Fix $H$ with $\xi_{H,\theta}^{\mathrm{feas}}>0$, and let $y\in K_H^\theta$
be the strict point returned by LP1.  For any $x\in K_H^\theta$ and
$\lambda\in(0,1)$,
\[
x^\lambda=(1-\lambda)x+\lambda y
\]
belongs to $K_H^{\theta,+}$ and converges to $x$ as
$\lambda\to0^+$.  Since $U_S^\theta$ is linear, the supremum over
$K_H^{\theta,+}$ equals the maximum over its closure $K_H^\theta$, namely
$W_H^\theta$.  Taking the maximum over strict-feasible supports gives the
displayed formula for $V_\rho^{\theta,+}$.

If a strict point attains $V_\rho^{\theta,+}$, its support $H$ passes LP1, has
$W_H^\theta=V_\rho^{\theta,+}$, and belongs to the LP2 optimal face, so LP3
returns positive strictness.  Conversely, if a support satisfies the three LP
conditions, the LP3 solution is a strict point with value
$V_\rho^{\theta,+}$ and therefore attains the positive-margin supremum.
\end{proof}

\subsection{Binary-action reduction}

\begin{lemma}[Binary implementability box]
\label{lem:binary-box}
Under the assumptions of \Cref{prop:binary-knapsack},
$x$ is $\rho$-implementable if and only if
$1-\rho\le z_\omega\le1$ for every state.
\end{lemma}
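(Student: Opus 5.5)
This is a direct specialization of Lemma~\ref{lem:typewise-implementability}, which identifies $\rho$-implementability with membership in $\calX_\rho^\theta$. The plan is to evaluate the defining inequality of \eqref{eq:type-implementable-polytope} in the binary setting and read off the box.

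\textbf{Step 1: the sender-best set.} Under the maintained assumption $s_\omega^\theta>0$ for every $\omega$, action $1$ is strictly preferred in each state, so $B_{\omega}^{S,\theta}=\{1\}$ for all $\omega$.

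\textbf{Step 2: rewrite the constraint.} With this sender-best set, the implementability inequality in $\calX_\rho^\theta$ becomes
\[
x_{\omega 0}\le\rho\mu(\omega)\qquad\text{for every }\omega .
\]
Because $\mu$ has full support, every array with the correct state marginal \eqref{eq:state-marginal} corresponds to a kernel. Setting $z_\omega=\phi_x(1\mid\omega)$ gives $x_{\omega1}=\mu(\omega)z_\omega$ and $x_{\omega0}=\mu(\omega)(1-z_\omega)$.

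\textbf{Step 3: translate into the box.} Nonnegativity of $x$ together with the state marginal is equivalent to $0\le z_\omega\le1$. Dividing the inequality of Step~2 by $\mu(\omega)>0$ gives $1-z_\omega\le\rho$, that is, $z_\omega\ge1-\rho$. Since $1-\rho\ge0$, the lower bound $z_\omega\ge0$ is redundant. The conjunction is therefore exactly $1-\rho\le z_\omega\le 1$ for every state.

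\textbf{Step 4: conclude and check endpoints.} Lemma~\ref{lem:typewise-implementability} then yields both directions of the equivalence at once. The endpoints need no separate treatment, since the lemma already covers $\rho\in\{0,1\}$. At $\rho=0$ the box forces $z_\omega=1$; at $\rho=1$ it is the whole unit interval.

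No step is a real obstacle. The only point needing care is to state the identification $x\leftrightarrow z$ explicitly, using the full support of $\mu$ and the state-marginal constraint, so that "$x$ is implementable" and "$z$ lies in the box" refer to the same object.
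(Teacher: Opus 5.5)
Your proposal is correct and follows the paper's proof: with $B_{\omega}^{S,\theta}=\{1\}$, the implementability inequality reads $\mu(\omega)(1-z_\omega)\le\rho\mu(\omega)$, which gives the lower bound, and the probability constraint gives the upper bound. You also state the $x\leftrightarrow z$ correspondence and the redundancy of $z_\omega\ge 0$ explicitly, which the paper leaves implicit.
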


\begin{proof}
The only message outside $B_{\omega}^{S,\theta}=\{1\}$ is $0$.  The
implementability inequality is
$\mu(\omega)(1-z_\omega)\le\rho\mu(\omega)$, which is equivalent to the lower
bound.  The upper bound is the probability constraint.
\end{proof}

\begin{lemma}[Redundancy of recommendation zero]
\label{lem:binary-zero-redundant}
Under \eqref{eq:prior-zero-robust}, robust obedience of recommendation $1$
implies robust obedience of recommendation $0$.
\end{lemma}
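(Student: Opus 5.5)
The plan is to write both obedience constraints in the binary notation and then add them, so that \eqref{eq:prior-zero-robust} supplies exactly the slack needed for recommendation $0$. With $\A=\M=\{0,1\}$, $z_\omega=\phi^\theta(1\mid\omega)$ and $d_\omega=u_R(1,\omega)-u_R(0,\omega)$, robust obedience of recommendation $1$ in \eqref{eq:type-robust-feasible} reads
\[
\sum_\omega\mu(\omega)z_\omega(d_\omega-\gamma)\ge0 .
\]
Robust obedience of recommendation $0$ is $D_{0,1}(x)\ge\gamma p_0(x)$. Since $x_{\omega0}=\mu(\omega)(1-z_\omega)$, this becomes
\[
\sum_\omega\mu(\omega)(1-z_\omega)(-d_\omega-\gamma)\ge0 .
\]

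Add the recommendation-$1$ inequality to the recommendation-$0$ target. The sum is
\[
\sum_\omega\mu(\omega)\bigl[z_\omega d_\omega-\gamma z_\omega-d_\omega-\gamma+z_\omega d_\omega+\gamma z_\omega\bigr]
=\sum_\omega\mu(\omega)\bigl[-d_\omega-\gamma\bigr]+2\sum_\omega\mu(\omega)z_\omega d_\omega .
\]
This is not a clean cancellation, because of the cross term $2\sum_\omega\mu(\omega)z_\omega d_\omega$. I would therefore not add the constraints directly but rewrite the target instead. The recommendation-$0$ slack equals
\[
\sum_\omega\mu(\omega)(-d_\omega-\gamma)\;+\;\sum_\omega\mu(\omega)z_\omega(d_\omega+\gamma).
\]
The first sum is nonnegative by \eqref{eq:prior-zero-robust}. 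The second sum equals
\[
\sum_\omega\mu(\omega)z_\omega(d_\omega-\gamma)\;+\;2\gamma\sum_\omega\mu(\omega)z_\omega .
\]
Here the first part is nonnegative by robust obedience of recommendation $1$, and the second part is nonnegative because $\gamma\ge0$ and $z_\omega\ge0$. Hence the recommendation-$0$ slack is a sum of three nonnegative terms, and $D_{0,1}(x)\ge\gamma p_0(x)$ follows.

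Two remaining points need care. First, the definition of $\calF_{\rho,\gamma}^\theta$ imposes the constraint as an unnormalized inequality for all $m$, so no case split on $p_0(x)>0$ is needed; when $p_0=0$ the constraint is trivially satisfied. Second, I expect the main obstacle to be only bookkeeping: the sign convention $D_{0,1}$ uses $u_R(0,\omega)-u_R(1,\omega)=-d_\omega$, and the $\gamma p_0$ term must be expanded correctly. The decomposition above isolates the term $2\gamma\sum_\omega\mu(\omega)z_\omega$, and the argument uses $\gamma\ge0$ there.
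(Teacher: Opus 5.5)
Your argument is correct and is essentially the paper's proof. The paper also writes the recommendation-$0$ slack as $\sum_\omega\mu(\omega)(-d_\omega-\gamma)-\sum_\omega\mu(\omega)z_\omega(-d_\omega-\gamma)$ and uses recommendation-$1$ obedience to bound the subtracted sum by $-2\gamma\sum_\omega\mu(\omega)z_\omega\le0$, which is your three-term decomposition in rearranged form; the abandoned attempt to add the two constraints directly is harmless but can simply be dropped.
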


\begin{proof}
Obedience of recommendation $1$ is
\[
\sum_\omega\mu(\omega)z_\omega(d_\omega-\gamma)\ge0,
\]
so
\[
\sum_\omega\mu(\omega)z_\omega(-d_\omega-\gamma)
\le-2\gamma\sum_\omega\mu(\omega)z_\omega\le0.
\]
Therefore
\begin{align*}
\sum_\omega\mu(\omega)(1-z_\omega)(-d_\omega-\gamma)
&=
\sum_\omega\mu(\omega)(-d_\omega-\gamma)\\
&\quad-
\sum_\omega\mu(\omega)z_\omega(-d_\omega-\gamma)
\ge0,
\end{align*}
where the final inequality uses \eqref{eq:prior-zero-robust}.
\end{proof}

\begin{proof}[Proof of Proposition~\ref{prop:binary-knapsack}]
For $\omega\in\Om_\gamma^+$, raising $z_\omega$ weakly relaxes
recommendation-$1$ obedience because $d_\omega-\gamma\ge0$ and strictly raises
the sender objective because $s_\omega^\theta>0$.  Hence every optimum sets
$z_\omega=1$ on $\Om_\gamma^+$.

For $\omega\in\Om_\gamma^-$, Lemma~\ref{lem:binary-box} permits the
parameterization
$z_\omega=\underline z_\rho+y_\omega$ with $0\le y_\omega\le\rho$.
Substitution into
\[
\sum_\omega\mu(\omega)z_\omega(d_\omega-\gamma)\ge0
\]
gives
\[
\sum_{\omega\in\Om_\gamma^-}
\mu(\omega)(\gamma-d_\omega)y_\omega
\le C_\gamma^\rho.
\]
If $C_\gamma^\rho<0$, even the lower-box choice $y=0$ is infeasible.  If it is
nonnegative, the sender objective differs from the objective in
\eqref{eq:binary-knapsack} only by a constant.  Recommendation zero is
obedient by Lemma~\ref{lem:binary-zero-redundant}.

For each bad state, the knapsack value and weight are
\[
v_\omega=\mu(\omega)s_\omega^\theta,\qquad
w_\omega=\mu(\omega)(\gamma-d_\omega)>0,
\]
so the value-to-weight ratio is
$s_\omega^\theta/(\gamma-d_\omega)$.  If a lower-ratio state receives positive
capacity while a higher-ratio state is not saturated, transferring a small
amount of weight to the latter strictly raises the objective.  Repeating this
exchange proves the greedy threshold structure and the existence of at most
one fractional state.  Partitioning is linear, sorting costs
$O(|\Om|\log|\Om|)$, and the fill is linear.
\end{proof}

\bibliographystyle{abbrv}
\bibliography{references}

\end{document}